\title{A Type Theory for Probabilistic and Bayesian Reasoning\footnote{This work was supported by ERC Advanced Grant QCLS: Quantum Computation, Logic and Security.}}
\author[1]{Robin Adams}
\author[1]{Bart Jacobs}
\affil[1]{Institute for Computing and Information Sciences,\\
  Radboud University, the Netherlands\\
  \texttt{\{r.adams,bart\}@cs.ru.nl}}
\authorrunning{R. Adams and B. Jacobs} 
\subjclass{F.4.1 [Mathematical Logic and Formal Languages]: Mathematical Logic --- Lambda calculus and related systems; G.3 [Probability and Statistics]: Probabilistic algorithms; F.3.1 [Logics and Meanings of Programs]: Specifying and Verifying and Reasoning about Programs}
\keywords{Probability theory, type theory, effect module, Bayesian reasoning}
\newcommand{\Rvar}{
 \text{(var)\xspace}
}
\newcommand{\Tvar}{
 \TBvar
 \DisplayProof
}
\newcommand{\TBvar}{
 \LeftLabel{\Rvar}
 \AxiomC{$ x : A \in \Gamma$}
 \UnaryInfC{$\Gamma \vdash x : A$}
}
\newcommand{\Rexch}{
 \text{(exch)\xspace}
}
\newcommand{\Texch}{
 \TBexch
 \DisplayProof
}
\newcommand{\TBexch}{
 \LeftLabel{\Rexch}
 \AxiomC{$\Gamma, x : A, y : B, \Delta \vdash \mathcal{J}$}
 \UnaryInfC{$\Gamma, y : B, x : A, \Delta \vdash \mathcal{J}$}
}
\newcommand{\Rref}{
 \text{(ref)\xspace}
}
\newcommand{\Tref}{
 \TBref
 \DisplayProof
}
\newcommand{\TBref}{
 \LeftLabel{\Rref}
 \AxiomC{$\Gamma \vdash t : A$}
 \UnaryInfC{$\Gamma \vdash t = t : A$}
}
\newcommand{\Rsym}{
 \text{(sym)\xspace}
}
\newcommand{\Tsym}{
 \TBsym
 \DisplayProof
}
\newcommand{\TBsym}{
 \LeftLabel{\Rsym}
 \AxiomC{$\Gamma \vdash s = t : A$}
 \UnaryInfC{$\Gamma \vdash t = s : A$}
}
\newcommand{\Rtrans}{
 \text{(trans)\xspace}
}
\newcommand{\Ttrans}{
 \TBtrans
 \DisplayProof
}
\newcommand{\TBtrans}{
 \LeftLabel{\Rtrans}
 \AxiomC{$\Gamma \vdash r = s : A$}
 \AxiomC{$\Gamma \vdash s = t : A$}
 \BinaryInfC{$\Gamma \vdash r = t : A$}
}
\newcommand{\Rmagic}{
 \text{(magic)\xspace}
}
\newcommand{\Tmagic}{
 \TBmagic
 \DisplayProof
}
\newcommand{\TBmagic}{
 \LeftLabel{\Rmagic}
 \AxiomC{$\Gamma \vdash t : 0$}
 \UnaryInfC{$\Gamma \vdash \magic{t} : A$}
}
\newcommand{\Retazero}{
 \text{($\eta 0$)\xspace}
}
\newcommand{\Tetazero}{
 \TBetazero
 \DisplayProof
}
\newcommand{\TBetazero}{
 \LeftLabel{\Retazero}
 \AxiomC{$\Gamma \vdash s : 0$}
 \AxiomC{$\Gamma \vdash t : A$}
 \BinaryInfC{$\Gamma \vdash \magic{s} = t : A$}
}
\newcommand{\Runit}{
 \text{(unit)\xspace}
}
\newcommand{\Tunit}{
 \TBunit
 \DisplayProof
}
\newcommand{\TBunit}{
 \LeftLabel{\Runit}
 \AxiomC{$$}
 \UnaryInfC{$\Gamma \vdash * : 1$}
}
\newcommand{\Retaone}{
 \text{($\eta 1$)\xspace}
}
\newcommand{\Tetaone}{
 \TBetaone
 \DisplayProof
}
\newcommand{\TBetaone}{
 \LeftLabel{\Retaone}
 \AxiomC{$\Gamma \vdash t : 1$}
 \UnaryInfC{$\Gamma \vdash t = * : 1$}
}
\newcommand{\Rinl}{
 \text{(inl)\xspace}
}
\newcommand{\Tinl}{
 \TBinl
 \DisplayProof
}
\newcommand{\TBinl}{
 \LeftLabel{\Rinl}
 \AxiomC{$\Gamma \vdash t : A$}
 \UnaryInfC{$\Gamma \vdash \inl{t} : A + B$}
}
\newcommand{\Rinleq}{
 \text{(inl-eq)\xspace}
}
\newcommand{\Tinleq}{
 \TBinleq
 \DisplayProof
}
\newcommand{\TBinleq}{
 \LeftLabel{\Rinleq}
 \AxiomC{$\Gamma \vdash t = t' : A$}
 \UnaryInfC{$\Gamma \vdash \inl{t} = \inl{t'} : A + B$}
}
\newcommand{\Rinr}{
 \text{(inr)\xspace}
}
\newcommand{\Tinr}{
 \TBinr
 \DisplayProof
}
\newcommand{\TBinr}{
 \LeftLabel{\Rinr}
 \AxiomC{$\Gamma \vdash t : B$}
 \UnaryInfC{$\Gamma \vdash \inr{t} : A + B$}
}
\newcommand{\Rinreq}{
 \text{(inr-eq)\xspace}
}
\newcommand{\Tinreq}{
 \TBinreq
 \DisplayProof
}
\newcommand{\TBinreq}{
 \LeftLabel{\Rinreq}
 \AxiomC{$\Gamma \vdash t = t' : B$}
 \UnaryInfC{$\Gamma \vdash \inr{t} = \inr{t'} : A + B$}
}
\newcommand{\Rcase}{
 \text{(case)\xspace}
}
\newcommand{\Tcase}{
 \TBcase
 \DisplayProof
}
\newcommand{\TTcase}{
 \begin{prooftree}
  \TBcase
 \end{prooftree}
}
\newcommand{\TBcase}{
 \LeftLabel{\Rcase}
 \AxiomC{$\Gamma \vdash r : A + B$}
 \AxiomC{$\Delta, x : A \vdash s : C$}
 \AxiomC{$\Delta, y : B \vdash t : C$}
 \TrinaryInfC{$\Gamma, \Delta \vdash \pcase{r}{x}{s}{y}{t} : C$}
}
\newcommand{\Rcaseeq}{
 \text{(case-eq)\xspace}
}
\newcommand{\Tcaseeq}{
 \TBcaseeq
 \DisplayProof
}
\newcommand{\TBcaseeq}{
 \LeftLabel{\Rcaseeq}
 \AxiomC{$\Gamma \vdash r = r' : A + B$}
 \AxiomC{$\Delta, x : A \vdash s = s' : C$}
 \AxiomC{$\Delta, y : B \vdash t = t': C$}
 \TrinaryInfC{$\Gamma, \Delta \vdash \pcase{r}{x}{s}{y}{t} = \pcase{r'}{x}{s'}{y}{t'} : C$}
}
\newcommand{\Rbetaplustwo}{
 \text{($\beta+_2$)\xspace}
}
\newcommand{\Tbetaplustwo}{
 \TBbetaplustwo
 \DisplayProof
}
\newcommand{\TBbetaplustwo}{
 \LeftLabel{\Rbetaplustwo}
 \AxiomC{$\Gamma \vdash r : B$}
 \AxiomC{$\Delta, x : A \vdash s : C$}
 \AxiomC{$\Delta, y : B \vdash t : C$}
 \TrinaryInfC{$\Gamma, \Delta \vdash \pcase{\inr{r}}{x}{s}{y}{t} = t[y:=r] : C$}
}
\newcommand{\Rbetaplusone}{
 \text{($\beta+_1$)\xspace}
}
\newcommand{\Tbetaplusone}{
 \TBbetaplusone
 \DisplayProof
}
\newcommand{\TBbetaplusone}{
 \LeftLabel{\Rbetaplusone}
 \AxiomC{$\Gamma \vdash r : A$}
 \AxiomC{$\Delta, x : A \vdash s : C$}
 \AxiomC{$\Delta, y : B \vdash t : C$}
 \TrinaryInfC{$\Gamma, \Delta \vdash \pcase{\inl{r}}{x}{s}{y}{t} = s[x:=r] : C$}
}
\newcommand{\Retaplus}{
 \text{($\eta+$)\xspace}
}
\newcommand{\Tetaplus}{
 \TBetaplus
 \DisplayProof
}
\newcommand{\TBetaplus}{
 \LeftLabel{\Retaplus}
 \AxiomC{$\Gamma \vdash t : A + B$}
 \UnaryInfC{$\Gamma \vdash t = \pcase{t}{x}{\inl{x}}{y}{\inr{y}} : A + B$}
}
\newcommand{\Rcasecase}{
 \text{(case-case)\xspace}
}
\newcommand{\Tcasecase}{
 \TBcasecase
 \DisplayProof
}
\newcommand{\TBcasecase}{
 \LeftLabel{\Rcasecase}
 \AxiomC{
  $\begin{array}{ccc}
    \Gamma \vdash r : A + B & \Delta, x : A \vdash s : C + D & \Delta, y : B \vdash s' : C + D\\
    \multicolumn{3}{c}{\Theta, z : C \vdash t : E \qquad \Theta, w : D \vdash t' : E}
   \end{array}$
 }
 \UnaryInfC{$\Gamma, \Delta, \Theta \vdash
\begin{array}[t]{l}
\case r \of \inl{x} \mapsto \pcase{s}{z}{t}{w}{t'} \mid \\
\qquad \inr{y} \mapsto \pcase{s'}{z}{t}{w}{t'} \\
= \case (\pcase{r}{x}{s}{y}{s'}) \\
\qquad \of \inl{z} \mapsto t \mid \inr{w} \mapsto t' : E
\end{array}$}
}
\newcommand{\Rcasepair}{
 \text{(case-$\sotimes$)\xspace}
}
\newcommand{\Tcasepair}{
 \TBcasepair
 \DisplayProof
}
\newcommand{\TBcasepair}{
 \LeftLabel{\Rcasepair}
 \AxiomC{$\Gamma \vdash r : A + B$}
 \AxiomC{$\Delta, x : A \vdash s : C$}
 \AxiomC{$\Delta, y : A \vdash s' : C$}
 \AxiomC{$\Theta \vdash t : D$}
 \QuaternaryInfC{$\Gamma, \Delta, \Theta \vdash 
\begin{array}[t]{l}
(\pcase{r}{x}{s}{y}{s'}) \sotimes t = \\
\pcase{r}{x}{s \sotimes t}{y}{s' \sotimes t} : D
\end{array}$}
}
\newcommand{\Rletcase}{
 \text{(let-case)\xspace}
}
\newcommand{\Tletcase}{
 \TBletcase
 \DisplayProof
}
\newcommand{\TBletcase}{
 \LeftLabel{\Rletcase}
 \AxiomC{
  $\begin{array}{cc}
    \Gamma \vdash r : A + B & \Delta, z : A \vdash s : C \otimes D\\
    \Delta, w : B \vdash s' : C \otimes D & \Theta, x : C, y : D \vdash t : E
   \end{array}$
 }
 \UnaryInfC{$\Gamma, \Delta, \Theta \vdash 
\begin{array}[t]{l}
\plet{x}{y}{\pcase{r}{z}{s}{w}{s'}}{t} = \\
\pcase{r}{z}{\plet{x}{y}{s}{t}}{w}{\plet{x}{y}{s'}{t}} : E
\end{array}$}
}
\newcommand{\Rinlr}{
 \text{(inlr)\xspace}
}
\newcommand{\Tinlr}{
 \TBinlr
 \DisplayProof
}
\newcommand{\TTinlr}{
 \begin{prooftree}
  \TBinlr
 \end{prooftree}
}
\newcommand{\TBinlr}{
 \LeftLabel{\Rinlr}
 \AxiomC{$\Gamma \vdash s : A + 1$}
 \AxiomC{$\Gamma \vdash t : B + 1$}
 \AxiomC{$\Gamma \vdash s \downarrow = t \uparrow : \mathbf{2}$}
 \TrinaryInfC{$\Gamma \vdash \inlr{s}{t} : A + B$}
}
\newcommand{\Rinlreq}{
 \text{(inlr-eq)\xspace}
}
\newcommand{\Tinlreq}{
 \TBinlreq
 \DisplayProof
}
\newcommand{\TBinlreq}{
 \LeftLabel{\Rinlreq}
 \AxiomC{$\Gamma \vdash s = s' : A + 1$}
 \AxiomC{$\Gamma \vdash t = t' : B + 1$}
 \AxiomC{$\Gamma \vdash s \downarrow = t \uparrow : \mathbf{2}$}
 \TrinaryInfC{$\Gamma \vdash \inlr{s}{t} = \inlr{s'}{t'} : A + B$}
}
\newcommand{\Rbetainlrone}{
 \text{($\beta$inlr$_1$)\xspace}
}
\newcommand{\Tbetainlrone}{
 \TBbetainlrone
 \DisplayProof
}
\newcommand{\TBbetainlrone}{
 \LeftLabel{\Rbetainlrone}
 \AxiomC{$\Gamma \vdash s : A + 1$}
 \AxiomC{$\Gamma \vdash t : B + 1$}
 \AxiomC{$\Gamma \vdash s \downarrow = t \uparrow : \mathbf{2}$}
 \TrinaryInfC{$\Gamma \vdash \rhd_1(\inlr{s}{t}) = s : A + 1$}
}
\newcommand{\Rbetainlrtwo}{
 \text{($\beta$inlr$_1$)\xspace}
}
\newcommand{\Tbetainlrtwo}{
 \TBbetainlrtwo
 \DisplayProof
}
\newcommand{\TBbetainlrtwo}{
 \LeftLabel{\Rbetainlrtwo}
 \AxiomC{$\Gamma \vdash s : A + 1$}
 \AxiomC{$\Gamma \vdash t : B + 1$}
 \AxiomC{$\Gamma \vdash s \downarrow = t \uparrow : \mathbf{2}$}
 \TrinaryInfC{$\Gamma \vdash \rhd_2(\inlr{s}{t}) = t : B + 1$}
}
\newcommand{\Retainlr}{
 \text{($\eta$inlr)\xspace}
}
\newcommand{\Tetainlr}{
 \TBetainlr
 \DisplayProof
}
\newcommand{\TBetainlr}{
 \LeftLabel{\Retainlr}
 \AxiomC{$\Gamma \vdash t : A + B$}
 \UnaryInfC{$\Gamma \vdash t = \inlr{\rhd_1(t)}{\rhd_2(t)} : A + B$}
}
\newcommand{\Rleft}{
 \text{(left)\xspace}
}
\newcommand{\Tleft}{
 \TBleft
 \DisplayProof
}
\newcommand{\TBleft}{
 \LeftLabel{\Rleft}
 \AxiomC{$\Gamma \vdash t : A + B$}
 \AxiomC{$\Gamma \vdash \inlprop{t} = \top : \mathbf{2}$}
 \BinaryInfC{$\Gamma \vdash \lft{t} : A$}
}
\newcommand{\Rlefteq}{
 \text{(left-eq)\xspace}
}
\newcommand{\Tlefteq}{
 \TBlefteq
 \DisplayProof
}
\newcommand{\TBlefteq}{
 \LeftLabel{\Rlefteq}
 \AxiomC{$\Gamma \vdash t = t' : A + B$}
 \AxiomC{$\Gamma \vdash \inlprop{t} = \top : \mathbf{2}$}
 \BinaryInfC{$\Gamma \vdash \lft{t} = \lft{t'} : A$}
}
\newcommand{\Rbetaleft}{
 \text{($\beta$left)\xspace}
}
\newcommand{\Tbetaleft}{
 \TBbetaleft
 \DisplayProof
}
\newcommand{\TBbetaleft}{
 \LeftLabel{\Rbetaleft}
 \AxiomC{$\Gamma \vdash t : A + B$}
 \AxiomC{$\Gamma \vdash \inlprop{t} = \top : \mathbf{2}$}
 \BinaryInfC{$\Gamma \vdash \inl{\lft{t}} = t : A + B$}
}
\newcommand{\Retaleft}{
 \text{($\eta$left)\xspace}
}
\newcommand{\Tetaleft}{
 \TBetaleft
 \DisplayProof
}
\newcommand{\TBetaleft}{
 \LeftLabel{\Retaleft}
 \AxiomC{$\Gamma \vdash t : A$}
 \UnaryInfC{$\Gamma \vdash \lft{\inl{t}} = t : A$}
}
\newcommand{\RJMprime}{
 \text{(JM)\xspace}
}
\newcommand{\TTJMprime}{
 \begin{prooftree}
  \TBJMprime
 \end{prooftree}
}
\newcommand{\TBJMprime}{
 \LeftLabel{\RJMprime}
 \AxiomC{
  $\begin{array}{cc}
    \Gamma \vdash s : (A + A) + 1 & \Gamma \vdash t : (A + A) + 1\\
    \Gamma \vdash s \goesto \rhd_1 = t \goesto \rhd_1 : A + 1 & \Gamma \vdash s \goesto \rhd_2 = t \goesto \rhd_2 : A + 1
   \end{array}$
 }
 \UnaryInfC{$\Gamma \vdash s = t : (A + A) + 1$}
}
\newcommand{\Rpair}{
 \text{($\sotimes$)\xspace}
}
\newcommand{\Tpair}{
 \TBpair
 \DisplayProof
}
\newcommand{\TBpair}{
 \LeftLabel{\Rpair}
 \AxiomC{$\Gamma \vdash s : A$}
 \AxiomC{$\Delta \vdash t : B$}
 \BinaryInfC{$\Gamma, \Delta \vdash s \sotimes t : A \otimes B$}
}
\newcommand{\Rpaireq}{
 \text{(paireq)\xspace}
}
\newcommand{\Tpaireq}{
 \TBpaireq
 \DisplayProof
}
\newcommand{\TBpaireq}{
 \LeftLabel{\Rpaireq}
 \AxiomC{$\Gamma \vdash s = s' : A$}
 \AxiomC{$\Delta \vdash t = t': B$}
 \BinaryInfC{$\Gamma, \Delta \vdash s \sotimes t = s' \sotimes t' : A \otimes B$}
}
\newcommand{\Rlett}{
 \text{(lett)\xspace}
}
\newcommand{\Tlett}{
 \TBlett
 \DisplayProof
}
\newcommand{\TTlett}{
 \begin{prooftree}
  \TBlett
 \end{prooftree}
}
\newcommand{\TBlett}{
 \LeftLabel{\Rlett}
 \AxiomC{$\Gamma \vdash s : A \otimes B$}
 \AxiomC{$\Delta, x : A, y : B \vdash t : C$}
 \BinaryInfC{$\Gamma, \Delta \vdash \plet{x}{y}{s}{t} : C$}
}
\newcommand{\Rleteq}{
 \text{(leteq)\xspace}
}
\newcommand{\Tleteq}{
 \TBleteq
 \DisplayProof
}
\newcommand{\TBleteq}{
 \LeftLabel{\Rleteq}
 \AxiomC{$\Gamma \vdash s = s' : A \otimes B$}
 \AxiomC{$\Delta, x : A, y : B \vdash t = t' : C$}
 \BinaryInfC{$\Gamma, \Delta \vdash (\plet{x}{y}{s}{t}) = (\plet{x}{y}{s'}{t'}) : C$}
}
\newcommand{\Rbeta}{
 \text{($\beta \otimes$)\xspace}
}
\newcommand{\Tbeta}{
 \TBbeta
 \DisplayProof
}
\newcommand{\TBbeta}{
 \LeftLabel{\Rbeta}
 \AxiomC{$\Gamma \vdash r : A$}
 \AxiomC{$\Delta \vdash s : B$}
 \AxiomC{$\Theta, x : A, y : B \vdash t : C$}
 \TrinaryInfC{$\Gamma, \Delta, \Theta \vdash (\plet{x}{y}{r \sotimes s}{t}) = t[x:=r,y:=s] : C$}
}
\newcommand{\Reta}{
 \text{($\eta \otimes$)\xspace}
}
\newcommand{\Teta}{
 \TBeta
 \DisplayProof
}
\newcommand{\TBeta}{
 \LeftLabel{\Reta}
 \AxiomC{$\Gamma \vdash t : A \otimes B$}
 \UnaryInfC{$\Gamma \vdash t = (\plet{x}{y}{t}{x \sotimes y}) : A \otimes B$}
}
\newcommand{\Rletlet}{
 \text{(let-let)\xspace}
}
\newcommand{\Tletlet}{
 \TBletlet
 \DisplayProof
}
\newcommand{\TBletlet}{
 \LeftLabel{\Rletlet}
 \AxiomC{$\Gamma \vdash r : A \otimes B$}
 \AxiomC{$\Delta, x : A, y : B \vdash s : C \otimes D$}
 \AxiomC{$\Theta, z : C, w : D \vdash t : E$}
 \TrinaryInfC{$\Gamma, \Delta, \Theta \vdash
\begin{array}[t]{l}
\plet{x}{y}{r}{(\plet{z}{w}{s}{t})} \\
= \plet{z}{w}{(\plet{x}{y}{r}{s})}{t}
: E
\end{array}$}
}
\newcommand{\Rletpair}{
 \text{(let-$\sotimes$)\xspace}
}
\newcommand{\Tletpair}{
 \TBletpair
 \DisplayProof
}
\newcommand{\TBletpair}{
 \LeftLabel{\Rletpair}
 \AxiomC{$\Gamma \vdash r : A \otimes B$}
 \AxiomC{$\Delta, x : A, y : B \vdash s : C$}
 \AxiomC{$\Theta \vdash t : D$}
 \TrinaryInfC{$\Gamma, \Delta, \Theta \vdash
\plet{x}{y}{r}{(s \sotimes t)} = (\plet{x}{y}{r}{s}) \sotimes t : D$}
}
\newcommand{\RleqI}{
 \text{(order)\xspace}
}
\newcommand{\TTleqI}{
 \begin{prooftree}
  \TBleqI
 \end{prooftree}
}
\newcommand{\TBleqI}{
 \LeftLabel{\RleqI}
 \AxiomC{
  $\begin{array}{cc}
    \Gamma \vdash s : A + 1 & \Gamma \vdash t : A + 1\\
    \Gamma \vdash b : (A + A) + 1 & \Gamma \vdash \doo{x}{b}{\rhd_1(x)} = s : A + 1\\
    \multicolumn{2}{c}{\Gamma \vdash \doo{x}{b}{\return \nabla(x)} = t : A + 1}
   \end{array}$
 }
 \UnaryInfC{$\Gamma \vdash s \leq t : A + 1$}
}
\newcommand{\Rinstr}{
 \text{(instr)\xspace}
}
\newcommand{\Tinstr}{
 \TBinstr
 \DisplayProof
}
\newcommand{\TBinstr}{
 \LeftLabel{\Rinstr}
 \AxiomC{$x : A \vdash t : \mathbf{n}$}
 \AxiomC{$\Gamma \vdash s : A$}
 \BinaryInfC{$\Gamma \vdash \instr_{\lambda x t}(s) : n \cdot A$}
}
\newcommand{\Rnablainstr}{
 \text{($\nabla$-instr)\xspace}
}
\newcommand{\Tnablainstr}{
 \TBnablainstr
 \DisplayProof
}
\newcommand{\TBnablainstr}{
 \LeftLabel{\Rnablainstr}
 \AxiomC{$x : A \vdash t : \mathbf{n}$}
 \AxiomC{$\Gamma \vdash s : A$}
 \BinaryInfC{$\Gamma \vdash \nabla(\instr_{\lambda x t}(s)) = s : A$}
}
\newcommand{\Rinstrtest}{
 \text{(instr-test)\xspace}
}
\newcommand{\Tinstrtest}{
 \TBinstrtest
 \DisplayProof
}
\newcommand{\TBinstrtest}{
 \LeftLabel{\Rinstrtest}
 \AxiomC{$x : A \vdash t : \mathbf{n}$}
 \AxiomC{$\Gamma \vdash s : A$}
 \BinaryInfC{$\Gamma \vdash \case_{i=1}^n \instr_{\lambda x t}(s) \of \nin{i}{n}{\_} \mapsto i = t[x:=s] : \mathbf{n}$}
}
\newcommand{\Retainstr}{
 \text{($\eta$instr)\xspace}
}
\newcommand{\Tetainstr}{
 \TBetainstr
 \DisplayProof
}
\newcommand{\TBetainstr}{
 \LeftLabel{\Retainstr}
 \AxiomC{$x : A \vdash r : n \cdot A$}
 \AxiomC{$x : A \vdash \nabla(r) = x : A$}
 \AxiomC{$\Gamma \vdash s : A$}
 \TrinaryInfC{$\Gamma \vdash \instr_{\lambda x. \case_{i=1}^n r \of \nin{i}{n}{\_} \mapsto i}(s) = r[x:=s] : n \cdot A$}
}
\newcommand{\Rinstreq}{
 \text{(instr-eq)\xspace}
}
\newcommand{\Tinstreq}{
 \TBinstreq
 \DisplayProof
}
\newcommand{\TBinstreq}{
 \LeftLabel{\Rinstreq}
 \AxiomC{$x : A \vdash t = t' : \mathbf{n}$}
 \AxiomC{$\Gamma \vdash s = s' : A$}
 \BinaryInfC{$\Gamma \vdash \instr_{\lambda x t}(s) = \instr_{\lambda x t'}(s') : n \cdot A$}
}
\newcommand{\Rcomm}{
 \text{(comm)\xspace}
}
\newcommand{\TTcomm}{
 \begin{prooftree}
  \TBcomm
 \end{prooftree}
}
\newcommand{\TBcomm}{
 \LeftLabel{\Rcomm}
 \AxiomC{$x : A \vdash p : \mathbf{2}$}
 \AxiomC{$x : A \vdash q : \mathbf{2}$}
 \AxiomC{$\Gamma \vdash t : A$}
 \TrinaryInfC{$\Gamma \vdash \begin{array}[t]{l}
\assert_{\lambda x p}(t) \goesto \assert_{\lambda x q} = \assert_{\lambda x q}(t) \goesto \assert_{\lambda x p} : A + 1
\end{array}$}
}
\newcommand{\Roneovern}{
 \text{($1 / n$)\xspace}
}
\newcommand{\Toneovern}{
 \TBoneovern
 \DisplayProof
}
\newcommand{\TBoneovern}{
 \LeftLabel{\Roneovern}
 \AxiomC{$$}
 \UnaryInfC{$\Gamma \vdash 1 / n : \mathbf{2}$}
}
\newcommand{\Rntimesoneovern}{
 \text{($n \cdot 1 / n$)\xspace}
}
\newcommand{\Tntimesoneovern}{
 \TBntimesoneovern
 \DisplayProof
}
\newcommand{\TBntimesoneovern}{
 \LeftLabel{\Rntimesoneovern}
 \AxiomC{$$}
 \UnaryInfC{$\Gamma \vdash n \cdot 1 / n = \top : \mathbf{2}$}
}
\newcommand{\Rdivide}{
 \text{(divide)\xspace}
}
\newcommand{\Tdivide}{
 \TBdivide
 \DisplayProof
}
\newcommand{\TBdivide}{
 \LeftLabel{\Rdivide}
 \AxiomC{$\Gamma \vdash n \cdot t = \top : \mathbf{2}$}
 \UnaryInfC{$\Gamma \vdash t = 1 / n : \mathbf{2}$}
}
\newcommand{\Rnorm}{
 \text{(nrm)\xspace}
}
\newcommand{\Tnorm}{
 \TBnorm
 \DisplayProof
}
\newcommand{\TBnorm}{
 \LeftLabel{\Rnorm}
 \AxiomC{$\vdash t : A + 1$}
 \AxiomC{$\vdash 1 / n \leq t : \mathbf{2}$}
 \BinaryInfC{$\Gamma \vdash \norm{t} : A$}
}
\newcommand{\Rbetanorm}{
 \text{($\beta$nrm)\xspace}
}
\newcommand{\Tbetanorm}{
 \TBbetanorm
 \DisplayProof
}
\newcommand{\TBbetanorm}{
 \LeftLabel{\Rbetanorm}
 \AxiomC{$\vdash t : A + 1$}
 \AxiomC{$\vdash 1 / n \leq t \downarrow : \mathbf{2}$}
 \BinaryInfC{$\Gamma \vdash t = \doo{\_}{t}{\return{\norm{t}}} : A + 1$}
}
\newcommand{\Retanorm}{
 \text{($\eta$nrm)\xspace}
}
\newcommand{\Tetanorm}{
 \TBetanorm
 \DisplayProof
}
\newcommand{\TBetanorm}{
 \LeftLabel{\Retanorm}
 \AxiomC{$\vdash t : A + 1$}
 \AxiomC{$\vdash 1 / n \leq t \downarrow : \mathbf{2}$}
 \AxiomC{$\vdash \rho : A$}
 \AxiomC{$\vdash t = \doo{\_}{t}{\return{\rho}} : A + 1$}
 \QuaternaryInfC{$\Gamma \vdash \rho = \norm{t} : A$}
}
\newcommand{\Rrhdoneboundmn}{
 \text{($\rhd_1-b_{mn}$)\xspace}
}
\newcommand{\Trhdoneboundmn}{
 \TBrhdoneboundmn
 \DisplayProof
}
\newcommand{\TBrhdoneboundmn}{
 \LeftLabel{\Rrhdoneboundmn}
 \AxiomC{$$}
 \RightLabel{$\left(1 \leq m < n\right)$}
 \UnaryInfC{$\Gamma \vdash \doo{x}{b_{mn}}{\rhd_1(x)} = m \cdot 1 / n : \mathbf{2}$}
}
\newcommand{\Rrhdtwoboundmnprime}{
 \text{($\rhd_2-b_{mn}$)\xspace}
}
\newcommand{\Trhdtwoboundmnprime}{
 \TBrhdtwoboundmnprime
 \DisplayProof
}
\newcommand{\TBrhdtwoboundmnprime}{
 \LeftLabel{\Rrhdtwoboundmnprime}
 \AxiomC{$$}
 \RightLabel{$\left(1 \leq m < n\right)$}
 \UnaryInfC{$\Gamma \vdash \doo{x}{b_{mn}}{\return \nabla(x)} = 1 / n : \mathbf{2}$}
}
\newcommand{\Rboundmn}{
 \text{($b_{mn}$)\xspace}
}
\newcommand{\Tboundmn}{
 \TBboundmn
 \DisplayProof
}
\newcommand{\TBboundmn}{
 \LeftLabel{\Rboundmn}
 \AxiomC{$$}
 \RightLabel{$\left(1 \leq m < n\right)$}
 \UnaryInfC{$\Gamma \vdash b_{mn} : \mathbf{3}$}
}
\newcommand{\Roveeprime}{
 \text{($\ovee$)\xspace}
}
\newcommand{\TToveeprime}{
 \begin{prooftree}
  \TBoveeprime
 \end{prooftree}
}
\newcommand{\TBoveeprime}{
 \LeftLabel{\Roveeprime}
 \AxiomC{
  $\begin{array}{cc}
    \Gamma \vdash s : A + 1 & \Gamma \vdash t : A + 1\\
    \Gamma \vdash b : (A + A) + 1 & \Gamma \vdash \doo{x}{b}{\rhd_1(x)} = s : A + 1\\
    \multicolumn{2}{c}{\Gamma \vdash \doo{x}{b}{\rhd_2(x)} = t : A + 1}
   \end{array}$
 }
 \UnaryInfC{$\Gamma \vdash s \ovee t : A + 1$}
}
\newcommand{\Roveedef}{
 \text{($\ovee$-def)\xspace}
}
\newcommand{\Sets}{\mathbf{Sets}}
\newcommand{\Kl}{\mathcal{K}{\kern-.2ex}\ell}
\newcommand{\Dst}{\mathcal{D}}
\newcommand{\Giry}{\mathcal{G}}
\newcommand{\assert}{\mathsf{assert}}
\newcommand{\case}{\mathsf{case}\ }
\newcommand{\elsen}{\ \mathsf{else}\ }
\newcommand{\idmap}[1][]{\ensuremath{\mathrm{id}_{#1}}}
\newcommand{\cond}[3]{\ifn {#1} \thenn {#2} \elsen {#3}}
\newcommand{\ifn}{\mathsf{if}\ }
\newcommand{\inln}{\mathsf{inl}}
\newcommand{\inlprop}[1]{\mathsf{inl?} \left( {#1} \right)}
\newcommand{\inlrn}{\mathsf{inlr}}
\newcommand{\inl}[1]{\inln \left( {#1} \right)}
\newcommand{\inn}{\ \mathsf{in}\ }
\newcommand{\inrn}{\mathsf{inr}}
\newcommand{\inrprop}[1]{\mathsf{inr?} \left( {#1} \right)}
\newcommand{\inr}[1]{\inrn \left( {#1} \right)}
\newcommand{\lett}{\mathsf{let}\ }
\newcommand{\lftn}{\mathsf{left}}
\newcommand{\lft}[1]{\lftn \left( {#1} \right)}
\newcommand{\instr}{\mathsf{instr}}
\newcommand{\measure}{\mathbf{TODO}}
\newcommand{\meas}{\mathsf{measure}}
\newcommand{\norm}[1]{\ensuremath{\mathsf{nrm} \left( {#1} \right)}}
\newcommand{\of}{\ \mathsf{of}\ }
\newcommand{\pcase}[5]{\case {#1} \of \inl{#2} \mapsto {#3}%
                          \mid \inr{#4} \mapsto {#5}}
\newcommand{\rgtn}{\mathsf{right}}
\newcommand{\rgt}[1]{\rgtn \left( {#1} \right)}
\newcommand{\swapper}[1]{\ensuremath{\mathsf{swap} \left( {#1} \right)}}
\newcommand{\thenn}{\ \mathsf{then}\ }
\newcommand{\nin}[3]{\mathsf{in}_{#1}^{#2} \left( {#3} \right)}
\newcommand{\return}[1]{\mathsf{return}\ {#1}}
\newcommand{\fail}{\mathsf{fail}}
\newcommand{\doo}[3]{\mathsf{do}\ {#1} \leftarrow {#2} ; {#3}}
\newcommand{\ind}[1]{\ensuremath{\mathsf{index} \left( {#1} \right)}}
\newcommand{\intest}[2]{\mathsf{in}_{#1} ? \left( {#2} \right)}
\newcommand{\condn}[2]{\mathsf{cond} \left( {#1} , {#2} \right)}
\newcommand{\bang}{\mathord{!}}
\newcommand{\inlr}[2]{\ensuremath{\text{\guillemotleft} {#1} , {#2} \text{\guillemotright}}}
\newcommand{\eqdef}{\mathrel{\smash{\stackrel{\text{def}}{=}}}}
\newcommand{\fromInit}{\,\mathop{\text{\rm \textexclamdown}}}
\newcommand{\magic}[1]{\fromInit{#1}}
\newcommand{\sotimes}{\mathrel{\raisebox{.05pc}{$\scriptstyle\otimes$}}}
\newcommand{\plet}[4]{\lett {#1} \sotimes {#2} = {#3} \inn {#4}} 
\newcommand{\slet}[3]{\lett {#1} = {#2} \inn {#3}}  
\newcommand{\ifte}[3]{\ifn {#1} \thenn {#2} \elsen {#3}}  
\newcommand{\goesto}{\ensuremath{\gg\!\!=}}
\newcommand{\andthen}{\mathrel{\&}}
\renewcommand{\ker}[1]{{#1}\!\uparrow}
\newcommand{\dom}[1]{{#1}\!\downarrow}
\newcommand{\after}{\circ}
\newcommand{\supp}{\mathop{\mathrm{supp}}}
\newcommand{\brackets}[1]{\left[ \! \left[ {#1} \right] \! \right]}
\newcommand{\Prob}[1]{\mathrm{Pr} \left( {#1} \right)}
\newcommand{\COMET}{\mathbf{COMET}}
\theoremstyle{plain}
\newtheorem{proposition}[theorem]{Proposition}
\begin{document}

\maketitle

\begin{abstract}
This paper introduces a novel type theory and logic for probabilistic
reasoning. Its logic is quantitative, with fuzzy predicates. It
includes normalisation and conditioning of states. This conditioning
uses a key aspect that distinguishes our probabilistic type theory
from quantum type theory, namely the bijective correspondence between
predicates and side-effect free actions (called instrument, or assert,
maps). The paper shows how suitable computation rules can be derived
from this predicate-action correspondence, and uses these rules for
calculating conditional probabilities in two well-known examples of
Bayesian reasoning in (graphical) models. Our type theory may thus
form the basis for a mechanisation of Bayesian inference.
\end{abstract}

\section{Introduction}

A probabilistic program is understood (semantically) as a stochastic
process.  A key feature of probabilistic programs as studied in the
1980s and 1990s is the presence of probabilistic choice, for instance
in the form of a weighted sum $x +_{r} y$, where the number $r \in
[0,1]$ determines the ratio of the contributions of $x$ and $y$ to the
result. This can be expressed explicitly as a convex sum $r\cdot x +
(1-r)\cdot y$.  Some of the relevant sources
are~\cite{Kozen81,Kozen85}, and~\cite{JonesP89},
and~\cite{MorganMS96}, and also~\cite{TixKP05} for the combination of
probability and non-determinism. In the language of category theory, a
probabilistic program is a map in the Kleisli category of the
distribution monad $\Dst$ (in the discrete case) or of the Giry monad
$\Giry$ (in the continuous case).

In recent years, with the establishement of Bayesian machine learning
as an important area of computer science, the meaning of probabilistic
programming shifted towards conditional inference. The key feature is
no longer probabilistic choice, but normalisation of distributions
(states), see \textit{e.g.}~\cite{Borgstroem2011}. Interestingly, this
can be done in basically the same underlying models, where a program
still produces a distribution --- discrete or continuous --- over its
output.

This paper contributes to this latest line of work by formulating a
novel type theory for probabilistic and Bayesian reasoning. We list
the key features of our type theory.
\begin{itemize}
\item It includes a logic, which is quantitative in nature. This means
  that its predicates are best understood as `fuzzy' predicates,
  taking values in the unit interval $[0,1]$ of probabilities, instead
  of in the two-element set $\{0,1\}$ of Booleans.

\item As a result, the predicates of this logic do not form Boolean
  algebras, but effect modules (see \emph{e.g.}~\cite{Jacobs15d}). The
  double negation rule does hold, but the sum $\ovee$ is a partial
  operation. Moreover, there is a scalar multiplication $s\cdot p$,
  for a scalar $s$ and a predicate $p$, which produces a scaled
  version of the predicate $p$.

\item This logic is a special case of a more general quantum type
  theory~\cite{Adams2014}. What we describe here is the probabilistic
  subcase of this quantum type theory, which is characterised by a
  bijective correspondence between predicates and side-effect free
  assert maps (see below for details).

\item The type theory includes normalisation (and also probabilistic
  choice). Abstractly, normalisation means that each non-zero
  `substate' in the type theory can be turned into a proper state
  (like in~\cite{JacobsWW15a}). This involves, for instance, turning a
  \emph{sub}distribution $\sum_{i}r_{i}x_{i}$, where the probabilities
  $r_{i}\in [0,1]$ satisfy $0 < r \leq 1$ for $r \eqdef
  \sum_{i}r_{i}$, into a proper distribution
  $\sum_{i}\frac{r_i}{r}x_{i}$ --- where, by construction,
  $\sum_{i}\frac{r_i}{r} = 1$.

\item The type theory also includes conditioning, via the combination
  of assert maps and normalisation (from the previous two points).
  Hence, we can calculate conditional probabilities inside the type
  theory, via appropriate (derived) computation rules. In contrast, in
  the language of~\cite{Borgstroem2011}, probabilistic (graphical)
  models can be formulated, but actual computations are done in the
  underlying mathematical models. Since these computation are done
  inside our calculus, our type theory can form the basis for
  mechanisation.
\end{itemize}

The type theory that we present is based on a new categorical
foundation for quantum logic, called effectus theory,
see~\cite{Jacobs15d,JacobsWW15a,Cho15a,ChoJWW15}\footnote{A general
  introduction to effectus theory~\cite{Cho} will soon be
  available.}. This theory involves a basic duality between states and
effects (predicates), which is implicitly also present in our type
theory. A subclass of `commutative' effectuses can be defined, forming
models for probabilistic computation and logic. Our type theory
corresponds to these commutative effectuses, and will thus be called
$\COMET$, as abbreviation of COMmutative Effectus Theory. This
$\COMET$ can be seen as an internal language for commutative
effectuses.

A key feature of quantum theory is that observations have a
side-effect: measuring a system disturbs it at the quantum level.  In
order to perform such measurements, each quantum predicate comes with
an associated `measurement' instrument operation which acts on the
underlying space. Probabilistic theories also have such instruments
\ldots but they are side-effect free!

The idea that predicates come with an associated action is familiar in
mathematics. For instance, in a Hilbert space $\mathscr{H}$, a closed
subspace $P \subseteq \mathscr{H}$ (a predicate) can equivalently be
described as a linear idempotent operator $p\colon \mathscr{H}
\rightarrow \mathscr{H}$ (an action) that has $P$ has image. We sketch
how these predicate-action correspondences also exist in the models
that underly our type theory.

First, in the category $\Sets$ of sets and functions, a predicate $p$
on a set $X$ can be identified with a subset of $X$, but also with a
`characteristic' map $p\colon X \rightarrow 1+1$, where $1+1 = 2$ is
the two-element set. We prefer the latter view. Such a predicate
corresponds bijectively to a `side-effect free' instrument
$\instr_{p} \colon X \rightarrow X+X$, namely to:
$$\begin{array}{rcl}
\instr_{p}(x)
& = &
\left\{\begin{array}{ll}
\inl{x} \mbox{\quad} & \mbox{if } p(x) = 1 \\
\inr{x} & \mbox{if } p(x) = 0 \\
\end{array}\right.
\end{array}$$

\noindent Here we write $X+X$ for the sum (coproduct), with left and
right coprojections (also called injections) $\inl{\_}, \inr{\_}
\colon X \rightarrow X+X$. Notice that this instrument merely makes a
left-right distinction, as described by the predicate, but does not
change the state $x$. It is called side-effect free because it
satisfies $\nabla \after \instr_{p} = \idmap$, where $\nabla =
[\idmap,\idmap] \colon X+X \rightarrow X$ is the codiagonal. It easy
to see that each map $f\colon X \rightarrow X+X$ with $\nabla \after f
= \idmap$ corresponds to a predicate $p\colon X \rightarrow 1+1$,
namely to $p = (\bang+\bang) \after f$, where $\bang \colon X
\rightarrow 1$ is the unique map to the final (singleton, unit) set
$1$.

Our next example describes the same predicate-action correspondence in
a probabilistic setting. It assumes familiarity with the discrete
distribution monad $\Dst$ --- see~\cite{Jacobs15d} for details, and
also Subsection~\ref{section:dpc} --- and with its Kleisli category
$\Kl(\Dst)$. A predicate map $p\colon X \rightarrow 1+1$ in
$\Kl(\Dst)$ is (essentially) a fuzzy predicate $p\colon X \rightarrow
[0,1]$, since $\Dst(1+1) = \Dst(2) \cong [0,1]$. There is also an
associated instrument map $\instr_{p} \colon X \rightarrow X+X$ in
$\Kl(\Dst)$, given by the function $\instr_{p} \colon X \rightarrow
\Dst(X+X)$ that sends an element $x\in X$ to the distribution
(formal convex combination):
$$\begin{array}{rcl}
\instr_{p}(x)
& = &
p(x)\cdot \inl{x} + (1-p(x))\cdot \inr{x}.
\end{array}$$

\noindent This instrument makes a left-right distinction, with the
weight of the distinction given by the fuzzy predicate $p$. Again we
have $\nabla \after \instr_{p} = \idmap$, in the Kleisli category,
since the instrument map does not change the state. It is easy to see
that we get a bijective correspondence.

These instrument maps $\instr_{p} \colon X \rightarrow X+X$ can in
fact be simplified further into what we call assert maps. The
(partial) map $\assert_{p} \colon X \rightarrow X+1$ can be defined as
$\assert_{p} = (\idmap+\bang) \after \instr_{p}$. We say that such a
map is side-effect free if there is an inequality $\assert_{p} \leq
\inl{\_}$, for a suitable order on the homset of partial maps $X
\rightarrow X+1$. Given assert maps for $p$, and for its
orthosupplement (negation) $p^{\bot}$, we can define the associated
instrument via a partial pairing operation as $\instr_{p} =
\inlr{\assert_p}{\assert_{p^\bot}}$, see below for details.

The key aspect of a probabilistic model, in contrast to a quantum model,
is that there is a bijective correspondence between:
\begin{itemize}
\item predicates $X \rightarrow 1+1$
\item side-effect free instruments $X \rightarrow X+X$ --- or
  equivalently, side-effect free assert maps $X \rightarrow X+1$.
\end{itemize}

\noindent We shall define conditioning via normalisation after assert.
More specifically, for a state $\omega\colon X$ and a predicate $p$ on
$X$ we define the conditional state $\omega|_{p} = \condn{\omega}{p}$
as:
$$\begin{array}{rcl}
\condn{\omega}{p}
& = &
\norm{\assert_{p}(\omega)},
\end{array}$$

\noindent where $\norm{-}$ describes normalisation (of substates to
states). This description occurs, in semantical form
in~\cite{JacobsWW15a}. Here we formalise it at a type-theoretic level
and derive suitable computation rules from it that allow us to do
(exact) conditional inference.

The paper is organised as follows. Section~\ref{section:overview}
provides an overview of the type theory, with some key results,
without giving all the details and
proofs. Section~\ref{section:examples} takes two familiar examples of
Bayesian reasoning and formalises them in our type theory $\COMET$.
Subsequently, Section~\ref{section:metatheorems} explores the type
theory in greater depth, and provides justification for the
computation rules in the examples. Next,
Section~\ref{section:semantics} sketches how our type theory can be
interpreted in set-theoretic and probabilistic
models. Appendix~\ref{section:rules} contains a formal presentation of
the type theory $\COMET$.

\section{Syntax and Rules of Deduction}
\label{section:overview}

We present here the terms and types of $\COMET$.  We shall describe the system
at a high level here, giving the intuition behind each construction.  The complete list of
the rules of deduction of $\COMET$ is given in Appendix \ref{section:rules}, and the
properties that we use are all proved in Section \ref{section:metatheorems}.

\subsection{Syntax}

Assume we are given a set of
\emph{type constants} $\mathbf{C}$, representing the base data types needed for each example.  (These may typically include for instance $\mathbf{bool}$, $\mathbf{nat}$ and $\mathbf{real}$.)
Then the types of $\COMET$ are the following.
$$ \begin{array}{lrcll}
\text{Type} & A & ::= & \mathbf{C} \mid & \text{constant type} \\
& & & 0 \mid & \text{empty type} \\
& & & 1 \mid & \text{unit type} \\
& & & A + B \mid & \text{disjoint union} \\
& & & A \otimes B & \text{pairs}
\end{array} $$

The \emph{terms} of $\COMET$ are given by the following grammar.

$$ \begin{array}{lrcll}
\text{Term} & t & ::= & x \mid & \text{variable} \\
& & & * \mid & \text{element of unit type} \\
& & & t \sotimes t \mid & \text{pair} \\
& & & \plet{x}{y}{t}{t} \mid & \text{decomposing a pair} \\
& & & \magic{t} \mid & \text{eliminate element of empty type} \\
& & & \inl{t} \mid \inr{t} \mid & \text{elements of a disjoint union} \\
& & & (\pcase{t}{x}{t}{x}{t}) \mid & \text{case distinction over union} \\
& & & \inlr{s}{t} \mid & \text{partial pairing} \\
& & & \lft{t} \mid & \text{extract element of union} \\
& & & \instr_{\lambda x t}{t} \mid & \text{instrument map} \\
& & & 1/n \mid & \text{constant scalar} (n \geq 2) \\
& & & \norm{t} \mid & \text{normalised substate} \\
& & & s \ovee t & \text{partial sum}
\end{array}$$

The variables $x$ and $y$ are bound within $s$ in $\plet{x}{y}{s}{t}$.  The variable $x$ is bound within $s$ and $y$ within $t$ in $\pcase{r}{x}{s}{y}{t}$, and $x$ is bound within $t$ in $\instr_{\lambda x t}(s)$.
We identify terms up to $\alpha$-conversion (change of bound variable).  We write $t[x:=s]$ for the result of substituting $s$ for $x$ within $t$, renaming bound variables to avoid variable capture.
We shall write $\_$ for a vacuous bound variable; for example, we write $\pcase{r}{\_}{s}{y}{t}$ for $\pcase{r}{x}{s}{y}{t}$ when $y$ does not occur free in $s$.

We shall also sometimes abbreviate our terms, for example writing $\instr_{\mathsf{inl}}(t)$ when we should strictly write $\instr_{\lambda x \inl{x}}(t)$.  Each time, the meaning should be clear from context.

The typing rules for these terms are given in Figure \ref{fig:typing}.  (Note that some of these rules make use of defined
expressions, which will be introduced in the sections below.)

\begin{figure}
\begin{mdframed}
$$ \Tvar \; \Tunit \; \Tpair $$
\TTlett
$$ \Tmagic \; \Tinl \; \Tinr $$
\TTcase
\TTinlr
$$ \Tleft \; \Tinstr $$
$$ \Toneovern \; \Tnorm $$
\TToveeprime
\end{mdframed}
\caption{Typing rules for $\COMET$}
\label{fig:typing}
\end{figure}

The typing rule for the term $\magic{t}$ 
says that from an inhabitant $t:0$ we can produce an inhabitant
$\magic{t}$ in any type $A$.  Intuitively, this says `If the empty type is inhabited,
then every type is inhabited', which is vacuously true.

A term of type $A$ is intended to represent a \emph{total} computation, that always terminates and returns a value of type $A$.
We can think of a term of type $A + 1$ as a \emph{partial} computation that may return a value $a$ of type $A$
(by outputting $\inl{a}$) or diverge (by outputting $\inr{*}$).  The judgement $s \leq t$ should be understood as:
the probability that $s$ returns $\inl{a}$ is $\leq$ the probability that $t$ returns $\inl{a}$, for all $a$.  The rule for this
ordering relation is given in Figure \ref{fig:ordering}.

\begin{figure}
\begin{mdframed}
\TTleqI
\end{mdframed}
\caption{Rule for Ordering in $\COMET$}
\label{fig:ordering}
\end{figure}

The term $\inlr{s}{t}$ is understood intuitively as follows.  We are
given two partial computations $s$ and $t$, and we have derived the
judgement $\dom{s} = \ker{t}$, which tells us that exactly one of
$s$ and $t$ converges on any given input.  We may then form the
computation $\inlr{s}{t}$ which, given an input $x$, returns either
$s(x)$ or $t(x)$, whichever of the two converges.

For the term $\lft{t}$: if we have a term $t : A + B$ and we have derived the judgement $\inlprop{t} = \top$, then we know
that $t$ has the form $\inl{a}$ for some term $a : A$.  We denote this unique term $a$ by $\lft{t}$.

For the term $\instr_{\lambda x t}(s)$: think of the type $\mathbf{n}$ as the set $\{ 1, \ldots, n \}$.  The elements of the type $A + \cdots + A$ consist of $n$ copies of each element $a$ of $A$,
denoted $\nin{1}{n}{a}$, \ldots, $\nin{n}{n}{a}$.  Then $\instr_{\lambda x t}(s)$ is the object $\nin{t[x:=s]}{n}{s}$.  It maps $s$ into one of the $n$ copies of $A$, which one being
determined by the test $t$.

The term $1 / n$ represents the probability distribution on $\mathbf{2} = \{ \top, \bot \}$ which returns $\top$ with probability $1 / n$ and $\bot$ with probability $(n - 1) / n$.  It can
be thought of as a coin toss, with a weighted coin that returns heads with probability $1 / n$.

For the term $\norm{t}$: the term $t : A + 1$ represents a distribution on $A + 1$.  Let $s$ denote the probability that $t$ terminates (i.e. returns a term of the form $\inl{a}$), and let
$\omega(a)$ denote the probability that $t$ returns $a$.  Then $\norm{t}$ returns $a$ with probability $\omega(a) / s$.  Thus, $\norm{t}$ is the distribution resulting from normalising
the subdistribution given by $t$.

The term $s \ovee t$ is the `sum' of $s$ and $t$ in the following sense.  It is defined on a given input if and only if, for any $a$, the probability that $s$ and $t$ both return $\inl{a}$ is $\leq 1$.
In this case, the probability that $s \ovee t$ returns $\inl{a}$ is the sum of these two probabilities.

The computation rules that these terms obey are given in Figure \ref{fig:equations}.

\begin{figure}
\begin{mdframed}
\begin{gather*}
\plet{x}{y}{r \sotimes s}{t}  = t[x:=r,y:=s] \tag*{\Rbeta} \\
\pcase{\inl{r}}{x}{s}{y}{t}  = s[x:=r]  \tag*{\Rbetaplusone} \\
\pcase{\inr{r}}{x}{s}{y}{t}  = t[y:=r]  \tag*{\Rbetaplustwo} \\
\rhd_1(\inlr{s}{t})  = s  \tag*{\Rbetainlrone} \\
\rhd_2(\inlr{s}{t})  = t  \tag*{\Rbetainlrtwo} \\
\inl{\lft{t}}  = t  \tag*{\Rbetaleft} \\
\lft{\inl{t}}  = t  \tag*{\Retaleft} \\
\ind{\instr_{\lambda x p}(t)}  = p[x:=t]  \tag*{\Rinstrtest} \\
\nabla(\instr_{\lambda x p}(t))  = t  \tag*{\Rnablainstr} \\
\text{if } \nabla(t) = x \text{ then } \instr_{\lambda x \ind{t}}(s)  = t[x:=s]  \tag*{\Retainstr} \\
\text{if } t : 1 \text{ then }   *  = t  \tag*{\Retaone} \\
\text{if } t : A \otimes B \text{ then }   \plet{x}{y}{t}{x \sotimes y} = t  \tag*{\Reta} \\
\text{if } t : A + B \text{ then }   t\case t \of \inl{x} \mapsto \inl{x} \mid \inr{y} \mapsto \inr{y} = t \tag*{\Retaplus} \\
\text{if } t : A + B \text{ then }   \inlr{\rhd_1(t)}{\rhd_2(t)} = t \tag*{\Retainlr} \\
\text{if } t \text{ is well-typed then }   \doo{\_}{t}{\return{\norm{t}}} = t  \tag*{\Rbetanorm} \\
\text{if } t = \doo{\_}{t}{\return{\rho}} \text{ and } 1 / n \leq t, \text{ then } \rho = \norm{t}
\tag*{\Retanorm} \\
n \cdot 1 / n = \top
 \tag*{\Rntimesoneovern} \\
\text{if } n \cdot t = \top \text{ then } t = 1/n
 \tag*{\Rdivide}
\end{gather*}
\end{mdframed}
\caption{Computation rules for $\COMET$}
\label{fig:equations}
\end{figure}

Figures \ref{fig:typing} and \ref{fig:equations} should be understood
simultaneously.  So the term $\inlr{s}{t}$ is well-typed if and only
if we can type $s : A + 1$ and $t : B + 1$ (using the rules in Figure
\ref{fig:typing}), \emph{and} derive the equation $\dom{s} = \ker{t}$
using the rules in Figure~\ref{fig:equations}.

The full set of rules of deduction for the system is given in Appendix \ref{section:rules}.

\subsection{Linear Type Theory}

Note the form of several of the typing rules in Figure \ref{fig:typing}, including\Rpair and\Rlett.  These rules do not allow
a variable to be duplicated; in particular, we cannot derive the judgement $x : A \vdash x \sotimes x : A \otimes A$.  The \emph{contraction} rule does not hold in our type theory --- it is not the case in general that, if $\Gamma, x : A, y : B \vdash \mathcal{J}$, then $\Gamma, z : A \vdash \mathcal{J}[x:=z,y:=z]$.  Our theory is thus similar to a \emph{linear} type theory (see for example \cite{Benton93aterm}).

The reason is that these judgements do not behave well with respect to substitution.  For example, take the computation $x : \mathbf{2} \vdash x \sotimes x : 2 \otimes 2$.
If we apply this computation to the scalar $1 / 2$, we presumably wish the result to be $\top \sotimes \top$ with probability $1/2$, and $\bot \sotimes \bot$ with probability $1/2$.  But
this is not the semantics for the term $\vdash 1/2 \sotimes 1/2 : 2 \otimes 2$.  This term assigns probability $1/4$ to all four possibilities $\top \sotimes \top$, $\top \sotimes \bot$, $\bot \sotimes \top$,
$\top \sotimes \top$.

\subsection{Defined Constructions}
We can define the following types and computations from the primitive constructions given above.

\subsubsection{States, Predicates and Scalars}

A closed term $\vdash t : A$ will be called a \emph{state} of type $A$, and intuitively it represents a probability distribution over the elements of $A$.

A \emph{predicate} on type $A$ is a proposition of the form $x : A \vdash p : \mathbf{2}$.  These shall be the formulas of the logic of $\COMET$ (see Section \ref{section:logic}).

A \emph{scalar} is a term $s$ such that $\vdash s : \mathbf{2}$.
The closed terms $t$ such that $\vdash t : \mathbf{2}$ are called \emph{scalars}, and represent the \emph{probabilities} or \emph{truth values} of our system.  In our intended semantics for discrete and continuous probabilities, these
denote elements of the real interval $[0,1]$.

Given a state $\vdash t : A$ and a predicate $x : A \vdash p : \mathbf{2}$, we can find the probability that $p$ is true when measured on $t$; this probability is simply the scalar $p[x:=t]$.

\subsubsection{Coproducts and Copowers}
\label{section:copowers}

Since we have the coproduct $A + B$ of two types, we can construct the disjoint union of $n$ types $A_1 + \cdots + A_n$ in the obvious way.  We write $\nin{1}{n}{}$, \ldots, $\nin{n}{n}{}$
for its constructors; thus, if $a : A_i$ then $\nin{i}{n}{a} : A_1 + \cdots + A_n$.  And given $t : A_1 + \cdots + A_n$, we can eliminate it as:
\[ \case t \of \nin{1}{n}{x_1} \mapsto t_1 \mid \cdots \mid \nin{n}{n}{x_n} \mapsto t_n \enspace . \]
We abbreviate this expression as $\case_{i=1}^n\ t \of \nin{i}{n}{x_i} \mapsto t_i$.

For the special case where all the types are equal, we write $n \cdot A$ for the type $A + \cdots + A$, where there are $n$ copies of $A$.  In category
theory, this is known as the $n$th \emph{copower} of $A$.  (We
include the special cases $0 \cdot A \eqdef 0$ and $1 \cdot A \eqdef A$.)

The \emph{codiagonal} $\nabla(t) : A$ for $t : n \cdot A$ is defined by
\[ \nabla(t) = \case_{i=1}^n\ t \of \nin{i}{n}{x} \mapsto x \enspace . \]
This computation extracts the value of type $A$ and discards the information about which of the $n$ copies it came from.

We write $\mathbf{n}$ for $n \cdot 1$.  Intuitively, this is a finite type with $n$ canonical elements.  We denote these elements by $1$, $2$, \ldots, $n$:
\[ i \eqdef \nin{i}{n}{*} : \mathbf{n} \qquad (1 \leq i \leq n) \enspace . \]
For $t : n \cdot A$, we define
\[ \ind{t} = \case_{i=1}^n t \of \nin{i}{n}{\_} \mapsto i : \mathbf{n} \enspace . \]
Thus, if $t = \nin{i}{n}{a}$, then $\ind{t}$ extracts the index $i$ and throws away the value $a$.

We have the $\lft{}$ construction, which extracts a term of type $A$ from a term of type $A + B$.
We have a similar $\rgt{}$ construction, but there is no need to give primitive rules for this one, as it can be defined in terms of $\lft{}$:
\[ \rgt{t} \eqdef \lft{\swapper{t}} \]
where $\swapper{t} = \pcase{t}{x}{\inr{x}}{y}{\inl{y}}$.

\subsubsection{Partial Functions}

We may see a term $\Gamma \vdash t : A + 1$ as denoting a \emph{partial function} into $A$, which has some probability of terminating (returning a value of form $\inl{s}$) and some probability of diverging (returning $\inr{*}$).
We shall introduce the following notation for dealing with partial functions.

We define:
\begin{itemize}
\item If $\Gamma \vdash t : A$ then $\Gamma \vdash \return{t} \eqdef \inl{t} : A + 1$.  This program converges with probability 1.
\item $\Gamma \vdash \fail \eqdef \inr{*} : A + 1$.  This program diverges with probability 1.
\item If $\Gamma \vdash s : A + 1$ and $\Delta, x : A \vdash t : B + 1$ then \\
$\Gamma, \Delta \vdash \doo{x}{s}{t} \eqdef \pcase{s}{x}{t}{\_}{\fail}$.
\item We introduce the following abbreviation.  If $f$ is an expression (such as $\inln$, $\inrn$) such that $f(x)$ is a term, then we write $t \goesto f$ for $\doo{x}{t}{f(x)}$.
\end{itemize}

The term $\doo{x}{s}{t}$ should be read as the following computation: Run $s$.  If $s$ returns a value, pass this as input $x$ to the computation $t$; otherwise, diverge.

These constructions satisfy these computation rules (Lemma \ref{lm:do}):
\begin{align*}
\doo{x}{\return{s}}{t} & = t[x:=s] \\
\doo{x}{\fail}{t} & = \fail \\
\doo{x}{r}{\return{x}} & = r \\
\doo{\_}{r}{\fail} & = \fail \\
\doo{x}{r}{(\doo{y}{s}{t})} & = \doo{y}{(\doo{x}{r}{s})}{t} 
\end{align*}

This construction also allows us to define \emph{scalar multiplication}.  Given a scalar $\vdash s : \mathbf{2}$ and a substate $\vdash t : A + 1$, the result of multiplying or scaling $t$ by $s$ is $\vdash \doo{\_}{s}{t} : A + 1$.

\paragraph{Partial Projections}

Recall that $n \cdot A$ has, as objects, $n$ copies of each object $a : A$, namely $\nin{1}{n}{a}$, \ldots, $\nin{n}{n}{a}$.
Given $t : n \cdot A$, the \emph{partial projection} $\rhd_{i_1 i_2 \cdots i_k}^{n}(t) : A + 1$ is the partial computation that:
\begin{itemize}
\item given an element $\nin{i_r}{n}{a}$, returns $a$;
\item given an element $\nin{j}{n}{a}$ for $j \neq i_1, \ldots, i_k$, diverges.
\end{itemize}

Formally, we define
\[ \rhd_{i_1 i_2 \cdots i_k}^{n}(t) \eqdef \case_{i=1}^n t \of \nin{i}{n}{x} \mapsto \begin{cases}
\return{x} & \text{if}\ i = i_1, \ldots, i_k \\
\fail & \text{otherwise}
\end{cases} \]

\paragraph{Partial Sum}
\label{section:ordering}

Let $\Gamma \vdash s,t : A + 1$.  If these have disjoint domains (i.e. given any input $x$, the sum of the probability that $s$ and $t$ return $a$ is never greater than 1), then we may form the computation $\Gamma \vdash s \ovee t$,
the \emph{partial sum} of $s$ and $t$.  The probability that this program converges with output $a$ is the sum of the probability that $s$ returns $a$, and the probability that $t$ returns $a$.  The definition is given by the rule \Roveedef; see Section \ref{section:psum}.

We write $n \cdot t$ for the sum $t \ovee \cdots \ovee t$ with $n$ summands.  (We include the special cases $0 \cdot t = \fail$ and $1 \cdot t = t$.)

With this operation, the partial functions
in $A + 1$ form a \emph{partial commutative monoid} (PCM) (see Lemma \ref{lm:ordering}).

\subsection{Logic}
\label{section:logic}

The type $\mathbf{2} = 1 + 1$ shall play a special role in this type theory.  It is the type of \emph{propositions} or \emph{predicates}, and its objects shall be used as the formulas of our logic.

We define $\top \eqdef \inl{*}$ and $\bot \eqdef \inr{*}$.  
We also define the \emph{orthosupplement} of a predicate $p$, which roughly corresponds to negation:
\[ p^\bot \eqdef \pcase{p}{\_}{\bot}{\_}{\top} \]

We immediately have that $p^{\bot \bot} = p$, $\top^\bot = \bot$ and $\bot^\bot = \top$.

The ordering on $\mathbf{2}$ shall play the role of the \emph{derivability} relation in our logic: $p \leq q$ will indicate that $q$ is derivable from $p$, or that $p$ implies $q$.  The rules for this logic
are not the familiar rules of classical or intuitionistic logic.  Rather, the predicates over any context form an \emph{effect algebra} (Proposition \ref{prop:logic}).

In the case of two predicates $p$ and $q$, the partial sum can be thought of as the proposition `$p$ or $q$'.  However,
it differs from disjunction in classical or intuitionistic logic as it is a \emph{partial} operation: it is only defined if $p \leq q^\bot$ (Proposition \ref{prop:logic}.\ref{prop:ortho}).
This condition can be thought of as expressing that $s$ and $t$ are \emph{disjoint}; that is, they are never both true.  

\subsubsection{$n$-tests}

An \emph{$n$-test} in a context $\Gamma$ is an $n$-tuple of predicates $(p_1, \ldots, p_n)$ on $A$ such that
\[ \Gamma \vdash p_1 \ovee \cdots \ovee p_n = \top : \mathbf{2} \enspace . \]

Intutively, this can be thought of as a set of $n$ fuzzy predicates whose probabilities always sum to 1.  We can think of this as a test that
can be performed on the types of $\Gamma$ with $n$ possible outcomes; and, indeed, there is a one-to-one correspondence between
the $n$-tests of $\Gamma$ and the terms of type $\mathbf{n}$ (Lemma \ref{lm:ntest}).

\subsubsection{Instrument Maps}

Let $x : A \vdash t : \mathbf{n}$ and $\Gamma \vdash s : A$.
The term $\instr_{\lambda x t}(s) : n \cdot A$ is interpreted as follows: we read the computation $x : A \vdash t : \mathbf{n}$ as a test on the type $A$, with $n$ possible outcomes.
The computation $\instr_{\lambda x t}(s)$ runs $t$ on (the output of) $s$, and returns either $\nin{i}{n}{s}$, where $i$ is the outcome of the test.


Given an $n$-test $(p_1, \ldots, p_n)$ on $A$, we can write a program that tests which of $p_1$, \ldots, $p_n$ is true of its input, and performs one of $n$ different calculations
as a result.  We write this program as
\[ \Gamma \vdash \mathsf{measure}\ p_1 \mapsto t_1 \mid \cdots \mid p_n \mapsto t_n \enspace . \]
It will be defined in Definition \ref{df:measure}.

If $x : A \vdash p : \mathbf{2}$ and $\Gamma, x : A \vdash s,t : A$, we define
\[ \Gamma \vdash (\cond{p}{s}{t}) = \meas\ p \mapsto s \mid p^\bot \mapsto t \enspace . \]
In the case where $s$ and $t$ do not depend on $x$,
we have the following fact (Lemma \ref{lm:measuretwo}.\ref{lm:measurecond}):
\[ \cond{p}{s}{t} = \pcase{p}{\_}{s}{\_}{t} \]

\subsubsection{Assert Maps}

If $x : A \vdash p : \mathbf{2}$ is a predicate, we define
\[ \Gamma \vdash \assert_{\lambda x p}(t) \eqdef \pcase{\instr_{\lambda x p}(t)}{x}{\return{x}}{\_}{\fail} : A + 1 \]
The computation $\assert_p(t)$ is a partial computation with output type $A$.  It tests whether $p$ is true of $t$; if so, it leaves $t$ unchanged; if not, it diverges.
That is, if $p[x:=t]$ returns $\top$, the computation converges and returns $t$; if not, it diverges.

These constructions satisfy the following computation rules (see Section \ref{section:assert} below for the proofs).

\newcommand{\Rassertdown}{(assert$\downarrow$)\xspace}
\newcommand{\Rassertscalar}{(assert-scalar)\xspace}
\newcommand{\Rinstrplus}{(instr$+$)\xspace}
\newcommand{\Rassertplus}{(assert$+$)\xspace}
\newcommand{\Rinstrm}{(instr $m$)\xspace}
\newcommand{\Rassertm}{(assert $m$)\xspace}
\begin{description}
\item[\Rassertdown]
$\dom{(\assert_{\lambda x p}(t))} = p[x:=t]$
\item[\Rassertscalar]
For a scalar $\vdash s : \mathbf{2}$: $\assert_{\lambda \_ s}(*) = \instr_{\lambda \_ s}(*) = s : \mathbf{2}$.
\item[\Rinstrplus]
For $x : A + B \vdash t : \mathbf{n}$:
\begin{align*}
\instr_{\lambda x t}(s) =
\case s \of
& \inl{y} \mapsto \case_{i=1}^n \instr_{\lambda a. t[x:=\inl{a}]}(y) \of \nin{i}{n}{z} \mapsto \nin{i}{n}{\inl{z}} \\
& \inr{y} \mapsto \case_{i=1}^n \instr_{\lambda b.t[x:=\inl{b}]}(y) \of \nin{i}{n}{z} \mapsto \nin{i}{n}{\inr{z}} \\
\end{align*}
\item[\Rassertplus]
For $x : A + B \vdash p : \mathbf{2}$:
\begin{align*}
\assert_{\lambda x p}(t) = \case t \of
& \inl{x} \mapsto \doo{z}{\assert_{\lambda a. p[x:=\inl{a}]}(x)}{\return{\inl{z}}} \mid \\
& \inr{y} \mapsto \doo{z}{\assert_{\lambda b.p[x:=\inr{b}]}(y)}{\return{\inr{z}}}
\end{align*}
\item[\Rinstrm]
For $x : \mathbf{m} \vdash t : \mathbf{n}$:
\[ \instr_{\lambda x t}(s) = \case_{i=1}^m s \of i \mapsto \case_{j=1}^n t[x:=i] \of j \mapsto \nin{j}{n}{i} \]
\item[\Rassertm]
For $x : \mathbf{m} \vdash p : \mathbf{2}$:
\[ \assert_{\lambda x p}(t) = \case_{i=1}^m t \of i \mapsto \cond{p[x:=i]}{\return{i}}{\fail} \]
\end{description}

In particular, we have $\assert_{\inln?}(t) = \rhd_1(t)$ and $\assert_{\inrn?}(t) = \rhd_2(t)$.

\subsubsection{Sequential Product}

Given two predicates $x : A \vdash p,q : \mathbf{2}$, we can define their \emph{sequential product}
\[ x : A \vdash p \andthen q \eqdef \doo{x}{\assert_p(x)}{q} : \mathbf{2} \enspace . \]
The probability of this predicate being true at $x$ is the product of the probabilities of $p$ and $q$.
This operation has many of the familiar properties of conjunction --- including commutativity --- but not all: in particular, we do not have $p \andthen p^\bot = \bot$ in all cases.  (For example, $1/2 \andthen (1 / 2)^\bot = 1/4$.)

\subsubsection{Coproducts}

We can define predicates which, given a term $t : A + B$, test which of $A$ and $B$ the term came from.  We write these as $\inlprop{t}$ and $\inrprop{t}$.  (Compare these with the operators
$FstAnd$ and $SndAnd$ defined in \cite{Jacobs14}.)  They are defined by

\begin{align*}
\inlprop{t} & \eqdef \pcase{t}{\_}{\top}{\_}{\bot} \\
\inrprop{t} & \eqdef \pcase{t}{\_}{\bot}{\_}{\top}
\end{align*}

\subsubsection{Kernels}

The predicate $\inrprop{}$ is particularly important for partial maps.

Let $\Gamma \vdash t : A + 1$.  The \emph{kernel} of the map denoted by $t$ is
\[ \ker{t} \eqdef \inrprop{t} \eqdef \pcase{t}{\_}{\bot}{\_}{\top} \]
Intuitively, if we think of $t$ as a partial computation, then
$\ker{t}$ is the proposition `$t$ does not terminate', or the function
that gives the probability that $t$ will diverge on a given input.

Its orthosupplement, $(\ker{t})^\bot = \inlprop{t}$, which we shall
also write as $\dom{t}$, is also called the \emph{domain
  predicate} of $t$, and represents the proposition that $t$
terminates.  We note that it is equal to $\doo{\_}{t}{\top}$.







\subsubsection{Normalisation}\label{subsec:normalisation}

We have a representation of all the rational numbers in our system: let $m / n$ be the term
\[ \overbrace{1 / n \ovee \cdots \ovee 1 / n}^{m} \enspace . \]
The usual arithmetic of rational numbers (between 0 and 1) can be carried out in our system (see Section \ref{sec:scalars}).  In particular, for rational numbers $q$ and $r$, we have that if $q \leq r$ then the judgement $q \leq r$ is derivable;
$q \ovee r$ is well-typed if and only if $q + r \leq 1$, in which case $q \ovee r$ is equal to $q + r$; and $q \andthen r = qr$.

Now, let $\vdash t : A + 1$.  Then $t$ represents a \emph{substate} of $A$.
As long as the probability $\dom{t}$ is non-zero, we can
\emph{normalise} this program over the probability of non-termination.  The result is the state denoted by $\norm{t}$.  Intuitively, the probability that $\norm{t}$ will output $a$ is the probability
that $t$ will output $\inl{a}$, conditioned on the event that $t$ terminates.

In order to type $\norm{t}$, we must first prove that $t$ has a
non-zero probability of terminating by deriving an inequality of the
form $1 / n \leq \dom{t}$ for some positive integer $n \geq 2$.

If $\vdash t : A$ and $x : A \vdash p(x) : \mathbf{2}$, we write $\condn{t}{p}$ for
\[ \condn{t}{p} \eqdef \norm{\assert_p(t)} \enspace . \]
The term $t$ denotes a computation whose output is given by a probability distribution over $A$.  Then $\condn{t}{p}$ gives the result of normalising that conditional probability distribution
with respect to $p$.

\subsubsection{Marginalisation}

The tensor product of type $A \otimes B$ comes with two \emph{projections}.  Given $\Gamma \vdash t : A \otimes B$, define
\begin{align*}
  \Gamma \vdash \pi_1(t) \eqdef \plet{x}{\_}{t}{x} : A \\
\Gamma \vdash \pi_2(t) \eqdef \plet{\_}{y}{t}{y} : B
\end{align*}
If $t$ is a state (i..e $\Gamma$ is the empty context), then $\pi_1(t)$ denotes the result of \emph{marginalising} $t$, as
a probability distribution over $A \otimes B$, to a probability distribution over $A$.

\subsubsection{Local Definition}

In our examples, we shall make free use of \emph{local definition}.  This is not a part of the syntax of $\COMET$ itself,
but part of our metalanguage.  We write $\lett x = s \inn t$ for $t[x:=s]$.  We shall also locally define functions: we
write $\lett f(x) = s \inn t$ for the result of replacing every subterm of the form $f(r)$ with $s[x:=r]$ in $t$.

\section{Examples}
\label{section:examples}

This section describes two examples of (Bayesian) reasoning in our
type theory $\COMET$. The first example is a typical exercise in
Bayesian probability theory. Since such kind of reasoning is not very
intuitive, a formal calculus is very useful. The second example
involves a simple graphical model.

\begin{example}
  (See also \cite{Yudkowsky2003,Borgstroem2011}) Consider the
  following situation.
  \begin{quote}
    1\% of a population have a disease.  80\% of subjects with the
    disease test positive, and 9.6\% without the disease also test
    positive.  If a subject is positive, what are the odds they have
    the disease?
  \end{quote}

\noindent This situation can be described as a very simple graphical
model, with associated (conditional) probabilities.
$$\vcenter{\xymatrix@R-1pc@C-2pc{
\ovalbox{HasDisease}\ar[d]
\\
\ovalbox{PositiveResult}
}}
\qquad\qquad
{\setlength\tabcolsep{0.3em}\begin{tabular}{|c|}
\hline
$\Prob{HD}$ \\
\hline\hline
$0.01$ \\
\hline
\end{tabular}}
\qquad
{\setlength\tabcolsep{0.3em}\begin{tabular}{|c|c|}
\hline
$HD$ & $\Prob{PR}$ \\
\hline\hline
$t$ & $0.8$ \\
\hline
$f$ & $0.096$ \\
\hline
\end{tabular}}$$

\newcommand{\subj}{\textsf{subject}}
\newcommand{\pr}{\textsf{positive\_result}}

\noindent In our type theory $\COMET$, we use the following description.
$$\begin{array}{l}
\slet{\subj}{0.01}{} \\
\qquad\slet{\pr(x)}{(\ifte{x}{0.8}{0.096})}{} \\
\condn{\subj}{\pr}
\end{array}$$

\noindent We thus obtain a state $\subj : \mathbf{2}$,
conditioned on the predicate $\pr$ on $\mathbf{2}$. We calculate the
outcome in semi-formal style. The conditional state
$\condn{\subj}{\pr}$ is defined via normalisation of assert, see
Subsection~\ref{subsec:normalisation}.  
We first show what this assert
term is, using the rule \Rassertm and \Rassertscalar:
$$\begin{array}{rcl}
\assert_{\pr}(x)
& = & 
\ifn\ x \begin{array}[t]{ll}
\thenn & \doo{\_}{\assert_{\pr(\top)}(x)}{\return{\top}} \\
\elsen & \doo{\_}{\assert_{\pr(\bot)}(x)}{\return{\bot}}
\end{array} \\
& = & 
\ifn\ x \begin{array}[t]{ll}
\thenn & \doo{\_}{\assert_{0.8}(x)}{\return{\top}} \\
\elsen & \doo{\_}{\assert_{0.096}(x)}{\return{\bot}}
\end{array} \\
& = & \ifn x {\begin{array}[t]{rl}
\thenn & \cond{0.8}{\return{\top}}{\fail} \\
\elsen & \cond{0.096}{\return{\bot}}{\fail}
\end{array}}
\end{array}$$
\noindent Conditioning requires that the domain of the substate
$\assert_{\pr}(\subj)$ is non-zero. We compute this domain as:
$$\begin{array}{rcll}
\dom{\assert_{\pr}(\subj)}
& = &
\pr(\subj) & (\text{Rule \Rassertdown}) \\
& = &
\cond{0.01}{0.8}{0.096} \\
& = &
0.01 \andthen 0.8 \ovee 0.99 \andthen 0.096 \mbox{\qquad} & 
   (\text{Lemma \ref{lm:measuretwo}.\ref{lm:measurecond}}) \\
& = &
0.10304 & (\text{Lemma \ref{lm:rational}})
\end{array}$$

\noindent Hence we can choose (for example) $n = 10$, to get
$\frac{1}{n} \leq 0.10304 = \dom{\assert_{\pr}(\subj)}$.

We now proceed to calculate the result, answering the question in
the beginning of this example.
$$\begin{array}{rcll}
\assert_{\pr}(\subj) 
& = & 
\ifn 0.01 {\begin{array}[t]{rl}
\thenn & \cond{0.8}{\return{\top}}{\fail} \\
\elsen & \cond{0.096}{\return{\bot}}{\fail}
\end{array}} \\
& = & 
\meas\ {\begin{array}[t]{lcl}
0.01 \andthen 0.8 
& \mapsto &
\return{\top} \\
0.01 \andthen 0.8^\bot 
& \mapsto &
\fail \\
0.01^\bot \andthen 0.096 
& \mapsto &
\return{\bot} \\
0.01^\bot \andthen 0.096^\bot 
& \mapsto 
& \fail 
\end{array}} & (\text{Lemma \ref{lm:measure}.\ref{lm:measureand}}) 
\\
& = &  
\meas\ {\begin{array}[t]{lcl}
0.008 
& \mapsto &
\return{\top} \\
0.09504 
& \mapsto &
\return{\bot} \\
0.89696 
& \mapsto &
\fail
\end{array}} & (\text{Lemma \ref{lm:measure}.\ref{lm:measureor}}) 
\\
\condn{\subj}{\pr}
& \eqdef &
\norm{\assert_{\pr}(\subj)} \\
& = & 
\meas {\begin{array}[t]{lcl}
0.0776 
& \mapsto &
\top \\
0.9224 
& \mapsto &
\bot 
\end{array}} & (\text{Corollary \ref{cor:normmeasure}}) \\
& = & 
0.0776. & (\text{Lemma \ref{lm:measuretwo}.\ref{lm:measuretwo'}})
\end{array}$$

\noindent Hence the probability of having the disease after a positive
test result is 7.8\%.

\end{example}

\begin{example}[Bayesian Network]
The following is a standard example of a problem in Bayesian networks,
created by~\cite[Chap.~14]{RusselN03}.


\begin{quote}
I’m at work, neighbor John calls to say my alarm is ringing. Sometimes
it’s set off by minor earthquakes. Is there a burglar?
\end{quote}

We are given that the situation is as described by the following
Bayesian network.
$$\vcenter{\xymatrix@R-1pc@C-2pc{
\ovalbox{Burglary}\ar[dr] & & \ovalbox{Earthquake}\ar[dl] 
\\
& \ovalbox{Alarm}\ar[dl]\ar[dr] & 
\\
\ovalbox{JohnCalls} & & \ovalbox{MaryCalls}
}}
\qquad
\begin{tabular}{c}
{\setlength\tabcolsep{0.3em}\begin{tabular}{|c|}
\hline
$\Prob{B}$ \\
\hline\hline
$\frac{1}{1000}$ \\
\hline
\end{tabular}}
\\[2em]
{\setlength\tabcolsep{0.3em}\begin{tabular}{|c|c|}
\hline
$A$ & $\Prob{J}$ \\
\hline\hline
$t$ & $\frac{9}{10}$ \\
\hline
$f$ & $\frac{1}{20}$ \\
\hline
\end{tabular}}
\end{tabular}
\quad
{\setlength\tabcolsep{0.3em}\begin{tabular}{|cc|c|}
\hline
$B$ & $E$ & $\Prob{A}$ \\
\hline\hline
$t$ & $t$ & $\frac{95}{100}$ \\
\hline
$t$ & $f$ & $\frac{94}{100}$ \\
\hline
$f$ & $t$ & $\frac{29}{100}$ \\
\hline
$f$ & $f$ & $\frac{1}{1000}$ \\
\hline
\end{tabular}}
\quad
\begin{tabular}{c}
{\setlength\tabcolsep{0.3em}\begin{tabular}{|c|}
\hline
$\Prob{E}$ \\
\hline\hline
$\frac{1}{500}$ \\
\hline
\end{tabular}}
\\[2em]
{\setlength\tabcolsep{0.3em}\begin{tabular}{|c|c|}
\hline
$A$ & $\Prob{M}$ \\
\hline\hline
$t$ & $\frac{7}{10}$ \\
\hline
$f$ & $\frac{1}{100}$ \\
\hline
\end{tabular}}
\end{tabular}$$

\noindent The probability of each event given its preconditions is as
given in the tables --- for example, the probability that the alarm
rings given that there is a burglar but no earthquake is 0.94.

We model the above question in $\COMET$ as follows.  
$$\begin{array}{l}
\slet{b}{0.01}{\slet{e}{0.002}{}} \\
\qquad\lett a(x,y) = (\ifn x {\begin{array}[t]{l}
   \thenn (\cond{y}{0.95}{0.94}) \\
   \elsen (\cond{y}{0.29}{0.001})) \inn
   \end{array}} \\
\qquad\qquad \slet{j(z)}{(\cond{z}{0.9}{0.05})}{} \\
\qquad\qquad\qquad \slet{m(z)}{(\cond{z}{0.7}{0.01})}{} \\
\pi_{1}\big(\condn{b\sotimes e}{j \after a}\big)
\end{array}$$

\noindent We first elaborate the predicate $j\after a$, given in
context as $x\colon \mathbf{2}, y\colon \mathbf{2} \vdash j(a(x,y))
\colon \mathbf{2}$. It is:
$$\begin{array}{rcl}
j(a(x,y)) 
& = &
\cond{a(x,y)}{0.90}{0.05} \\
& = &
\ifn x {\begin{array}[t]{l} 
  \thenn (\cond{y}{(\cond{0.95}{0.90}{0.05})}{(\cond{0.94}{0.90}{0.05})} \\
  \elsen (\cond{y}{(\cond{0.29}{0.90}{0.05})}{(\cond{0.001}{0.90}{0.05})} 
\end{array}}
\\
& = &
\ifn x {\begin{array}[t]{l} 
  \thenn (\cond{y}{0.95 \andthen 0.90 \ovee 0.95^{\bot} \andthen 0.05}
                  {0.94 \andthen 0.90 \ovee 0.94^{\bot} \andthen 0.05}) \\
  \elsen (\cond{y}{0.29\andthen 0.90 \ovee 0.29^{\bot} \andthen 0.05}
                  {0.001 \andthen 0.90 \ovee 0.001^{\bot} \andthen 0.05} 
\end{array}}
\\
& = &
\ifn x \thenn (\cond{y}{0.8575}{0.849}) \elsen (\cond{y}{0.2965}{0.05085})
\end{array}$$

\noindent The associated assert map is:
$$\begin{array}{rcl}
\assert_{j \after a}(b,e)
& = &
\meas\ {\begin{array}[t]{lcl}
0.001 \andthen 0.002 \andthen 0.8575 
& \mapsto &
\return{\top \sotimes \top} \\
0.001 \andthen 0.998 \andthen 0.849 
& \mapsto &
\return{\top \sotimes \bot} \\
0.999 \andthen 0.002 \andthen 0.2965 
& \mapsto &
\return{\bot \sotimes \top} \\
0.999 \andthen 0.998 \andthen 0.05085
& \mapsto &
\return{\bot \sotimes \bot} \\
0.052138976^\bot
& \mapsto &
\fail 
\end{array}} \\
& = &
\meas\ {\begin{array}[t]{lcl}
0.000001715 
& \mapsto &
\return{\top \sotimes \top} \\
0.000847302 
& \mapsto &
\return{\top \sotimes \bot} \\
0.000592407 
& \mapsto &
\return{\bot \sotimes \top} \\
0.050697552 
& \mapsto &
\return{\bot \sotimes \bot} \\
0.052138976^\bot 
& \mapsto &
\fail 
\end{array}}
\end{array}$$

\noindent Hence by Corollary~\ref{cor:normmeasure} we obtain the
marginalised conditional:
$$\begin{array}{rcl}
\pi_{1}\big(\condn{b\sotimes e}{j \after a}\big)
& = &
\pi_{1}\big(\norm{\assert_{j \after a}(b,e)}\big) \\
& = &
\pi_{1}\big(\meas\ {\begin{array}[t]{lcl}
\nicefrac{0.000001715}{0.052138976}
& \mapsto &
\top \sotimes \top \\
\nicefrac{0.000847302}{0.052138976}
& \mapsto &
\top \sotimes \bot \\
\nicefrac{0.000592407}{0.052138976}
& \mapsto &
\bot \sotimes \top \\
\nicefrac{0.050697552}{0.052138976}
& \mapsto &
\bot \sotimes \bot\,\big) \\
\end{array}} \\
& = &
\meas\ {\begin{array}[t]{lcl}
0.000032893
& \mapsto &
\pi_{1}(\top \sotimes \top) \\
0.016250837
& \mapsto &
\pi_{1}(\top \sotimes \bot) \\
0.011362078
& \mapsto &
\pi_{1}(\bot \sotimes \top) \\
0.972354194
& \mapsto &
\pi_{1}(\bot \sotimes \bot) \\
\end{array}}
\\
& = &
\meas\ {\begin{array}[t]{lcl}
0.000032893
& \mapsto &
\top \\
0.016250837
& \mapsto &
\top \\
0.011362076
& \mapsto &
\bot \\
0.972354194
& \mapsto &
\bot \\
\end{array}}
\\
& = &
\meas\ {\begin{array}[t]{lcl}
0.01628373
& \mapsto &
\top \\
0.98371627
& \mapsto &
\bot \\
\end{array}}
\\
& = &
0.01628373 
\end{array}$$

\noindent We conclude that there is a 1.6\% chance of a burglary when
John calls.

\end{example}

\section{Metatheorems}
\label{section:metatheorems}

We presented an overview of the system in Section \ref{section:overview}, and gave the intuitive meaning of the terms of $\COMET$.
In this section, we proceed to a more formal development of the theory, and investigate what can be proved within the system.

The type theory we have presented enjoys the following standard properties.

\begin{lemma}
\label{lm:meta}
$ $
\begin{enumerate}
\item \textbf{Weakening}
\label{lm:weak}
  If $\Gamma \vdash \mathcal{J}$ and $\Gamma \subseteq \Delta$ then $\Delta \vdash \mathcal{J}$.
\item \textbf{Substitution}
  If $\Gamma \vdash t : A$ and $\Delta, x : A \vdash \mathcal{J}$ then $\Gamma, \Delta \vdash \mathcal{J}[x:=t]$.
\item \textbf{Equation Validity}
  If $\Gamma \vdash s = t : A$ then $\Gamma \vdash s : A$ and $\Gamma \vdash t : A$.
\item \textbf{Inequality Validity}
If $\Gamma \vdash s \leq t : A + 1$ then $\Gamma \vdash s : A + 1$ and $\Gamma \vdash t : A + 1$.
\item \textbf{Functionality}
If $\Gamma \vdash r = s : A$ and $\Delta, x : A \vdash t : B$ then $\Gamma, \Delta \vdash t[x:=r] = t[x:=s] : B$.
\end{enumerate}
\end{lemma}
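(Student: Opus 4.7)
The plan is to prove all five items by one simultaneous induction on the structure of derivations. The properties are tightly coupled: many of the typing rules carry equational side conditions (for instance \Rinlr requires $\Gamma \vdash s \downarrow = t \uparrow : \mathbf{2}$, \Rleft requires $\Gamma \vdash \inlprop{t} = \top : \mathbf{2}$, and \Roveeprime and \Rnorm have similar premises), so the substitution clause for typing judgments depends on substitution for equality judgments in its inductive step. Conversely, equation validity relies on substitution in order to type the right-hand side of each $\beta$-rule. A single mutual induction ranging over all four kinds of judgment ($\Gamma \vdash t : A$, $\Gamma \vdash s = t : A$, $\Gamma \vdash s \leq t : A+1$, and $\Gamma \vdash \mathcal{J}$) is the clean way to tie this together.

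First I would prove weakening (1). The base case is \Rvar: if $x : A \in \Gamma \subseteq \Delta$ then $x : A \in \Delta$. For multi-context rules such as \Rpair, \Rlett, and \Rcase, any inclusion of the concatenated context is handled by choosing an appropriate split of the enlarged context and invoking \Rexch to reorder. Next I would prove substitution (2) by induction on the derivation of $\Delta, x : A \vdash \mathcal{J}$: in \Rvar, if the variable is $x$ one uses the given $\Gamma \vdash t : A$ weakened via (1) to $\Gamma, \Delta \vdash t : A$; otherwise one reapplies \Rvar. All compound cases propagate by applying the inductive hypothesis to each premise, crucially including the equational side conditions, which is exactly why the mutual induction is needed.

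For equation validity (3), induct on the derivation of $\Gamma \vdash s = t : A$. The congruence rules \Rpaireq, \Rleteq, \Rinleq, \Rinreq, \Rcaseeq, \Rinlreq, and \Rlefteq give typings of both sides by reapplying the corresponding constructor rule to the premises. For the computation rules \Rbeta, \Rbetaplusone, \Rbetaplustwo, and the analogous $\beta$-rules for $\inlrn$, $\lftn$, $\instr$, and $\norm{-}$, the left-hand side is typed directly, and the right-hand side is typed by substitution (2) applied to the given premises. The structural cases \Rref, \Rsym, \Rtrans are immediate. Inequality validity (4) follows by inspection of the single rule \RleqI: its premises, combined with (3) applied to its equation premises, yield the required typings of $s$ and $t$. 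Finally, functionality (5) is proved by induction on the derivation of $\Delta, x : A \vdash t : B$: the variable case reduces to the assumption $\Gamma \vdash r = s : A$ (after weakening) or to \Rref, and each compound case applies the inductive hypothesis componentwise and recombines via the relevant congruence rule.

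The main obstacle will be managing the mutual induction around those typing rules with equational side conditions. When applying substitution to a derivation ending in \Rinlr, \Roveeprime, \Retainstr, or \Rnorm, one must substitute into the side-condition equation and verify that the resulting equation is still derivable and has as endpoints precisely the substituted versions of the originals. This in turn requires the substitution and weakening clauses to apply uniformly to equality and inequality judgments, and for the congruence rules (needed for functionality) to line up with the typing rules case-by-case. The proof itself consists of routine verifications, but the bookkeeping---ensuring every clause is available to every other at the point it is invoked---is the real content.
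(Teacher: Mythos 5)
Your proposal is correct and follows essentially the same route as the paper, which simply states that each part is proved by induction on derivations with all cases straightforward. Your additional observation that the induction must be mutual across judgment forms (because typing rules such as \Rinlr, \Rleft, \Rnorm and \Roveeprime carry equational premises) is a sound and useful elaboration of the detail the paper leaves implicit.
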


\begin{proof}
The proof in each case is by induction on derivations.  Each case is straightforward.
\end{proof}

The following lemma shows that substituting within our binding operations works as desired.

\begin{lemma}
\label{lm:sub}
  \begin{enumerate}
  \item \label{lm:letsub}If $\Gamma \vdash r : A \otimes B$; $\Delta, x : A, y : B \vdash s : C$; and $\Theta, z : C \vdash t : D$
then $\Gamma, \Delta, \Theta \vdash t[z:=\plet{x}{y}{r}{s}] = \plet{x}{y}{r}{t[z:=s]} : D$.
\item \label{lm:casesub} If $\Gamma \vdash r : A + B$; $\Delta, x  :A \vdash s : C$; $\Delta, y : B \vdash s' : C$; and $\Theta, z : C \vdash t : D$ then
$$\Gamma, \Delta, \Theta \vdash \begin{array}[t]{l}
t[z:=\pcase{r}{x}{s}{y}{s'}] \\
= \pcase{r}{x}{t[z:=s]}{y}{t[z:=s']} : D
\end{array} \enspace . $$
  \end{enumerate}
\end{lemma}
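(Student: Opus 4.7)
The plan is to prove both parts (i) and (ii) simultaneously by structural induction on the term $t$, in each case pushing the outer binding construction through the head constructor of $t$ using commutation rules already available in the calculus, and then appealing to the induction hypothesis on subterms.

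For the inductive cases, the appropriate commutation rules apply naturally. When $t = t_1 \sotimes t_2$, I would apply \Rletpair for part (i) and \Rcasepair for part (ii), then invoke the induction hypothesis on the subterm containing $z$. When $t = \plet{x'}{y'}{t_1}{t_2}$, part (i) reduces via \Rletlet and part (ii) via \Rletcase applied in reverse together with the IH. When $t = \pcase{t_0}{x'}{t_1}{y'}{t_2}$, I use \Rletcase for (i) and \Rcasecase for (ii), again combined with the IH on the branches that contain $z$. For the remaining constructors (such as $\inln$, $\inrn$, $\inlrn$, $\magic{\cdot}$, $\lftn$, $\instr$, $\norm{\cdot}$, $\ovee$, and $*$), I would derive analogous commutation rules on the fly by first $\eta$-expanding (using $\Reta$ or $\Retaplus$) to rewrite $t$ in terms of pair-destructuring or case-splitting, reducing the problem to a case already handled above.

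The base cases require more care. If $t = z$, part (i) gives LHS $= \plet{x}{y}{r}{s}$ and RHS $= \plet{x}{y}{r}{s}$, so the equation holds by reflexivity \Rref, and similarly for (ii). If $t$ is another variable $w\in\Theta$, or $t = *$, or $t$ is a closed constant, then $z \notin \mathrm{FV}(t)$, so LHS $= t$ while RHS $= \plet{x}{y}{r}{t}$ (respectively $\pcase{r}{x}{t}{y}{t}$). Here we must show that the scrutinee can be \emph{discarded} when its result is not used. This follows by $\eta$-expanding $r$ via \Reta as $r = \plet{x}{y}{r}{x\sotimes y}$ (resp. $\Retaplus$), combining with the appropriate $\beta$-law, and using functionality (Lemma \ref{lm:func}) to transport the equality; equivalently, one can argue by first weakening $t$ into the larger context and then reducing the dummy binding.

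The main obstacle is precisely these base cases of variables and constants: the system is \emph{linear-like} (no contraction), so establishing that binding constructions whose body does not mention the bound variable act as the identity is the subtle point, and must be derived rather than assumed. Once this discarding property is established (essentially, that $\bang \after r$ behaves trivially), the inductive cases follow mechanically, and the two parts of the lemma fall out together from a single induction.
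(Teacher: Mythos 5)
Your plan --- structural induction on $t$, pushing the binder through each head constructor --- is genuinely different from the paper's proof, and it has a real gap exactly where you flag one. The paper does \emph{no} induction on $t$ at all: it uses the ``local definition trick''
$t[z:=u] = \plet{z}{\_}{u \sotimes *}{t}$,
which holds for an \emph{arbitrary} $t$ by a single application of \Rbeta. Substituting $u = \pcase{r}{x}{s}{y}{s'}$ (resp.\ $u = \plet{x}{y}{r}{s}$), one then pushes the outer $\mathsf{let}$ inward using only the primitive commuting conversions \Rcasepair and \Rletcase (resp.\ \Rletpair and \Rletlet), and collapses the dummy binding again with \Rbeta. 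Four lines, uniform in $t$.

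The gap in your version is the base case. When $z \notin \mathrm{FV}(t)$ you must prove the discarding law $\plet{x}{y}{r}{t} = t$ (resp.\ $\pcase{r}{x}{t}{y}{t} = t$), but in this development that law is Corollary~\ref{cor:vacsub}, which is \emph{derived from} Lemma~\ref{lm:sub} as the vacuous special case --- so your induction needs as a base case precisely an instance of the statement being proved. Your sketch for deriving it independently ($\eta$-expanding $r$ via \Reta, a $\beta$-law, functionality) does not go through: functionality only transports equalities along genuine occurrences of a variable, and $\eta$-expanding $r$ never produces the term $\plet{x}{y}{r}{t}$ with $t$ ignoring $x,y$. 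The only clean way to discharge this base case is, again, the $\sotimes *$ trick --- at which point the induction is superfluous. A second, lesser problem is the ``remaining constructors'' step: for $t$ headed by $\inlrn$, $\lftn$, $\instr$, or $\ovee$ there is no primitive commuting conversion and no $\eta$-rule that rewrites $t$ into a $\mathsf{case}$ or $\mathsf{let}$ over the subterm containing $z$ (these formers are not eliminators of $\otimes$ or $+$, and some carry side conditions that are equality judgements), so ``derive the commutation on the fly'' is not available there. I recommend abandoning the induction and adopting the $\plet{z}{\_}{u \sotimes *}{t}$ encoding directly.
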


\begin{proof}
  For part 1, we us the following `trick' to simulate local definition (see \cite{Adams2014}):
\begin{align*}
\lefteqn{t[z := \pcase{r}{x}{s}{y}{s'}]} \\
& = \plet{z}{\_}{(\pcase{r}{x}{s}{y}{s'}) \sotimes *}{t} & \Rbeta \\
& = \plet{z}{\_}{\pcase{r}{x}{s \sotimes *}{y}{s' \sotimes *}}{t} & \Rcasepair \\
& = \pcase{r}{x}{\plet{z}{\_}{s \sotimes *}{t}}{y}{\plet{z}{\_}{s' \sotimes *}{t}} & \Rletcase \\
& = \pcase{r}{x}{t[z:=s]}{y}{t[z:=s']} & \Rbeta
\end{align*}
Part 2 is proven similarly using\Rletpair and\Rletlet.
\end{proof}

\begin{corollary}
\label{cor:vacsub}
  \begin{enumerate}
  \item If $\Gamma \vdash s : A \otimes B$ and $\Delta \vdash t : C$ then
$\Gamma, \Delta \vdash \plet{\_}{\_}{s}{t} = t : C$.
  \item \label{cor:vaccase} If $\Gamma \vdash s : A + B$ and $\Delta \vdash t : C$ then $\Gamma, \Delta \vdash \pcase{s}{\_}{t}{\_}{t} = t : C$.
  \end{enumerate}
\end{corollary}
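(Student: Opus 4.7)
The plan is to derive both parts directly from Lemma \ref{lm:sub} by making clever choices for the free variable $z$, together with one application of an $\eta$-rule. The key idea in each case is to introduce a fresh variable $z$ that does not occur in $t$, so that any substitution $t[z := -]$ is literally equal to $t$; then Lemma \ref{lm:sub} pushes the substitution under the binder, after which an $\eta$-rule collapses the inner computation.

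For part 1, I would pick a fresh variable $z$ of type $A \otimes B$ (not occurring in $t$), so that by weakening $\Theta, z : A \otimes B \vdash t : C$ for some appropriate $\Theta$. Applying Lemma \ref{lm:sub}.\ref{lm:letsub} with the lemma's $r := s$, lemma's $s := x \sotimes y$ (well-typed as $x : A, y : B \vdash x \sotimes y : A \otimes B$), and lemma's $t := t$, yields
\[
t[z := \plet{x}{y}{s}{x \sotimes y}] \;=\; \plet{x}{y}{s}{t[z := x \sotimes y]}.
\]
By $\Reta$ the inner term on the left is equal to $s$, so the left-hand side reduces to $t[z := s] = t$ since $z \notin t$. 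On the right, since $x, y \notin t$, we have $t[z := x \sotimes y] = t$. Thus $t = \plet{x}{y}{s}{t} = \plet{\_}{\_}{s}{t}$, which gives the required equation after applying $\Rsym$.

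For part 2 the strategy is identical, using Lemma \ref{lm:sub}.\ref{lm:casesub}. Choose a fresh $z : A + B$ not occurring in $t$, and instantiate the lemma with $r := s$, lemma's $s := \inl{x}$, lemma's $s' := \inr{y}$, and lemma's $t := t$. This produces
\[
t[z := \pcase{s}{x}{\inl{x}}{y}{\inr{y}}] \;=\; \pcase{s}{x}{t[z := \inl{x}]}{y}{t[z := \inr{y}]}.
\]
By $\Retaplus$ the case expression on the left is equal to $s$, so the left-hand side reduces to $t[z := s] = t$, while on the right both $t[z := \inl{x}]$ and $t[z := \inr{y}]$ equal $t$ because $x, y, z$ do not occur in $t$. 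Thus $t = \pcase{s}{\_}{t}{\_}{t}$, and $\Rsym$ finishes the argument.

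I do not anticipate any real obstacle: the corollary is essentially the statement that the introduced Lemma \ref{lm:sub} specialises correctly when the body has no dependence on the bound variable. The only minor care required is to ensure, via Lemma \ref{lm:meta} (weakening), that $t$ can be viewed in the extended context containing the fresh $z$ of the appropriate type, and to invoke $\Reta$ (respectively $\Retaplus$) exactly once to eliminate the auxiliary constructor used to feed the lemma.
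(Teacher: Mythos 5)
Your proposal is correct and follows essentially the same route as the paper, which simply observes that both parts are the special case of Lemma~\ref{lm:sub} in which $z$ does not occur free in $t$. The detour through \Reta{} and \Retaplus{} is harmless but unnecessary: since $z \notin \mathrm{FV}(t)$, the substitution $t[z:=u]$ is syntactically equal to $t$ for \emph{any} $u$, so both sides of the lemma's equation collapse immediately without choosing a particular instantiation of the middle term.
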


\begin{proof}
  These are both the special case where $z$ does not occur free in $t$.
\end{proof}

\subsection{Coproducts}

We generalise the $\inlprop{}$ and $\inrprop{}$ constructions as follows.  
Define the predicate $\intest{i}{}$ on $n \cdot A$, which tests whether a term
comes from the $i$th component, as follows.
\[ \intest{i}{t} \eqdef \case_{j=1}^n t \of \nin{j}{n}{\_} \mapsto \begin{cases}
\top & \text{if } i = j \\
\bot & \text{if } i \neq j
\end{cases} \]




\subsection{The Do Notation}

Our construction $\doo{x}{s}{t}$ satisfies the following laws.

\begin{lemma}
\label{lm:do}
Let $\Gamma \vdash r : A + 1$, $\Delta, x : A \vdash s : B + 1$, and $\Theta, y : B \vdash t : C$.  Let
also $\Gamma \vdash r' : A$.  Then
\begin{align*}
\Gamma, \Delta \vdash \doo{x}{\return{r'}}{s} & = t[x:=s] : B + 1 \\
\Gamma, \Delta \vdash \doo{x}{\fail}{s} & = \fail : B + 1 \\
\Gamma \vdash \doo{x}{r}{\return{x}} & = r : A + 1 \\
\Gamma \vdash \doo{\_}{r}{\fail} & = \fail : B + 1 \\
\Gamma, \Delta, \Theta \vdash \doo{x}{r}{(\doo{y}{s}{t})} & = \doo{y}{(\doo{x}{r}{s})}{t}  : C
\end{align*}
\end{lemma}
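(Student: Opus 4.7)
The plan is to prove each of the five equations by unfolding the abbreviations $\doo{x}{s}{t} \eqdef \pcase{s}{x}{t}{\_}{\fail}$, $\return{s} \eqdef \inl{s}$ and $\fail \eqdef \inr{*}$, after which each one reduces to a single primitive rule from Figure~\ref{fig:equations}, together (in one place) with Corollary~\ref{cor:vacsub}.

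The first two equations are immediate $\beta$-reductions on the coproduct. After unfolding, $\doo{x}{\return{r'}}{s}$ becomes $\pcase{\inl{r'}}{x}{s}{\_}{\fail}$ and reduces by \Rbetaplusone to $s[x:=r']$; and $\doo{x}{\fail}{s}$ becomes $\pcase{\inr{*}}{x}{s}{\_}{\fail}$ and reduces by \Rbetaplustwo to $\fail$. The fourth equation, $\doo{\_}{r}{\fail} = \fail$, has both branches equal to $\fail$, so it is immediate from Corollary~\ref{cor:vacsub}.\ref{cor:vaccase}.

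For the third equation $\doo{x}{r}{\return{x}} = r$, I would unfold to $\pcase{r}{x}{\inl{x}}{\_}{\inr{*}}$ and then match against the $\eta$-rule $\Retaplus$, which asserts $r = \pcase{r}{x}{\inl{x}}{y}{\inr{y}}$. The vacuous variable is $\alpha$-renamed to a fresh $y : 1$; using \Retaone to rewrite $\inr{*}$ as $\inr{y}$ then produces exactly the right-hand side of $\Retaplus$, closing the case.

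The fifth equation, associativity, is the only one requiring more than one rule. Unfolding both sides yields
\[ \text{LHS} = \pcase{r}{x}{\pcase{s}{y}{t}{\_}{\fail}}{\_}{\fail}, \qquad \text{RHS} = \pcase{\pcase{r}{x}{s}{\_}{\fail}}{y}{t}{\_}{\fail}. \]
I would apply \Rcasecase to the RHS, treating the outer ``continuation'' as $\pcase{-}{y}{t}{\_}{\fail}$; this distributes the outer case over the inner one, yielding $\pcase{r}{x}{\pcase{s}{y}{t}{\_}{\fail}}{\_}{\pcase{\fail}{y}{t}{\_}{\fail}}$. The residual $\pcase{\fail}{y}{t}{\_}{\fail}$ then collapses to $\fail$ by \Rbetaplustwo, matching the LHS. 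The main obstacle is only the bookkeeping of bound variables when matching the shape of our expression against the schema of \Rcasecase and verifying the freshness side-conditions, which hold automatically by the $\alpha$-conversion convention (note that $t$ is typed in a context not containing $x$). I would address the fifth equation last.
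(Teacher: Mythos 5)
Your proposal is correct and matches the paper's own (one-line) proof, which cites exactly the rules\Rbetaplusone,\Rbetaplustwo,\Retaplus and\Rcasecase; you have simply spelled out the routine unfolding, including the harmless auxiliary uses of\Retaone and Corollary~\ref{cor:vacsub} that the paper leaves implicit. No gaps.
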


\begin{proof}
  These all follow easily from the rules for coproducts\Rbetaplusone,\Rbetaplustwo,\Retaplus and\Rcasecase.
\end{proof}

\subsection{Kernels}

\begin{lemma}$ $
\label{lm:kernel}
\begin{enumerate}
\item 
If $\Gamma \vdash t : A + 1$ then $\Gamma \vdash \dom{t} = (\doo{\_}{t}{\top}) : \mathbf{2}$
\item 
\label{lm:kernel2}
Let $\Gamma \vdash t : A + 1$.  Then
$\Gamma \vdash \dom{t} = \bot : \mathbf{2}$ if and only if $\Gamma \vdash t = \fail : A + 1$.
\item 
Let $\Gamma \vdash s : A + 1$ and $\Delta, x : A \vdash t : B + 1$.  Then
$\Gamma, \Delta \vdash \dom{(\doo{x}{s}{t})} = \doo{x}{s}{\dom{t}} : \mathbf{2}$.
\end{enumerate}
\end{lemma}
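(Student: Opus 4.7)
The plan is to prove each part by unfolding definitions; the only substantive step is the right-to-left direction of part 2, which needs a small detour through $\swapper{\cdot}$.

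For part 1, I would simply unfold $\doo{\_}{t}{\top}$ using the definition of the do-notation to obtain $\pcase{t}{\_}{\top}{\_}{\fail}$. Since $\fail$ here lives at type $\mathbf{2}$, we have $\fail = \inr{*} = \bot$, and this term matches $\dom{t} \eqdef \inlprop{t} = \pcase{t}{\_}{\top}{\_}{\bot}$ on the nose.

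For part 2, the direction $t = \fail \Rightarrow \dom{t} = \bot$ follows from $\dom{\inr{*}} = \pcase{\inr{*}}{\_}{\top}{\_}{\bot} = \bot$ via \Rbetaplustwo and \Retaone. For the converse, the hypothesis $\dom{t} = \bot$ gives $\inrprop{t} = (\dom{t})^\bot = \top$. I would then use Lemma \ref{lm:sub}.\ref{lm:casesub} together with the $\beta+$ rules to compute $\inlprop{\swapper{t}} = \inrprop{t} = \top$. This satisfies the side-condition of \Rbetaleft applied to $\swapper{t} : 1 + A$, yielding $\inl{\rgt{t}} = \swapper{t}$ (unfolding $\rgt{t} \eqdef \lft{\swapper{t}}$). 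Applying $\swapper{\cdot}$ to both sides by functionality (Lemma \ref{lm:meta})---using the involution $\swapper{\swapper{t}} = t$ (a short derivation combining Lemma \ref{lm:sub}.\ref{lm:casesub}, $\beta+$, and $\eta+$) and the reduction $\swapper{\inl{u}} = \inr{u}$ (by $\beta+_1$)---gives $t = \inr{\rgt{t}}$. Since $\rgt{t} : 1$, \Retaone yields $\rgt{t} = *$, so $t = \inr{*} = \fail$ as required.

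Part 3 is another direct computation. Unfolding $\doo{x}{s}{t}$ to $\pcase{s}{x}{t}{\_}{\fail}$ and then applying Lemma \ref{lm:sub}.\ref{lm:casesub} with the one-hole context $\inlprop{-}$ pushes $\inlprop{\cdot}$ through the outer case, giving
$$ \dom{(\doo{x}{s}{t})} \;=\; \pcase{s}{x}{\dom{t}}{\_}{\dom{\fail}} \;=\; \pcase{s}{x}{\dom{t}}{\_}{\bot} \enspace, $$
where the second step uses part 2 (or a direct $\beta+_2$ calculation) to evaluate $\dom{\fail} = \bot$. On the other hand, $\doo{x}{s}{\dom{t}}$ unfolds to $\pcase{s}{x}{\dom{t}}{\_}{\fail}$ with $\fail : \mathbf{2}$, i.e.\ $\fail = \bot$, so the two sides coincide. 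The main obstacle throughout is the converse direction of part 2: \Rbetaleft is guarded by $\inlprop{t} = \top$, not by $\inlprop{t} = \bot$, so the hypothesis must be converted via $\swapper{\cdot}$ into a form where this rule fires; the remaining manipulations are routine uses of Lemma \ref{lm:sub}.
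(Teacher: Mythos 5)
Your proof is correct and follows essentially the same route as the paper's: parts 1 and 3 by unfolding definitions and pushing $\inlprop{-}$ through the case via Lemma \ref{lm:sub}, and the converse of part 2 by deriving $t = \inr{\rgt{t}} = \inr{*}$. Your detour through $\swapper{\cdot}$ and \Rbetaleft is precisely the justification the paper leaves implicit in its one-line claim "$t = \inr{\rgt{t}}$", so you have merely made the same argument explicit.
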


\begin{proof}$ $
\begin{enumerate}
\item 
This holds just by expanding definitions.
\item
Obviously, $(\dom{\fail}) = \bot$.  For the converse, if $\dom{t} =
\bot$ then $\ker{t} = \top$ and so $t = \inr{\rgt{t}} = \inr{*}$ by\Retaone.
\item
$ \begin{aligned}[t]
(\dom{\pcase{s}{x}{t}{\_}{\fail}}) & = \pcase{s}{x}{\dom{t}}{\_}{\dom{\fail}} \\
& = \pcase{s}{x}{\dom{t}}{\_}{\bot} 
\end{aligned} $
\end{enumerate}
\end{proof}

\subsection{Finite Types}

\begin{lemma}
\label{lm:rhdfin}
Let $\Gamma \vdash t : \mathbf{n}$ and $i \leq n$.
If $\Gamma \vdash \rhd_i(t) = \top : \mathbf{2}$ then $\Gamma \vdash t = i : \mathbf{n}$.
\end{lemma}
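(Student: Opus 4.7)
The plan is to induct on $n$, exploiting the view of $\mathbf{n} = n \cdot \mathbf{1}$ as the iterated binary coproduct $\mathbf{1} + \mathbf{n-1}$ so that the binary rule \Rbetaleft applies to peel off one component at a time.

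The base case $n = 1$ is immediate: $i = 1$ is forced, and both $t$ and $1$ collapse to $*$ via \Retaone and the convention $1 \cdot A = A$ of Section~\ref{section:copowers}.

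For the inductive step ($n \geq 2$), I split on $i$. When $i = 1$, I unfold $\rhd_1(t) = \pcase{t}{a}{\return{a}}{\_}{\fail}$ and take its domain predicate, obtaining $\inlprop{t} = \dom{\rhd_1(t)} = \dom{\top} = \top$; then \Rbetaleft gives $t = \inl{\lft{t}}$, and since $\lft{t} : \mathbf{1}$, \Retaone forces $\lft{t} = *$, so $t = \inl{*} = \nin{1}{n}{*} = 1$. When $i \geq 2$, I want to run the symmetric argument on the right-hand injection: unfold $\rhd_i(t) = \pcase{t}{\_}{\fail}{b}{\rhd_{i-1}^{n-1}(b)}$, establish $\inrprop{t} = \top$, and use the $\rgt{}$-analog of \Rbetaleft (via $\rgt{t} \eqdef \lft{\swapper{t}}$) to write $t = \inr{\rgt{t}}$. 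By \Rbetaplustwo the hypothesis restricts to $\rhd_{i-1}^{n-1}(\rgt{t}) = \rhd_i(t) = \top$, and the inductive hypothesis on $(n-1, i-1)$ yields $\rgt{t} = i-1$, hence $t = \inr{(i-1)} = \nin{i}{n}{*} = i$.

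The main obstacle is establishing $\inrprop{t} = \top$ in the subcase $i \geq 2$: taking the domain of the hypothesis only yields $\pcase{t}{\_}{\bot}{b}{\dom{\rhd_{i-1}^{n-1}(b)}} = \top$, whereas the required predicate $\inrprop{t}$ has a plain $\top$ in the right branch. Bridging these requires the effect-algebra structure of predicates (Proposition~\ref{prop:logic}): since every predicate is bounded above by $\top$ and case analysis is pointwise monotone in its branches, the derived case term sits $\leq \inrprop{t}$, so maximality of $\top$ forces $\inrprop{t} = \top$. An alternative route, which avoids this case-by-case unfolding, is to phrase the argument via a generalized uniqueness principle for $\mathbf{n}$ in the style of \RJMprime: it would suffice to show $\rhd_k(t) = \rhd_k(i)$ for each $k$, which reduces to obtaining $\rhd_k(t) = \bot$ for $k \neq i$ from the partition of unity $\bigovee_{k=1}^n \rhd_k(t) = \top$ together with the effect-algebra fact that $\top \ovee q$ being defined forces $q = \bot$.
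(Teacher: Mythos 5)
Your proof is correct, but it takes a genuinely different route from the paper. The paper gives a single non-inductive argument: it defines an explicit reindexing $f : \mathbf{n} \rightarrow 1 + \mathbf{n-1}$ that moves the $i$th summand into the $\inln$ position, observes that $\inlprop{f(t)} = \rhd_i(t) = \top$ \emph{on the nose} (so\Rbetaleft and\Retaone immediately give $f(t) = \inl{*}$), and then transports back along an explicit inverse $f^{-1}$. Because the reindexing makes the relevant component the literal left injection, the paper never needs any order-theoretic reasoning. Your induction on $n$ instead peels off one summand at a time, and this is exactly why you hit the obstacle you identify in the $i \geq 2$ case: the domain of the hypothesis is $\pcase{t}{\_}{\bot}{b}{\dom{\rhd_{i-1}^{n-1}(b)}} = \top$ rather than $\inrprop{t} = \top$. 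Your bridge is sound and non-circular: monotonicity of $\mathsf{case}$ in its branches follows from Lemma~\ref{lm:caseovee} (since $q(b) \ovee q(b)^\bot = \top$ gives $\inrprop{t} = \top \ovee u$ for some $u$, hence $\top \leq \inrprop{t}$), and combined with $\inrprop{t} \leq \top$ and antisymmetry of $\leq$ --- none of which depend on this lemma --- you get $\inrprop{t} = \top$. The remaining steps ($t = \inr{\rgt{t}}$ via the swapped form of\Rbetaleft, restriction of the hypothesis by\Rbetaplustwo, and the induction hypothesis) all go through. What each approach buys: the paper's is shorter and purely equational, at the cost of writing out $f$ and $f^{-1}$ and checking $f^{-1}(f(x)) = x$; yours avoids the explicit bijection but imports the partial-order machinery on predicates.

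One caution: the ``alternative route'' you sketch at the end is circular as stated. The fact that $\top \ovee q$ well-typed forces $q = \bot$ is the Zero-One Law of Proposition~\ref{prop:logic}, and the paper proves that law \emph{using} the present lemma (via $\rhd_2(b) = \top \Rightarrow b = 2$). Likewise the $n$-ary joint-monicity principle ($s = t$ iff $\rhd_k(s) = \rhd_k(t)$ for all $k$) is only established later, in the proof of Lemma~\ref{lm:instrn}. Since you offer this only as an alternative, it does not affect the correctness of your main argument, but it could not replace it without reorganising the development.
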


\begin{proof}
Define $x : \mathbf{n} \vdash f(x) : 1 + \mathbf{n-1}$ by
\[ f(x) \eqdef \case_{j=1}^n x \of \begin{cases}
\inr{j} & \text{if } j < i \\
\inl{*} & \text{if } j = i \\
\inr{j-i} & \text{if } j > i
\end{cases} \]
Then $\Gamma \vdash \inlprop{f(t)} = \top : \mathbf{2}$, hence
\[ f(t) = \inl{\lft{f(t)}} = \inl{*} \]
We can define an inverse to $f$: given $x : 1 + \mathbf{n-1}$, define
\[ f^{-1}(x) \eqdef \case x \of \inl{\_} \mapsto i \mid \inr{t} \mapsto \case_{j=1}^{n-1} t \of j \text{ if } j < i \mid j + 1 \text{ if } j \geq i \]
Then $x : \mathbf{n} \vdash f^{-1}(f(x)) = x : 1 + \mathbf{n-1}$ and so $\Gamma \vdash t = f^{-1}(f(t)) = f^{-1}(\inl{*}) = i : \mathbf{n}$.
\end{proof}

\subsection{Ordering on Partial Maps and the Partial Sum}
\label{section:psum}
Note that, from the rules\Roveeprime and\Roveedef, we have $\Gamma \vdash s \ovee t : A + 1$ if and only if there exists $\Gamma \vdash b : (A + A) + 1$ such that
\[ \Gamma \vdash b \goesto \rhd_1 = s : A + 1, \qquad \Gamma \vdash b \goesto \rhd_2 = t : A + 1 \enspace , \]
in which case $\Gamma \vdash s \ovee t = \doo{x}{b}{\return{\nabla(x)}} : A + 1$.  We say that such a term $b$ is a \emph{bound} for $s \ovee t$.  By
the rule\RJMprime, this bound is unique if it exists.

\begin{lemma}
For predicates $\Gamma \vdash p, q : \mathbf{2}$, we have that $\Gamma \vdash b : \mathbf{3}$ is a bound for $p \ovee q$ if and only if $\rhd_1(b) = p$ and $\rhd_2(b) = q$.
\end{lemma}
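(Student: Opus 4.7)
The plan is to reduce both directions of the biconditional to a single underlying identity: for every $\Gamma \vdash b : \mathbf{3}$ and each $i \in \{1,2\}$, the terms $b \goesto \rhd_i$ and $\rhd_i(b)$ are provably equal in $\mathbf{2}$. Once that identity is in hand, the statement is immediate, since by the remark preceding the lemma, $b$ is a bound for $p \ovee q$ exactly when $b \goesto \rhd_1 = p$ and $b \goesto \rhd_2 = q$; substituting $\rhd_i(b)$ for $b \goesto \rhd_i$ in each conjunct produces precisely the condition on the right-hand side of the lemma.

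To establish the identity, I would unfold both sides and match them case by case. On the left, $b \goesto \rhd_1$ abbreviates $\doo{x}{b}{\rhd_1(x)}$ with $x : \mathbf{2}$, which is by definition $\pcase{b}{x}{\rhd_1(x)}{\_}{\fail}$, after reading $b : \mathbf{3}$ as an inhabitant of $\mathbf{2}+1$ via the associativity $\mathbf{3} = (1+1)+1$. Here the inner $\rhd_1(x)$ is itself a two-way case on $x : \mathbf{2}$, returning $\top$ on the first branch and $\fail = \bot$ on the second. On the right, the partial-projection definition unfolds $\rhd_1(b)$ into a three-way case on $b : 3 \cdot 1$, returning $\return{*} = \top$ on branch $1$ and $\fail = \bot$ on branches $2$ and $3$. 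Applying\Rbetaplusone and\Rbetaplustwo to drive the inner case through the outer one, both expressions reduce to the same canonical three-branch case on $b$. The argument for $\rhd_2$ is entirely symmetric, yielding $\top$ on branch $2$ and $\bot$ elsewhere.

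The only piece of bookkeeping worth flagging is that the symbol $\rhd_i$ is overloaded between the two sides of the identity: on the left it is the partial projection at $\mathbf{2} = 2 \cdot 1$, applied to the do-bound variable $x$, while on the right it is the partial projection at $\mathbf{3} = 3 \cdot 1$, applied to $b$ itself. Keeping the two arities distinct while expanding the case expressions is essentially the whole content of the proof; there is no conceptual obstacle, only a routine application of the $\beta$-rules for sums.
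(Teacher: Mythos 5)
Your proposal is correct and follows exactly the paper's route: the paper's entire proof is the observation that $b \goesto \rhd_1 = \rhd_1(b)$ and $b \goesto \rhd_2 = \rhd_2(b)$ ``as can be seen just from expanding definitions,'' and you identify and verify precisely that identity (merely spelling out the expansion the paper leaves implicit). The only nitpick is that pushing the inner case through the outer one is really the nested-case reading of $\case_{i=1}^3$ plus \Retaone, rather than an application of \Rbetaplusone/\Rbetaplustwo, but this does not affect the argument.
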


\begin{proof}
  This holds because $b \goesto \rhd_1 = \rhd_1(b)$ and $b \goesto \rhd_2 = \rhd_2(b)$, as can be seen just from expanding definitions.
\end{proof}

The set of \emph{partial} maps $A \rightarrow B + 1$ between any two types $A$ and $B$ form a \emph{partial commutative monoid} (PCM) with least element $\fail$, as shown by the following results.

\begin{lemma}$ $
  \label{lm:ordering}
  \begin{enumerate}
  \item \label{lm:zerolaw} If $\Gamma \vdash t : A + 1$ then $\Gamma \vdash t \ovee \fail = t : A + 1$.
  \item (\textbf{Commutativity}) If $\Gamma \vdash s \ovee t : A + 1$ then $\Gamma \vdash t \ovee s : A + 1$ and $\Gamma \vdash s \ovee t = t \ovee s : A + 1$.
  \item (\textbf{Associativity}) $\Gamma \vdash (r \ovee s) \ovee t : A + 1$ if and only if $\Gamma \vdash r \ovee (s \ovee t) : A + 1$, in which case $\Gamma \vdash r \ovee (s \ovee t) = (r \ovee s) \ovee t : A + 1$. \label{lm:assoc}
  \end{enumerate}
\end{lemma}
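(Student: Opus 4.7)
The plan is to prove each clause by exhibiting an explicit ``bound'' in the sense of \Roveeprime and then unfolding \Roveedef. Uniqueness of bounds via \RJMprime is the key tool for identifying two syntactically distinct witnesses whose projections agree.

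\emph{Part 1.} I would take $b \eqdef \doo{x}{t}{\return{\inl{x}}} : (A + A) + 1$. Using Lemma~\ref{lm:do} together with the $\rhd_i$ definitions and \Rbetaplusone, \Rbetaplustwo, a routine calculation gives $b \goesto \rhd_1 = t$ and $b \goesto \rhd_2 = \fail$, so $b$ bounds $t \ovee \fail$. Unfolding and using $\nabla(\inl{x}) = x$ yields $t \ovee \fail = \doo{x}{t}{\return{x}} = t$.

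\emph{Part 2.} Given a bound $b$ for $s \ovee t$, set $b' \eqdef \doo{x}{b}{\return{\swapper{x}}}$. Case analysis on $A + A$ shows $\rhd_1 \after \swapper = \rhd_2$, $\rhd_2 \after \swapper = \rhd_1$, and $\nabla \after \swapper = \nabla$. Hence $b' \goesto \rhd_1 = t$, $b' \goesto \rhd_2 = s$, so $b'$ bounds $t \ovee s$, and $t \ovee s = \doo{x}{b}{\return{\nabla(\swapper{x})}} = \doo{x}{b}{\return{\nabla(x)}} = s \ovee t$.

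\emph{Part 3.} The strategy is to construct a \emph{ternary} bound $B : ((A+A)+A) + 1$ whose three partial projections return $r$, $s$, $t$ respectively. Given $B$, a bound for $s \ovee t$ is obtained by postcomposing with the map $\inl{\inl{\_}} \mapsto \fail$, $\inl{\inr{x}} \mapsto \return{\inl{x}}$, $\inr{z} \mapsto \return{\inr{z}}$, and a bound for $r \ovee (s \ovee t)$ by the symmetric re-association. Both iterated sums then collapse to $\doo{w}{B}{\return{\nabla'(w)}}$ with $\nabla'$ the iterated codiagonal, yielding the desired equality; the converse implication of \ref{lm:assoc} is entirely symmetric, starting instead from bounds for $s \ovee t$ and $r \ovee (s \ovee t)$.

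The main obstacle is the actual construction of $B$ from a bound $b_1$ for $r \ovee s$ and a bound $b_2$ for $(r \ovee s) \ovee t$ (which come with the compatibility equation $b_2 \goesto \rhd_1 = b_1 \goesto \return{\nabla}$). My proposed candidate refines the $\inl$-branch of $b_2$ using $b_1$ and sends the $\inr$-branch straight across; writing this out requires repeatedly commuting binds past cases via \Rletcase, \Rcasecase and \Rcasepair. Verifying on-the-nose that $B \goesto \rhd_1 = b_1$ and $B \goesto \rhd_2 = t$ — rather than just that the two sides agree marginally — is the delicate bookkeeping step, and this is where I expect \RJMprime to do real work, first to canonicalise the refined $\inl$-branch against $b_1$ using the compatibility equation, and then to identify any two ternary candidates with matching projections.
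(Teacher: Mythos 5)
Parts 1 and 2 are correct and coincide with the paper's proof: the bound for $t \ovee \fail$ is $\doo{x}{t}{\return{\inl{x}}}$ and the bound for $t \ovee s$ is $\doo{x}{b}{\return{\swapper{x}}}$, exactly as you give them.

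For Part 3 your overall architecture --- build a three-way decomposition whose projections are $r$, $s$, $t$, then reassociate and collapse --- is also the paper's. But the step you flag as ``the main obstacle'', constructing the ternary bound $B$ from a bound $b$ for $r \ovee s$ and a bound $c$ for $(r \ovee s) \ovee t$, is the entire content of the proof, and the route you sketch does not go through. You cannot ``refine the $\inl$-branch of $c$ using $b$'': inside that branch you hold only the output value $a : A$ of the $r \ovee s$ component, whereas $b$ is a term in context $\Gamma$, not a function of $a$; there is no way to re-run $b$ on the branch variable, and without contraction you cannot correlate $b$ and $c$ by running both on the same input. The paper's resolution uses the one construct your sketch never invokes: partial pairing. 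It carves $d : (A+1)+1$ out of the \emph{complementary} part of $c$ (sending the $r \ovee s$ slot to $\fail$, the $t$ slot to $\return{\inl{x}}$, and divergence to $\return{\inr{*}}$), proves $\dom{b} = \ker{d}$ (both equal $\dom{(r \ovee s)}$), and only then is licensed to form $e = \inlr{b}{d} : (A+A)+(A+1)$, which is the ternary bound you want. The rule\RJMprime is then used not to ``canonicalise'' $B$ against $b$, but to prove that $c$ itself is recoverable from $e$ --- that is what makes the final collapse equal $(r \ovee s) \ovee t$. Without the $\inlr{\cdot}{\cdot}$ construction and the domain/kernel computation that licenses it, Part 3 remains unproved.
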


\begin{proof}
  \begin{enumerate}
  \item The bound is $\doo{x}{t}{\return{\inl{x}}}$.
\item 
Let $b$ be a bound for $s \ovee t$.  Then $\doo{x}{b}{\return{\swapper{x}}}$ is a bound for $t \ovee s$ and we have
\begin{align*}
t \ovee s & = \doo{y}{(\doo{x}{b}{\return{\swapper{x}}})}{\return{\nabla(y)}} \\
& = \doo{x}{b}{\doo{y}{\return{\swapper{x}}}{\return{\nabla(y)}}} \\
& = \doo{x}{b}{\return{\nabla(\swapper{x})}} = \doo{x}{b}{\return{\nabla(x)}} \\
& = s \ovee t
\end{align*}
\item 
This is proved in Appendix \ref{section:associativity}
\end{enumerate}
\end{proof}

\begin{lemma}
\label{lm:oveeleq}
Let $\Gamma \vdash r : A + 1$ and $\Gamma \vdash s : A + 1$.  Then $\Gamma \vdash r \leq s : A + 1$ if and only if there exists $t$ such that $\Gamma \vdash r \ovee t = s : A + 1$.
\end{lemma}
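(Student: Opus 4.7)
The plan is to exploit the close parallel between the rule \RleqI and the rules \Roveeprime/\Roveedef. Both rules take a witness $b : (A + A) + 1$ and use it to build a judgement; in the ordering rule we require $\doo{x}{b}{\rhd_1(x)} = r$ and $\doo{x}{b}{\return \nabla(x)} = s$, while for $\ovee$ we require $\doo{x}{b}{\rhd_1(x)} = r$ and $\doo{x}{b}{\rhd_2(x)} = t$, and then the value $r \ovee t$ is \emph{defined to be} $\doo{x}{b}{\return \nabla(x)}$. So each side of the biconditional is really asking for the same witness $b$, and $t$ is recovered from $b$ via $b \goesto \rhd_2$.

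For the forward direction, I would assume $\Gamma \vdash r \leq s : A + 1$. By \RleqI this gives a bound $\Gamma \vdash b : (A + A) + 1$ with $\doo{x}{b}{\rhd_1(x)} = r$ and $\doo{x}{b}{\return \nabla(x)} = s$. Then I would take $t \eqdef \doo{x}{b}{\rhd_2(x)}$. The term $b$ is now a bound for $r \ovee t$ in the sense of \Roveeprime, so $r \ovee t$ is well-typed, and by \Roveedef we obtain $r \ovee t = \doo{x}{b}{\return \nabla(x)} = s$.

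For the converse, assume $\Gamma \vdash r \ovee t = s : A + 1$. Since $r \ovee t$ is well-typed, there exists a bound $b : (A + A) + 1$ with $\doo{x}{b}{\rhd_1(x)} = r$ and $\doo{x}{b}{\rhd_2(x)} = t$, and then $s = r \ovee t = \doo{x}{b}{\return \nabla(x)}$ by \Roveedef. Now the same $b$ satisfies precisely the hypotheses of \RleqI for $r \leq s$, giving $\Gamma \vdash r \leq s : A + 1$.

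I do not anticipate a serious obstacle here: each direction is essentially a repackaging of the witness $b$ between the two rules, together with a single application of \Roveedef to identify $\doo{x}{b}{\return \nabla(x)}$ with $r \ovee t$. The only subtle point is checking that the $t$ produced in the forward direction is well-typed, which is immediate from $b : (A + A) + 1$ and the typing of $\rhd_2$.
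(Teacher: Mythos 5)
Your proof is correct and follows essentially the same route as the paper's: in each direction you extract the witness $b$ from one rule (\RleqI or \Roveeprime) by inversion, reuse it as the bound for the other, and identify $s$ with $\doo{x}{b}{\return{\nabla(x)}}$ via \Roveedef. No further comment is needed.
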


\begin{proof}
Suppose $r \leq s$.  If $b$ is such that $\doo{x}{b}{\rhd_1(x)} = r$ and $\doo{x}{b}{\return{\nabla(x)}} = s$ then take $t = \doo{x}{b}{\rhd_2(x)}$.

Conversely, if $r \ovee t = s$, then inverting the derivation of $\Gamma \vdash r \ovee t : A + 1$ we have that there exists $b$ such that $r =
\doo{x}{b}{\rhd_1(x)}$, $t = \doo{x}{b}{\rhd_2(x)}$ and $s = r \ovee t = \doo{x}{b}{\return{\nabla(x)}}$.  Therefore, $r \leq s$ by\RleqI.
\end{proof}

\begin{corollary}
  Let $\Gamma \vdash r : A + 1$ and $\Gamma \vdash s : A + 1$.  Then $\Gamma \vdash r \leq s : A + 1$ if and only if there exists $b$ such that $\Gamma \vdash b : (A + A) + 1$,
$\Gamma \vdash b \goesto \rhd_1 = s : A + 1$, and $\Gamma \vdash \doo{x}{b}{\return{\nabla(x)}} = s : A + 1$.
\end{corollary}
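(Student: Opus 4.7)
The plan is to recognise this corollary as a direct repackaging of Lemma~\ref{lm:oveeleq} together with the definition of $\ovee$ via rules \Roveeprime and \Roveedef, so the proof in both directions is just bookkeeping that makes the bound $b$ for the partial sum explicit.

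For the forward direction, assume $\Gamma \vdash r \leq s : A + 1$. Applying Lemma~\ref{lm:oveeleq} yields a term $t$ with $\Gamma \vdash r \ovee t = s : A + 1$. The typing judgment $\Gamma \vdash r \ovee t : A + 1$ can only have been derived by rule \Roveeprime, so inverting that derivation extracts a term $b : (A + A) + 1$ together with the projection equations that \Roveeprime demands as premises. Rule \Roveedef then identifies $r \ovee t$ with $\doo{x}{b}{\return{\nabla(x)}}$, so the codiagonal equation $\doo{x}{b}{\return{\nabla(x)}} = s$ follows by substitution into $r \ovee t = s$. This $b$ is the required witness.

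For the backward direction, suppose we are given $b : (A + A) + 1$ satisfying the two equations in the statement. Then rule \RleqI applies directly, with $b$ itself playing the role of the bound it quantifies over: its three premises match the typing hypothesis on $b$ and the two equational hypotheses supplied, and its conclusion is $\Gamma \vdash r \leq s : A + 1$.

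The main obstacle is purely notational: one must identify the existential witness $t$ produced by Lemma~\ref{lm:oveeleq} with the second projection of the bound $b$ that appears implicitly inside the typing derivation of $r \ovee t$, so that \Roveeprime, \Roveedef, and \RleqI line up cleanly. No new constructions, computation rules, or auxiliary lemmas are required; the corollary is just a change of perspective that surfaces the bound $b$ as a first-class object of the statement.
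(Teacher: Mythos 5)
Your proposal is correct and matches the argument the paper intends: the corollary is stated without proof precisely because it is the content of the proof of Lemma~\ref{lm:oveeleq} (inversion of the $\ovee$-typing derivation via \Roveeprime{}/\Roveedef{} in one direction, a direct application of \RleqI{} in the other) with the bound $b$ made explicit. The only cosmetic remark is that the first equation in the corollary's statement should read $b \goesto \rhd_1 = r$ rather than $= s$, which is what your construction in fact produces.
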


This term $b$ is called a \emph{bound} for $s \leq t$.

Using this characterisation of the ordering relation, we can read off several properties directly from Lemma \ref{lm:ordering}.

\begin{lemma}
\begin{enumerate}
  \item If $\Gamma \vdash s \ovee t : A + 1$ then $\Gamma \vdash s \leq s \ovee t : A + 1$ and $\Gamma \vdash t \leq s \ovee t : A + 1$. \label{lm:leqovee}
  \item If $\Gamma \vdash t : A + 1$ then $\Gamma \vdash t \leq t : A + 1$.
  \item If $\Gamma \vdash t : A + 1$ then $\Gamma \vdash \fail \leq t : A + 1$.
  \item If $\Gamma \vdash r \leq s : A + 1$ and $\Gamma \vdash s \leq t : A + 1$ then $\Gamma \vdash r \leq t : A + 1$.
  \item If $\Gamma \vdash r \leq s : A + 1$ and $\Gamma \vdash s \ovee t : A + 1$ then $\Gamma \vdash r \ovee t \leq s \ovee t : A + 1$.
\end{enumerate}
\end{lemma}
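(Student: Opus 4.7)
The plan is to reduce every clause to Lemma~\ref{lm:oveeleq}, which characterises $r \leq s$ as the existence of a witness $u$ with $r \ovee u = s$, combined with the partial commutative monoid structure on $A + 1$ from Lemma~\ref{lm:ordering}.

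For parts 1--3 I would simply exhibit explicit witnesses. For part~1, since $s \ovee t$ is already known to be well-typed, the equation $s \ovee t = s \ovee t$ gives $s \leq s \ovee t$ immediately via Lemma~\ref{lm:oveeleq} with witness $t$. For the second half, commutativity (Lemma~\ref{lm:ordering}) yields that $t \ovee s$ is defined and equal to $s \ovee t$, so the witness $s$ delivers $t \leq s \ovee t$. For part~2, the zero law $t \ovee \fail = t$ gives $t \leq t$ with witness $\fail$. For part~3, commutativity together with the zero law gives $\fail \ovee t = t \ovee \fail = t$, so $\fail \leq t$ with witness $t$.

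Part~4 (transitivity) follows by chaining witnesses. Given $r \ovee u = s$ and $s \ovee v = t$ supplied by Lemma~\ref{lm:oveeleq}, substitute to obtain $(r \ovee u) \ovee v = t$, so $(r \ovee u) \ovee v$ is well-typed. By the biconditional form of associativity in Lemma~\ref{lm:ordering}(\ref{lm:assoc}), $r \ovee (u \ovee v)$ is then also well-typed and equal to $t$, so $r \leq t$ with witness $u \ovee v$. For part~5 (monotonicity), given $r \ovee u = s$ and $s \ovee t$ well-typed, substitute to get that $(r \ovee u) \ovee t$ is well-typed. Two applications of associativity and one of commutativity then show that $(r \ovee t) \ovee u$ is well-typed and equal to $s \ovee t$, so $r \ovee t \leq s \ovee t$ with witness $u$.

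The only mild subtlety, and the step I would write out most carefully, is the use of associativity in parts~4 and~5: the clause in Lemma~\ref{lm:ordering}(\ref{lm:assoc}) giving a biconditional on well-typedness (not merely the equation between the two associations when both are defined) is precisely what lets us conclude that the re-bracketed sum is itself well-typed. Without that biconditional form, transitivity and monotonicity would require an extra argument to produce the witness; with it, both proofs are three lines.
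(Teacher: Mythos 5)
Your proof is correct and follows essentially the same route as the paper's: every clause is reduced to the characterisation of $\leq$ in Lemma~\ref{lm:oveeleq} and then discharged using the PCM laws (zero law, commutativity, and the biconditional form of associativity) from Lemma~\ref{lm:ordering}. Your explicit attention to the well-typedness half of the associativity statement is exactly the point the paper's terse citations rely on.
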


\begin{proof}
\begin{enumerate}
\item 
From Lemma \ref{lm:oveeleq} and Commutativity.
\item From Lemma \ref{lm:oveeleq} and Lemma \ref{lm:ordering}.\ref{lm:zerolaw}.
\item From Lemma \ref{lm:oveeleq} and Lemma \ref{lm:ordering}.\ref{lm:zerolaw}.
\item From Lemma \ref{lm:oveeleq} and Associativity.
\item 
Let $r \ovee x = s$.  Then $r \ovee x \ovee t = s \ovee t$ and so $r \ovee t \leq s \ovee t$.
\end{enumerate}
\end{proof}

On the predicates, we have the following structure, which shows that they form an \emph{effect algebra}.  (In fact, they have more structure:
they form an \emph{effect module} over the scalars, as we will prove in Proposition \ref{prop:effmod}.)

\begin{proposition}
\label{prop:logic}
Let $\Gamma \vdash p,q,r : \mathbf{2}$.
  \begin{enumerate}
  \item If $\Gamma \vdash p : \mathbf{2}$ then $\Gamma \vdash p \ovee p^\bot = \top : \mathbf{2}$.
  \item If $\Gamma \vdash p \ovee q = \top : \mathbf{2}$ then $\Gamma \vdash q = p^\bot : \mathbf{2}$.
  \item (\textbf{Zero-One Law}) If $\Gamma \vdash p \ovee \top : \mathbf{2}$ then $\Gamma \vdash p = \bot : \mathbf{2}$.
  \item \label{prop:ortho} $\Gamma \vdash p \ovee q : \mathbf{2}$ if and only if $\Gamma \vdash p \leq q^\bot : \mathbf{2}$.
  \end{enumerate}
\end{proposition}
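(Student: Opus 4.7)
The plan is to treat each of the four parts by constructing or analysing an explicit bound $b : \mathbf{3}$ for the relevant partial sum, using the characterisation just proved that such a $b$ bounds $p \ovee q$ iff $\rhd_1(b) = p$ and $\rhd_2(b) = q$. The core tools will be Lemma \ref{lm:ordering} (commutativity and associativity of $\ovee$), Lemma \ref{lm:oveeleq} (bridging $\leq$ and the existence of a summand), Lemma \ref{lm:kernel} for domain calculations, and Lemma \ref{lm:rhdfin} for the uniqueness of a bound whose projection is $\top$.

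For part 1, I would define the bound $b \eqdef \case p \of \inl{\_} \mapsto 1 \mid \inr{\_} \mapsto 2 : \mathbf{3}$. Then \Rbetaplusone/\Rbetaplustwo together with \Retaone and \Retaplus give $\rhd_1(b) = p$ and $\rhd_2(b) = p^\bot$ by direct reduction, and \Roveedef yields $p \ovee p^\bot = \doo{x}{b}{\return{\nabla(x)}}$, which reduces on each branch of the outer case to $\top$. Part 3 uses the same correspondence in the analysis direction: the bound $b$ for $p \ovee \top$ satisfies $\rhd_2(b) = \top$, so Lemma \ref{lm:rhdfin} forces $b = 2 : \mathbf{3}$, whence $p = \rhd_1(2) = \fail = \bot$.

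Part 2 is the main ingredient for part 4. Given $p \ovee q = \top$ with bound $b$, Lemma \ref{lm:kernel} computes $\dom{b} = \dom{(\doo{x}{b}{\return{\nabla(x)}})} = \dom{\top} = \top$, so by \Rbetaleft we may write $b = \inl{b_0}$ for $b_0 \eqdef \lft{b} : \mathbf{2}$. Unfolding $\rhd_1$ and $\rhd_2$ on $\inl{b_0}$ and applying the $\beta$- and $\eta$-rules for $+$ and $1$ gives $\rhd_1(\inl{b_0}) = b_0$ and $\rhd_2(\inl{b_0}) = b_0^\bot$; hence $p = b_0$ and $q = b_0^\bot = p^\bot$.

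Finally, part 4 follows from parts 1 and 2 by two applications of associativity. For the backward direction, $p \leq q^\bot$ yields $t$ with $p \ovee t = q^\bot$ by Lemma \ref{lm:oveeleq}, so substituting into $q \ovee q^\bot = \top$ (part 1) produces the existing sum $q \ovee (p \ovee t) = \top$; Lemma \ref{lm:ordering}.\ref{lm:assoc} then rebrackets this as $(q \ovee p) \ovee t$, forcing $q \ovee p$ (and hence $p \ovee q$ by commutativity) to exist. For the forward direction, set $s \eqdef p \ovee q$; then $s \ovee s^\bot = \top$ associates as $p \ovee (q \ovee s^\bot) = \top$, part 2 identifies $q \ovee s^\bot = p^\bot$ (so $q \leq p^\bot$), and the symmetric argument (legitimate because $s = q \ovee p$ too) gives $p \leq q^\bot$. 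The main obstacle will be bookkeeping definedness: every invocation of Lemma \ref{lm:ordering}.\ref{lm:assoc} must first check that one side of the rebracketing is known to exist before the other can be concluded to exist.
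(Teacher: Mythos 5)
Your proposal is correct and follows essentially the same route as the paper's own proof: explicit bounds in $\mathbf{3}$ for parts 1--3 (your case-built bound in part 1 is the paper's $\inl{p}$ up to $\eta$-rules, and parts 2 and 3 use the same $\dom{b}=\top$ / Lemma~\ref{lm:rhdfin} arguments), and part 4 via $(p\ovee q)\ovee(p\ovee q)^\bot=\top$ together with associativity, commutativity, part 2, and Lemma~\ref{lm:oveeleq}. Your closing remark about checking one-sided existence before each re-association is exactly the right caution, and the paper's associativity lemma is stated as a biconditional precisely so that this goes through.
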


\begin{proof}
\begin{enumerate}
\item 
The term $\inl{p} : \mathbf{2} + 1$ is a bound for $p \ovee p^\bot$, and $\doo{x}{\inl{p}}{\return{\nabla(x)}} = \top$.
\item 
Let $b$ be a bound for $p \ovee q$.  We have
\begin{align*}
\top & = \doo{x}{b}{\return{\nabla(x)}} = \doo{x}{b}{\top} & \text{using \Retaone} \\
& = \dom{b}
\end{align*}
Therefore, $b = \inl{\lft{b}}$ by\Rbetaleft, and so
\begin{align*}
  p & = \rhd_1(\lft{b}), \qquad q = \rhd_2(\lft{b}) = \rhd_1(\lft{b})^\bot = p^\bot
\end{align*}
\item Let $b$ be a bound for $p \ovee \top$.  Then $\rhd_2(b) = \top$ and so $b = 2 : \mathbf{3}$ by Lemma \ref{lm:rhdfin}.  Therefore, $p = \rhd_1(b) = \bot$.
\item
Suppose $p \ovee q : \mathbf{2}$.  Then $p \ovee q \ovee (p \ovee q)^\bot = \top$, hence $p \ovee (p \ovee q)^\bot = q^\bot$, and
thus $p \leq q^\bot$.

Conversely, if $p \leq q^\bot$, let $p \ovee x = q^\bot$.  Then $\top = q \ovee q^\bot = p \ovee q \ovee x$, and so $p \ovee q : \mathbf{2}$.
\end{enumerate}
\end{proof}

\begin{corollary}
  \begin{enumerate}
  \item (\textbf{Cancellation}) If $\Gamma \vdash p \ovee q = p \ovee r : \mathbf{2}$ then $\Gamma \vdash q = r : \mathbf{2}$.
  \item (\textbf{Positivity}) If $\Gamma \vdash p \ovee q = \bot : \mathbf{2}$ then $\Gamma \vdash p = \bot : \mathbf{2}$ and $\Gamma \vdash q = \bot : \mathbf{2}$.
  \item If $\Gamma \vdash p : \mathbf{2}$ then $\Gamma \vdash p \leq \top : \mathbf{2}$.
  \item If $\Gamma \vdash p \leq q : \mathbf{2}$ then $\Gamma \vdash q^\bot \leq p^\bot : \mathbf{2}$.
  \end{enumerate}
\end{corollary}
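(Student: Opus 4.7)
The plan is to derive all four properties from the effect-algebra structure already established in Proposition \ref{prop:logic} together with the characterisation of $\leq$ via $\ovee$ given in Lemma \ref{lm:oveeleq} and the associativity and commutativity laws of Lemma \ref{lm:ordering}. Each item is standard for effect algebras once those laws are in place.

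For part (3), it is immediate: Proposition \ref{prop:logic}.1 gives $p \ovee p^\bot = \top$, which by Lemma \ref{lm:oveeleq} is exactly the assertion $p \leq \top$. For part (4), assume $p \leq q$. By Lemma \ref{lm:oveeleq}, pick $x$ with $p \ovee x = q$. Then $p \ovee x \ovee q^\bot = q \ovee q^\bot = \top$. By associativity and commutativity (Lemma \ref{lm:ordering}), $(q^\bot \ovee x) \ovee p = \top$, so $q^\bot \ovee x$ is defined; and by Proposition \ref{prop:logic}.2 its value is $p^\bot$. Hence $q^\bot \ovee x = p^\bot$, which by Lemma \ref{lm:oveeleq} gives $q^\bot \leq p^\bot$.

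For part (2), suppose $p \ovee q = \bot$. Since $\bot \ovee \top = \bot \ovee \bot^\bot = \top$ by Proposition \ref{prop:logic}.1, the expression $(p \ovee q) \ovee \top$ is defined and equals $\top$. By associativity (Lemma \ref{lm:ordering}.\ref{lm:assoc}), $p \ovee (q \ovee \top)$ is defined, so in particular $q \ovee \top$ is defined; the Zero-One Law (Proposition \ref{prop:logic}.3) then forces $q = \bot$. Combining associativity with commutativity, the symmetric argument yields $p = \bot$.

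For part (1), the main idea is to use orthosupplements. From $p \ovee q = p \ovee r$, both sides have a well-defined orthosupplement; by Proposition \ref{prop:logic}.4 we have $p \ovee q \leq \top$, so $p \ovee q \ovee (p \ovee q)^\bot = \top$. Reassociating, $q \ovee \bigl(p \ovee (p \ovee q)^\bot\bigr) = \top$, and Proposition \ref{prop:logic}.2 identifies the parenthesised sum with $q^\bot$. The same computation applied to the right-hand side gives $r \ovee \bigl(p \ovee (p \ovee r)^\bot\bigr) = \top$ and $p \ovee (p \ovee r)^\bot = r^\bot$. Since $(p\ovee q)^\bot = (p \ovee r)^\bot$ by the hypothesis and congruence, we conclude $q^\bot = r^\bot$, and involutivity of $(-)^\bot$ (noted in Section \ref{section:logic}) gives $q = r$.

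The only delicate step is the bookkeeping for associativity in parts (1) and (2), since $\ovee$ is only partial; here I rely on Lemma \ref{lm:ordering}.\ref{lm:assoc}, which guarantees that once any one parenthesisation of a sum is defined, all are, with equal value. With that in hand none of the steps involve any new computation beyond rearranging sums.
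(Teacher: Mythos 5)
Your proposal is correct and follows essentially the same route as the paper: part (3) from $p \ovee p^\bot = \top$, part (4) by producing the witness $x \ovee q^\bot$ for $q^\bot \leq p^\bot$, part (2) via associativity and the Zero-One Law, and part (1) by applying the uniqueness of orthosupplements (Proposition \ref{prop:logic}.2) to the sum $q \ovee \bigl(p \ovee (p \ovee q)^\bot\bigr) = \top$. The only cosmetic difference is that in part (1) you pass through $q^\bot = r^\bot$ and involution where the paper reads off $q = r = (p \ovee (p \ovee q)^\bot)^\bot$ directly; the underlying argument is identical.
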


\begin{proof}
  \begin{enumerate}
  \item We have
    \begin{gather*}
      p \ovee q \ovee (p \ovee q)^\bot = p \ovee r \ovee (p \ovee q)^\bot = \top \\
\therefore q = r = (p \ovee (p \ovee q)^\bot)^\bot
    \end{gather*}
  \item 
If $p \ovee q = \bot$ then $p \ovee q \ovee \top : \mathbf{2}$, hence $p \ovee \top : \mathbf{2}$ by Associativity, and so $p = \bot$ by the Zero-One Law.
  \item
We have $p \ovee p^\bot = \top$ and so $p \leq \top$.
\item 
Let $p \ovee x = q$.  Then $\top = q \ovee q^\bot = p \ovee x \ovee q^\bot$, and so $p^\bot = x \ovee q^\bot$.  Thus, $q^\bot \leq p^\bot$.
  \end{enumerate}
\end{proof}

Our next lemma shows how $\ovee$ and $\mathsf{case}$ interact.

\begin{lemma}
\label{lm:caseovee}
Suppose $\Gamma \vdash r : A + B$ and $\Delta, x : A \vdash s \ovee t : C + 1$ and $\Delta, y : B \vdash s' \ovee t' : C + 1$.  Then
\[ \Gamma, \Delta \vdash \begin{array}[t]{l}
\pcase{r}{x}{s \ovee t}{y}{s' \ovee t'} \\
= (\pcase{r}{x}{s}{y}{s'}) \ovee (\pcase{r}{x}{t}{y}{t'}) : C + 1
\end{array}  \]
\end{lemma}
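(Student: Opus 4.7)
The plan is to build an explicit bound for the right-hand side $\ovee$-term and then invoke \Roveedef.

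Since $\Delta, x : A \vdash s \ovee t : C + 1$ is derivable, inversion on \Roveeprime supplies a bound $\Delta, x : A \vdash b : (C + C) + 1$ satisfying $b \goesto \rhd_1 = s$ and $b \goesto \rhd_2 = t$ (and thus $s \ovee t = \doo{z}{b}{\return{\nabla(z)}}$ by \Roveedef). Analogously we get $\Delta, y : B \vdash b' : (C + C) + 1$ with $b' \goesto \rhd_1 = s'$, $b' \goesto \rhd_2 = t'$. Using the typing rule\Rcase, form the candidate bound
\[ \Gamma, \Delta \vdash B \eqdef \pcase{r}{x}{b}{y}{b'} : (C + C) + 1. \]

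The key computation is that $\goesto$ distributes over $\pcase$: for any $f$ with $\Delta', z : D \vdash f(z) : E + 1$,
\[ (\pcase{r}{x}{u}{y}{u'}) \goesto f \;=\; \pcase{r}{x}{u \goesto f}{y}{u' \goesto f}. \]
This follows by unfolding $\goesto$ as a $\doo$-expression, which in turn is a $\mathsf{case}$ with $\inr$-branch $\fail$, and then applying \Rcasecase (the \emph{case-case} rule). Applying this identity with $f \in \{\rhd_1, \rhd_2\}$ yields
\[ B \goesto \rhd_1 = \pcase{r}{x}{b \goesto \rhd_1}{y}{b' \goesto \rhd_1} = \pcase{r}{x}{s}{y}{s'}, \]
and similarly $B \goesto \rhd_2 = \pcase{r}{x}{t}{y}{t'}$. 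Hence $B$ is a bound for $(\pcase{r}{x}{s}{y}{s'}) \ovee (\pcase{r}{x}{t}{y}{t'})$, and by \Roveedef,
\[ (\pcase{r}{x}{s}{y}{s'}) \ovee (\pcase{r}{x}{t}{y}{t'}) = \doo{z}{B}{\return{\nabla(z)}}. \]

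To finish, we reduce the right-hand side using the same distribution law (instantiated at $f(z) = \return{\nabla(z)}$):
\[ \doo{z}{B}{\return{\nabla(z)}} = \pcase{r}{x}{\doo{z}{b}{\return{\nabla(z)}}}{y}{\doo{z}{b'}{\return{\nabla(z)}}} = \pcase{r}{x}{s \ovee t}{y}{s' \ovee t'}, \]
invoking \Roveedef a second time to rewrite the inner $\doo$-terms as $s \ovee t$ and $s' \ovee t'$. Combining the two equalities proves the lemma.

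The main obstacle is the distribution law $(\pcase{r}{x}{u}{y}{u'}) \goesto f = \pcase{r}{x}{u \goesto f}{y}{u' \goesto f}$. It is not quite an instance of \Rletcase (which is stated for a $\mathsf{let}$ over a tensor), so one must unfold $\goesto$ into its underlying $\case$-form and apply \Rcasecase directly; one also has to check that the vacuous $\inr$-branches (both equal to $\fail$) match on the nose after the rewrite, which is immediate. All other steps are routine manipulations with the rules for coproducts.
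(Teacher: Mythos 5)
Your proof is correct and follows essentially the same route as the paper's: both take bounds $b$, $b'$ for the two inner sums, form the candidate bound $\pcase{r}{x}{b}{y}{b'}$, verify it is a bound for the right-hand side, and then push the $\doo$ through the $\mathsf{case}$ to recover the left-hand side. The only difference is that you spell out explicitly (via the distribution of $\goesto$ over $\mathsf{case}$, derived from\Rcasecase) the verification that the paper leaves implicit.
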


\begin{proof}
  Let $b(x)$ be a bound for $s \ovee t$ in $\Delta, x : A$, and $c(y)$ a bound for $s' \ovee t'$ in $\Delta, y : B$.  Then
\[ \pcase{r}{x}{b(x)}{y}{c(y)} : (B + B) + 1 \]
is a bound for $(\pcase{r}{x}{s}{y}{s'}) \ovee (\pcase{r}{x}{t}{y}{t'})$, and so
\begin{align*}
\lefteqn{(\pcase{r}{x}{s}{y}{s'}) \ovee (\pcase{r}{x}{t}{y}{t'})} \\
& = \doo{z}{\pcase{r}{x}{b(x)}{y}{c(y)}}{\return{\nabla(z)}} \\
& = \pcase{r}{x}{\doo{z}{b(x)}{\return{\nabla(z)}}}{y}{\doo{z}{c(y)}{\return{\nabla(z)}}} \\
& = \pcase{r}{x}{s \ovee t}{y}{s' \ovee t'}
\end{align*}
\end{proof}

\begin{corollary}
\label{cor:doovee}
  If $\Gamma \vdash r : A + 1$ and $\Delta, x : A \vdash s \ovee t : B + 1$ then
\[ \Gamma, \Delta \vdash \doo{x}{r}{s \ovee t} = (\doo{x}{r}{s}) \ovee (\doo{x}{r}{t}) : B + 1 \enspace . \]
\end{corollary}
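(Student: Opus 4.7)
The plan is to derive this directly from Lemma \ref{lm:caseovee} by unfolding the definition of the \texttt{do} notation and instantiating the second branch trivially.

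Recall that $\doo{x}{r}{s} \eqdef \pcase{r}{x}{s}{\_}{\fail}$. So I would first rewrite the goal in terms of $\case$: we want to show
\[ \pcase{r}{x}{s \ovee t}{\_}{\fail} = (\pcase{r}{x}{s}{\_}{\fail}) \ovee (\pcase{r}{x}{t}{\_}{\fail}) : B + 1. \]
The left-hand side already fits the shape of the left-hand side of Lemma \ref{lm:caseovee} if I can view the second branch $\fail$ as a partial sum $\fail \ovee \fail$. By Lemma \ref{lm:ordering}.\ref{lm:zerolaw} (the zero law), I have $\fail \ovee \fail = \fail$, so rewriting the second branch yields
\[ \pcase{r}{x}{s \ovee t}{\_}{\fail \ovee \fail}. \]

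Now I can apply Lemma \ref{lm:caseovee} with $s'$ and $t'$ both taken to be $\fail$ (in the appropriate context $\Delta, y : 1$, where the vacuous variable is irrelevant). The hypothesis $\Delta, x : A \vdash s \ovee t : B + 1$ is given, and $\Delta, y : 1 \vdash \fail \ovee \fail : B + 1$ holds since $\fail \ovee \fail = \fail$ is well-typed. Lemma \ref{lm:caseovee} then gives
\[ \pcase{r}{x}{s \ovee t}{\_}{\fail \ovee \fail} = (\pcase{r}{x}{s}{\_}{\fail}) \ovee (\pcase{r}{x}{t}{\_}{\fail}). \]
Folding the definition of \texttt{do} on both sides yields the desired equation. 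No step is really hard; the only thing to be slightly careful about is presenting $\fail$ on the right branch as a trivial partial sum so that Lemma \ref{lm:caseovee} applies syntactically, but this is immediate from the zero law.
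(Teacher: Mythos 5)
Your proposal is correct and matches the paper's own proof: both unfold the $\mathsf{do}$ notation, rewrite the $\fail$ branch as $\fail \ovee \fail$, and apply Lemma \ref{lm:caseovee}. Your justification of $\fail \ovee \fail = \fail$ via Lemma \ref{lm:ordering}.\ref{lm:zerolaw} is a fine way to make the implicit step explicit.
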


\begin{proof}
  \begin{align*}
    \doo{x}{r}{s \ovee t} = & \pcase{r}{x}{s \ovee t}{\_}{\fail \ovee \fail} \\
= & (\pcase{r}{x}{s}{\_}{\fail}) \ovee \\
& (\pcase{r}{x}{t}{\_}{\fail})
  \end{align*}
\end{proof}

The following lemma relates the structures on partial maps and predicates via the domain operator.

\begin{lemma}
  If $\Gamma \vdash s \ovee t : A + 1$ then $\Gamma \vdash \dom{(s \ovee t)} = \dom{s} \ovee \dom{t} : \mathbf{2}$.
\end{lemma}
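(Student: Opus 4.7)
The plan is to unpack the hypothesis into its bound data and then use the do-notation machinery of Lemma~\ref{lm:kernel} and Corollary~\ref{cor:doovee} to shuffle the domain operator past the partial sum.

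\smallskip

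\textbf{Step 1 (unpack the bound).} Inverting\Roveeprime and\Roveedef on the hypothesis $\Gamma \vdash s \ovee t : A + 1$, fix a bound $\Gamma \vdash b : (A + A) + 1$ with
\[ b \goesto \rhd_1 = s, \qquad b \goesto \rhd_2 = t, \qquad s \ovee t = \doo{x}{b}{\return{\nabla(x)}} \enspace . \]

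\textbf{Step 2 (rewrite the three domains as do-expressions over $b$).} Using Lemma~\ref{lm:kernel} (parts 1 and 3), we compute
\begin{align*}
\dom{s} &= \doo{x}{b}{\dom{\rhd_1(x)}}, \\
\dom{t} &= \doo{x}{b}{\dom{\rhd_2(x)}}, \\
\dom{(s \ovee t)} &= \doo{x}{b}{\dom{\return{\nabla(x)}}} = \doo{x}{b}{\top} \enspace .
\end{align*}

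\textbf{Step 3 (pointwise complementarity).} In context $x : A + A$, I would verify that $\dom{\rhd_1(x)} = \inlprop{x}$ and $\dom{\rhd_2(x)} = \inrprop{x}$, which is immediate from the defining case-split of $\rhd_i$ together with $\dom{\return{a}} = \top$ and $\dom{\fail} = \bot$. Since $\inrprop{x} = (\inlprop{x})^\bot$, Proposition~\ref{prop:logic}.1 gives
\[ x : A + A \vdash \dom{\rhd_1(x)} \ovee \dom{\rhd_2(x)} = \top : \mathbf{2} \enspace . \]
In particular this sum is well-typed in the extended context, which is the side condition required to invoke Corollary~\ref{cor:doovee}.

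\smallskip

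\textbf{Step 4 (pull $\ovee$ through the do).} By Corollary~\ref{cor:doovee},
\[ \dom{s} \ovee \dom{t} = (\doo{x}{b}{\dom{\rhd_1(x)}}) \ovee (\doo{x}{b}{\dom{\rhd_2(x)}}) = \doo{x}{b}{\dom{\rhd_1(x)} \ovee \dom{\rhd_2(x)}} \enspace . \]
By Step~3 this equals $\doo{x}{b}{\top}$, which is $\dom{(s \ovee t)}$ by Step~2. Transitivity closes the argument.

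\smallskip

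The only genuinely delicate point is Step~3 --- making sure that $\dom{\rhd_1(x)} \ovee \dom{\rhd_2(x)}$ is actually well-typed in the open context, which is precisely what Corollary~\ref{cor:doovee} needs in order for the ``$\ovee$ commutes with do'' rewrite to apply. Everything else is bookkeeping with the already-established laws for $\doo{}{}{}$ and $\dom{-}$.
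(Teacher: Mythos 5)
Your proof is correct and follows essentially the same route as the paper's: fix a bound $b$, express $\dom{s}$, $\dom{t}$ and $\dom{(s\ovee t)}$ as do-expressions over $b$, observe that $\inlprop{x}\ovee\inrprop{x}=\top$, and pull the $\ovee$ through the do via Corollary~\ref{cor:doovee}. The extra care you take in Step~3 about well-typedness of the pointwise sum is a detail the paper leaves implicit, but it is the same argument.
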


\begin{proof}
  Let $b$ be a bound for $s \ovee t$.  Then
\[ \dom{(s \ovee t)} = \dom{(\doo{x}{b}{\return{\nabla(x)}})} = \doo{x}{b}{\top} = \dom{b} \]
We also have
\begin{align*}
  \dom{s} & = \doo{x}{b}{\inlprop{x}}, \qquad \dom{t} = \doo{x}{b}{\inrprop{x}} \\
\therefore \dom{s} \ovee \dom{t} & = \doo{x}{b}{\inlprop{x} \ovee \inrprop{x}} & (\text{previous part}) \\
& = (\doo{x}{b}{\top}) = \dom{b}
\end{align*}
\end{proof}

Using this, we can conclude several properties about partial maps immediately from the fact that they hold for predicates:

\begin{lemma}
\begin{enumerate}
  \item (\textbf{Restricted Cancellation Law}) If $\Gamma \vdash s \ovee t = t : A + 1$ then $\Gamma \vdash s = \fail : A + 1$.
  \item (\textbf{Positivity}) If $\Gamma \vdash s \ovee t = \fail : A + 1$ then $\Gamma \vdash s = \fail : A + 1$ and $\Gamma \vdash t = \fail : A + 1$.
  \item If $\Gamma \vdash s \leq t : A + 1$ and $\Gamma \vdash t \leq s : A + 1$ then $\Gamma \vdash s = t : A + 1$.
\end{enumerate}
\end{lemma}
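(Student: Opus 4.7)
The plan is to lift each of the three properties from its counterpart already established for predicates, using two ingredients from the preceding material: the lemma that $\dom{(s \ovee t)} = \dom{s} \ovee \dom{t}$, and Lemma~\ref{lm:kernel}.\ref{lm:kernel2}, which says $t = \fail$ if and only if $\dom{t} = \bot$. Since the corollary following Proposition~\ref{prop:logic} already gives Cancellation and Positivity for predicates in $\mathbf{2}$, the domain operator serves as a bridge that transfers those laws to arbitrary partial maps in $A + 1$.

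For part~(1), I would apply $\dom{}$ to both sides of the hypothesis $s \ovee t = t$ to obtain $\dom{s} \ovee \dom{t} = \dom{t}$, rewrite the right-hand side using the zero law as $\dom{t} \ovee \bot$, reorder by commutativity, and invoke Cancellation on predicates to conclude $\dom{s} = \bot$; Lemma~\ref{lm:kernel}.\ref{lm:kernel2} then gives $s = \fail$. Part~(2) is analogous: $s \ovee t = \fail$ forces $\dom{s} \ovee \dom{t} = \bot$, Positivity on predicates yields $\dom{s} = \dom{t} = \bot$, and Lemma~\ref{lm:kernel}.\ref{lm:kernel2} produces both $s = \fail$ and $t = \fail$.

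For part~(3), I would use Lemma~\ref{lm:oveeleq} to extract witnesses $u$ and $v$ with $s \ovee u = t$ and $t \ovee v = s$. Then Associativity (Lemma~\ref{lm:ordering}.\ref{lm:assoc}) guarantees that $s \ovee (u \ovee v)$ is well-typed and equals $(s \ovee u) \ovee v = t \ovee v = s$. Applying part~(1) to this equation gives $u \ovee v = \fail$, and then part~(2) yields $u = \fail$, so $s = s \ovee u = t$.

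The only delicate point is tracking well-definedness of the intermediate partial sums — in particular that $u \ovee v$ is well-typed in part~(3), which is exactly what Associativity delivers. No genuinely new argument is needed beyond these assembly steps; the content of the proof lies in recognising that $\dom{}$ is the right homomorphism to pull the predicate-level laws up to $A + 1$.
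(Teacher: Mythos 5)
Your proposal is correct and follows essentially the same route as the paper: parts (1) and (2) are obtained by applying $\dom{}$, using $\dom{(s \ovee t)} = \dom{s} \ovee \dom{t}$ together with the predicate-level Cancellation/Positivity and Lemma~\ref{lm:kernel}.\ref{lm:kernel2}, and part (3) extracts witnesses via Lemma~\ref{lm:oveeleq} and reduces to parts (1) and (2) through Associativity. Your version is merely a little more explicit about the well-typedness bookkeeping, which the paper leaves implicit.
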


\begin{proof}
\begin{enumerate}
\item 
Suppose $\Gamma \vdash s \ovee t = t : A + 1$.  Then $\Gamma \vdash \dom{(s \ovee t)} = \dom{s} \ovee \dom{t} = \dom{t} : \mathbf{2}$,
and so $\Gamma \vdash \dom{s} = \bot : \mathbf{2}$ and $\Gamma \vdash s = \fail : A + 1$ by Lemma \ref{lm:kernel}.\ref{lm:kernel2}.
\item Suppose $\Gamma \vdash s \ovee t = \fail$.  Then $\dom{(s \ovee
  t)} = \dom{s} \ovee \dom{t} = \bot$, and so $\dom{s} = \bot$ and
$\dom{t} = \bot$.  Therefore, $s = \fail$ and $t = \fail$ by Lemma
\ref{lm:kernel}.\ref{lm:kernel2}.
\item 
Let $s \ovee b = t$ and $t \ovee c = s$.  Then $s \ovee b \ovee c = s$ and so $b \ovee c = \fail$ by the Restricted Cancellation Law, hence $b = c = \fail$ by Positivity.
Thus, $s = s \ovee \fail = t$.
  \end{enumerate}
\end{proof}

Finally, we can show that the partial projections on copowers behave as expected with respect to $\ovee$.

\begin{lemma}
For $t : n \cdot A$,
\[ \rhd_{i_1, \ldots, i_k}(t) = \rhd_{i_1}(t) \ovee \cdots \ovee \rhd_{i_k}(t) \]
\end{lemma}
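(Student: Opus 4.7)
The plan is to proceed by induction on $k$. The base case $k=1$ is trivial: both sides reduce to $\rhd_{i_1}(t)$ under the convention $1 \cdot s = s$ recorded in Section~\ref{section:ordering}. (If $k=0$ is admitted, then by the conventions $\rhd_{\emptyset}(t) = \fail$ and $0 \cdot t = \fail$ both sides are $\fail$.)

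For the inductive step, it suffices to establish the binary splitting
$$\rhd_{i_1, \ldots, i_k}(t) = \rhd_{i_1, \ldots, i_{k-1}}(t) \ovee \rhd_{i_k}(t),$$
since combining this with the inductive hypothesis and associativity of $\ovee$ (Lemma~\ref{lm:ordering}.\ref{lm:assoc}) yields the full statement. To prove the splitting, I would exhibit an explicit bound of type $(A+A)+1$, namely
$$b \;\eqdef\; \case_{i=1}^{n}\ t \of \nin{i}{n}{x} \mapsto \begin{cases} \return{\inl{x}} & \text{if } i \in \{i_1, \ldots, i_{k-1}\} \\ \return{\inr{x}} & \text{if } i = i_k \\ \fail & \text{otherwise.} \end{cases}$$
One then checks branchwise, using \Rbetaplusone, \Rbetaplustwo and the do-laws of Lemma~\ref{lm:do}, the three identities $b \goesto \rhd_1 = \rhd_{i_1, \ldots, i_{k-1}}(t)$, $b \goesto \rhd_2 = \rhd_{i_k}(t)$, and $\doo{x}{b}{\return{\nabla(x)}} = \rhd_{i_1, \ldots, i_k}(t)$. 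Indeed, on an index $i \in \{i_1, \ldots, i_{k-1}\}$ the branch $\return{\inl{x}}$ gives $\rhd_1(\inl{x}) = \return{x}$, $\rhd_2(\inl{x}) = \fail$ and $\nabla(\inl{x}) = x$; symmetrically for $i = i_k$; and on every other index the $\fail$-branch propagates to $\fail$ under each of the three operations. The first two identities witness that $b$ is a bound, and the third is then exactly the characterisation of $\ovee$ supplied by \Roveedef.

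There is no serious obstacle: the argument is entirely by constructing the correct witness $b$ and then carrying out a mechanical case-by-case reduction. The only mildly delicate point is that the indices $i_1, \ldots, i_k$ should be assumed distinct so that each branch of $b$ falls into exactly one of the three clauses; this matches the intended reading of the partial projection $\rhd_{i_1 \cdots i_k}^n$ as selecting a set of $k$ coordinates out of $n$.
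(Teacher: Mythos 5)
Your proposal is correct and follows essentially the same route as the paper: induction on $k$, exhibiting an explicit bound term built by a $\case$ over $t$ that sends the first group of indices to the left injection, the split-off index to the right injection, and all other indices to $\fail$, then verifying the three defining equations of \Roveedef{} branchwise. The only differences are cosmetic — you split off $i_k$ rather than appending $i_{k+1}$, you make the appeal to associativity explicit, and you write the bound's branches out in full where the paper abbreviates them.
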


\begin{proof}
The proof is by induction on $k$.  Take \[ b = \case_{i=1}^n t \of \nin{i}{n}{\_} \mapsto \begin{cases} 1 & \text{if } i = i_1, \ldots, i_k\\
2 & \text{if } i = i_{k+1} \\
3 & \text{otherwise}
\end{cases}\]
Then $\rhd_1(b) = \rhd_{i_1\cdots i_k}(t)$, $\rhd_2(b) = \rhd_{i_{k+1}}(t)$, and $\rhd_{12}(b) = \rhd_{i_1\cdots i_ki_{k+1}}(t)$.  Therefore,
\[ \rhd_{i_1i_2\cdots i_{k+1}}(t) = \rhd_{i_1\cdots i_k}(t) \ovee \rhd_{i_{k+1}}(t) = \rhd_{i_1}(t) \ovee \cdots \ovee \rhd_{i_{k+1}}(t) \]
by the induction hypothesis.
\end{proof}

\subsubsection{Assert Maps}
\label{section:assert}

Recall that, for $x : A \vdash p : \mathbf{2}$ and $\Gamma \vdash t : A$, we define $\Gamma \vdash \assert_{\lambda x p}(t) \eqdef \rhd_1(\instr_{\lambda x p}(t)) : A + 1$.

This operation $\assert$ forms a bijection between:
\begin{itemize}
\item
the terms $p$ such that $x : A \vdash p : \mathbf{2}$ (the predicates on $A$); and
\item 
the terms $t$ such that $x : A \vdash t \leq \return{x} : A + 1$
\end{itemize}

This is proven by the following result.

\begin{lemma}
\label{lm:assert}
If $x : A \vdash p : 1 + 1$ and $\Gamma \vdash t : A$, then
\begin{enumerate}
\item 
$\Gamma \vdash \assert_{\lambda x p}(t) : A + 1$
\item 
$\Gamma \vdash \assert_{\lambda x p}(t) \leq \inl{t} : A + 1$.
\item \textbf{\Rassertdown}
\label{lm:assertdown}
$\Gamma \vdash \dom{\assert_{\lambda x p}(t)} = [t/x] p : \mathbf{2}$
\item
If $x : A \vdash t \leq \inl{x} : A + 1$ then $x : A \vdash t = \assert_{\lambda x (\dom{t})}(x) : A + 1$.
\end{enumerate}
\end{lemma}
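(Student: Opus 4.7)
The four parts of the lemma decompose nicely, and only the last one requires real work. Throughout, let me write $J(t)$ as shorthand for $\instr_{\lambda x p}(t)$.

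\textbf{Part 1.} By \Rinstr we have $\Gamma \vdash J(t) : 2 \cdot A$, i.e.\ $\Gamma \vdash J(t) : A + A$. The partial projection $\rhd_1$ applied to a term of type $A + A$ yields a term of type $A + 1$ (this is unfolded via a $\mathsf{case}$ expression whose branches are $\return{y}$ and $\fail$), so $\Gamma \vdash \assert_{\lambda x p}(t) : A + 1$.

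\textbf{Part 2.} The plan is to exhibit a bound witnessing the inequality. Take $b \eqdef \return{J(t)} = \inl{J(t)} : (A + A) + 1$. Then $b \goesto \rhd_1$ reduces, using \Rbetaplusone for the $\doo{-}{-}{-}$ expansion, to $\rhd_1(J(t)) = \assert_{\lambda x p}(t)$; and $b \goesto \return \circ \nabla$ reduces, similarly, to $\return{\nabla(J(t))}$, which equals $\return{t}$ by \Rnablainstr. Hence rule \RleqI gives $\assert_{\lambda x p}(t) \leq \return{t} = \inl{t}$.

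\textbf{Part 3.} This is a direct computation. Unfolding definitions, $\dom{\assert_{\lambda x p}(t)} = \inlprop{\rhd_1(J(t))}$. Distributing $\inlprop{-}$ through the $\mathsf{case}$ that defines $\rhd_1$ (using \Rcasesub), the branches become $\inlprop{\return{y}} = \top$ and $\inlprop{\fail} = \bot$, so the result equals
\[ \case_{i=1}^2 J(t) \of \nin{i}{2}{\_} \mapsto i \;=\; \ind{J(t)}, \]
which is $p[x:=t]$ by rule \Rinstrtest.

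\textbf{Part 4.} This is the main obstacle, and the strategy is to reconstruct $t$ as the $\rhd_1$ of an instrument using \Retainstr. From $x : A \vdash t \leq \inl{x} : A+1$ and rule \RleqI we obtain a bound $x : A \vdash b : (A + A) + 1$ satisfying $b \goesto \rhd_1 = t$ and $b \goesto \return \circ \nabla = \inl{x}$. Computing domains with the lemma proven earlier that $\dom{(\doo{y}{b}{\return{\nabla(y)}})} = \doo{y}{b}{\top} = \dom{b}$, we get $\dom{b} = \dom{\inl{x}} = \top$, so by \Rbetaleft we may write $b = \inl{b'}$ with $b' \eqdef \lft{b} : 2 \cdot A$. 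Reducing the two defining equations of $b$ with this $\inl{b'}$ then yields $\nabla(b') = x$ and $\rhd_1(b') = t$. Repeating the calculation of Part 3 with $b'$ in place of $J(t)$ shows
\[ \dom{t} = \dom{\rhd_1(b')} = \case_{i=1}^2 b' \of \nin{i}{2}{\_} \mapsto i. \]
Now apply \Retainstr to $r := b'$ (whose hypothesis $\nabla(b') = x$ is exactly what we established) and $s := x$, obtaining $\instr_{\lambda x.\, \dom{t}}(x) = b'[x:=x] = b'$. Finally, applying $\rhd_1$ to both sides gives $\assert_{\lambda x (\dom{t})}(x) = \rhd_1(b') = t$, as required. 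The delicate point is the bookkeeping around the free variable $x$: the bound $b$ lives in the context $x : A$, and the same $x$ plays the role of both the binder for $\dom{t}$ and the substituted term in the application of \Retainstr, so one must be careful that this is indeed the correct instance of the rule.
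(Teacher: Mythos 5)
Your proof is correct and follows essentially the same route as the paper's: part 2 uses the same bound $\inl{\instr_{\lambda x p}(t)}$, part 3 is the same unfolding to $\ind{\instr_{\lambda x p}(t)}$ followed by \Rinstrtest, and part 4 extracts $\lft{b}$ from a bound for $t \leq \inl{x}$ and applies \Retainstr exactly as the paper does. Your explicit remark on the variable $x$ doing double duty in the \Retainstr instance is a detail the paper leaves implicit, but the argument is the same.
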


\begin{proof}
\begin{enumerate}
\item 
An easy application of the rules\Rinstr,\Rcase,\Rinl,\Rinr and\Runit.
\item 
The term $\inl{\instr_{\lambda x p}(t)}$ is a bound for this inequality.
\item 
\begin{align*}
\dom{\assert_{\lambda x p}(t)} & \eqdef \dom{\rhd_1(\instr_{\lambda x p}(t))} = \inlprop{\instr_{\lambda x p}(t)} \\
& = p[x:=t] & \text{by\Rinstrtest}
\end{align*}
\item
Let $b$ be a bound for the inequality $t \leq \inl{x}$, so $(b \goesto \rhd_1) = t$ and $\doo{x}{b}{\return{\nabla(x)}} = \inl{x}$.
Then
\[ \dom{b} = \dom{(\doo{x}{b}{\return{\nabla(x)}})} = \dom{\inl{x}} = \top . \]
Hence we can define $c = \lft{b}$.  We therefore have $\rhd_1(c) = t$ and $\nabla(c) = x$.
Now, the rule\Retainstr gives us
\begin{gather*}
c = \instr_{\lambda x \inlprop{c}}(x) = \instr_{\dom{\lambda x t}}(x) \\
\therefore t = \rhd_1(c) = \assert_{\dom{\lambda x t}}(x)
\end{gather*}
\end{enumerate}
\end{proof}

We now give rules for calculating $\instr_{\lambda x p}$ and $\assert_{\lambda x p}$ directed by the type.

\begin{lemma}[\Rassertscalar]
If $\vdash s : \mathbf{2}$ then
\[ \vdash \assert_{\lambda \_ s}(*) = \instr_{\lambda \_ s}(*) = s : \mathbf{2} \]
\end{lemma}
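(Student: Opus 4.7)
The plan is to handle the two equalities separately: first establish $\vdash \instr_{\lambda \_.\, s}(*) = s : \mathbf{2}$, and then deduce $\vdash \assert_{\lambda \_.\, s}(*) = s : \mathbf{2}$ from it by expanding the definition $\assert_{\lambda \_.\, s}(*) \eqdef \rhd_1(\instr_{\lambda \_.\, s}(*))$.

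For the first equality I would instantiate \Retainstr at $A = 1$, $n = 2$, and $r = s$ (recalling that $\mathbf{2} = 2 \cdot 1$), with the argument to $\instr$ taken to be $* : 1$. The two side-conditions are $x : 1 \vdash s : \mathbf{2}$, which is immediate from the given $\vdash s : \mathbf{2}$ by weakening, and $x : 1 \vdash \nabla(s) = x : 1$, which follows from two applications of \Retaone (every inhabitant of $1$ is equal to $*$). The rule then delivers
\[
\vdash \instr_{\lambda x.\, \ind{s}}(*) = s[x:=*] = s : \mathbf{2} \enspace ,
\]
the last equality because $s$ is closed. To reach the desired conclusion I still need $\ind{s} = s$ for every $s : \mathbf{2}$: unfolding gives $\ind{s} = \case s \of \inl{u} \mapsto \inl{*} \mid \inr{v} \mapsto \inr{*}$ with $u,v : 1$, and applying \Retaone inside each branch to rewrite $*$ as the local bound variable, followed by \Retaplus, collapses the expression to $s$. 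A single application of \Rinstreq then transfers this identification under the $\instr$ binder and produces $\vdash \instr_{\lambda \_.\, s}(*) = s : \mathbf{2}$.

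For the $\assert$ equality, expand by definition to obtain $\assert_{\lambda \_.\, s}(*) = \rhd_1(\instr_{\lambda \_.\, s}(*)) = \rhd_1(s)$ by the first part. Unfolding the partial projection yields $\case s \of \inl{u} \mapsto \inl{u} \mid \inr{v} \mapsto \inr{*}$, and one use of \Retaone in the right branch followed by \Retaplus again collapses the expression to $s$. The only step throughout that is not a direct one-rule application is the auxiliary identity $\ind{s} = s$ on $\mathbf{2}$; it is a small congruence between a case analysis and its $\eta$-form, and it relies essentially on the unit-type $\eta$-rule \Retaone in order to identify the bound variable with $*$ inside each branch. Everything else is routine unfolding and substitution.
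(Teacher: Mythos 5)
Your proposal is correct and follows essentially the same route as the paper: both hinge on \Retainstr with the side conditions $\nabla(s) = *$ (via \Retaone) and $\ind{s} = s$ (via \Retaone and \Retaplus, what the paper phrases as $\dom{s} = s$). You additionally spell out the $\assert$/$\instr$ identification by showing $\rhd_1$ acts as the identity on $\mathbf{2}$, a detail the paper leaves implicit.
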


\begin{proof}
We have $\nabla(s) = *$ by\Retaone and $\dom{s} = s$ by\Retaplus.  The
result follows by\Retainstr.
\end{proof}

\begin{lemma}[\Rinstrplus,\Rassertplus]
  If $x : A + B \vdash p : \mathbf{2}$ and $\Gamma \vdash t : A + B$ then
\begin{align*}
\Gamma \vdash \instr_{\lambda x p}(t) = \case t \of
& \inl{y} \mapsto (\inln + \inln)(\instr_{\lambda a. p[x:=\inl{a}]}(y)) \mid \\
& \inr{z} \mapsto (\inrn + \inrn)(\instr_{\lambda b. p[x:=\inr{b}]}(z)) \\
\Gamma \vdash \assert_{\lambda x p}(t) = \case t \of
& \inl{y} \mapsto \doo{w}{\assert_{\lambda a. p[x:=\inl{a}]}(y)}{\return{\inl{w}}} \mid \\
& \inr{z} \mapsto \doo{w}{\assert_{\lambda b.p[x:=\inr{b}]}(z)}{\return{\inr{w}}}
\end{align*}
where $(\inln + \inln)(t) \eqdef \pcase{t}{x}{\inl{x}}{y}{\inl{y}}$, and $(\inrn + \inrn)(t)$ is defined similarly.
\end{lemma}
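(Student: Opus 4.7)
The plan is to derive both equations by the \emph{uniqueness} of instruments: the rule \Retainstr tells us that any term $r$ in $x : A \vdash r : n \cdot A$ with $\nabla(r) = x$ is necessarily of the form $\instr_{\lambda x.\ind{r}}(x)$. So to prove the first equation it suffices to exhibit the right-hand side as such a term whose index is $p$ and whose codiagonal is $x$. Let
\[ x : A + B \vdash r(x) \eqdef \case x \of \inl{y} \mapsto (\inln+\inln)(\instr_{\lambda a. p[x:=\inl{a}]}(y)) \mid \inr{z} \mapsto (\inrn+\inrn)(\instr_{\lambda b. p[x:=\inr{b}]}(z)) : (A+B) + (A+B). \]

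First I would verify $x : A+B \vdash \nabla(r(x)) = x : A + B$. Pushing $\nabla$ inside the outer $\case$ (via Lemma~\ref{lm:sub}.\ref{lm:casesub}) gives two branches. In the left branch, $\nabla((\inln+\inln)(\instr_{\lambda a. p[x:=\inl{a}]}(y)))$ simplifies: by the definition of $(\inln+\inln)$ and the rule \Rcasepair/\Rcasesub we push $\nabla$ through to obtain $\inl{\nabla(\instr_{\lambda a.p[x:=\inl{a}]}(y))}$, and by \Rnablainstr this equals $\inl{y}$. Similarly the right branch reduces to $\inr{z}$, and then \Retaplus collapses the whole expression to $x$.

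Next I would check $x : A+B \vdash \ind{r(x)} = p : \mathbf{2}$. Again pushing $\ind$ through $\case$ by Lemma~\ref{lm:sub}.\ref{lm:casesub}, each branch becomes $\ind$ applied to a relabelled instrument. Here I use the \Rinstrtest rule: $\ind{\instr_{\lambda a.p[x:=\inl{a}]}(y)} = p[x:=\inl{y}]$, and correspondingly on the right. One subtlety is that the outer $(\inln+\inln)$ wrapper changes the copower on which $\ind$ is computed, but since $(\inln+\inln)$ acts trivially on the index (it re-embeds components without permuting copies), $\ind$ commutes with it. Thus the two branches contribute $p[x:=\inl{y}]$ and $p[x:=\inr{z}]$ respectively, and one more application of \Retaplus gives $p[x:=x] = p$.

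With $\nabla(r(x)) = x$ and $\ind{r(x)} = p$ in hand, the rule \Retainstr applied with $s := t$ yields $\instr_{\lambda x p}(t) = r(t)$, which is exactly the first equation. For the second equation, I would simply apply $\rhd_1$ to both sides, use the fact that $\rhd_1$ commutes with $\case$ (Lemma~\ref{lm:sub}.\ref{lm:casesub}), compute $\rhd_1((\inln+\inln)(s)) = \doo{w}{\rhd_1(s)}{\return{\inl{w}}}$ by unfolding and applying $\beta+$ twice, and recall that $\assert_{\lambda x p}(t) \eqdef \rhd_1(\instr_{\lambda x p}(t))$. The main obstacle I anticipate is purely notational: making the types of the embedded $(\inln+\inln)$ wrapper precise so the substitution and case-commutation steps are unambiguous; once those are fixed, each simplification is a direct application of \Rcasesub, \Rnablainstr, \Rinstrtest, \Rbetaplusone, \Rbetaplustwo, and \Retaplus.
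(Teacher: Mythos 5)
Your proposal is correct and follows essentially the same route as the paper: define the right-hand side as a term $f(x)$, verify $\nabla(f(x)) = x$ and that its index/domain predicate equals $p$ (via \Rnablainstr, \Rinstrtest, case-commutation and \Retaplus), and conclude by the uniqueness rule \Retainstr. Your explicit derivation of the assert equation by applying $\rhd_1$ to both sides is a detail the paper leaves implicit, but it is the intended argument.
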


\begin{proof}
For $x : A + B$, let us write $f(x)$ for
\begin{align*}
f(x) \eqdef
\case x \of
& \inl{y} \mapsto (\inln + \inln)(\instr_{\lambda a.p[\inl{a}]}(y)) \mid \\
& \inr{z} \mapsto (\inrn + \inrn)(\instr_{\lambda b.p[\inr{b}]}(z))
\end{align*}
We shall prove $f(x) = \instr_{\lambda x p}(x)$.

We have 
\begin{align*}
\nabla(f(x)) & = \case x \of \begin{array}[t]{l} \inl{y} \mapsto
\inl{\nabla(\assert_{\lambda a.p[x:=\inl{a}]}(y))} \mid \\
\inr{z} \mapsto \inr{\nabla(\assert_{\lambda b.p[\inr{b}]}(z))} 
\end{array} \\
& = \pcase{x}{y}{\inl{y}}{z}{\inr{z}} \\
& = x & \text{by \Retaplus} \\
  \dom{f(x)} & = \pcase{x}{y}{\dom{\instr_{\lambda a.p[x:=\inl{a}]}(y)}}{z}{\instr_{\lambda b.p[\inr{b}]}(z)} \\
& = \pcase{x}{y}{p[x:=\inl{y}]}{z}{p[x:=\inr{z}]} \\
& = p & \text{by Corollary \ref{cor:vacsub}.\ref{cor:vaccase}}
\end{align*}
Hence $f(x) = \instr_p(x)$ by\Retainstr.
\end{proof}

\begin{corollary}[\Rinstrm,\Rassertm]
\label{cor:assertn}
\begin{enumerate}
\item 
Given $x : \mathbf{m} \vdash t : \mathbf{n}$ and $\Gamma \vdash s : \mathbf{m}$,
\[ \instr_{\lambda x t}(s) = \case_{i=1}^m\ s \of i \mapsto \case_{j=1}^n\ t[x:=i] \of j \mapsto \nin{j}{n}{i} \enspace . \]
\item 
Given $x : \mathbf{n} \vdash p : \mathbf{2}$ and $\Gamma \vdash t : \mathbf{n}$,
\[ \assert_p(t) = \case_{i=1}^n t \of i \mapsto \cond{p[x:=i]}{\return{i}}{\fail} \enspace . \]
\end{enumerate}
\end{corollary}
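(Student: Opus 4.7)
The plan is to derive both parts by induction on $m$ (respectively $n$), viewing $\mathbf{m}$ as the iterated coproduct $1+(1+(\cdots))$ and applying the rules \Rinstrplus and \Rassertplus of the preceding lemma at each inductive step. For part~1, the base case $m = 1$ has $s : 1$, so by \Retaone we may replace $s$ by $*$ and the outer case trivialises; what remains is to show $\instr_{\lambda x\, t}(*) = \case_{j=1}^n t[x:=*] \of j \mapsto \nin{j}{n}{*}$. I would obtain this by applying \Retainstr to the candidate $r = \case_{j=1}^n t[x:=*] \of j \mapsto \nin{j}{n}{*}$, checking $\nabla(r) = *$ using Corollary~\ref{cor:vacsub} together with \Retaone, and $\ind{r} = t[x:=*]$ by \Rcasecase followed by $\beta$-reduction on the $n$-ary coproduct.

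For the inductive step $m \geq 2$, I would decompose $\mathbf{m}$ as $1 + \mathbf{m-1}$ and apply \Rinstrplus, splitting $\instr_{\lambda x\, t}(s)$ into an outer case on $s$. The left branch ($y : 1$) reduces via the base case, and the right branch ($z : \mathbf{m-1}$) via the induction hypothesis. Two further applications of \Rcasecase then commute the inner case on $t[x:=\cdot]$ outward, while the wrappers $(\inln+\inln)$ and $(\inrn+\inrn)$ of \Rinstrplus convert the $\nin{j}{n}{y}$ and $\nin{j}{n}{i}$ of the subterms into $\nin{j}{n}{\inl{y}}$ and $\nin{j}{n}{\inr{i}}$, which after using \Retaone (to identify $\inl{y}$ with the first canonical element of $\mathbf{m}$) and a shift of indices are exactly the $\nin{j}{n}{i}$ produced by $\case_{i=1}^m s \of i \mapsto \cdots$.

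Part~2 then follows by specialising part~1 with $n = 2$, unfolding $\assert_{\lambda x\, p}(t) = \rhd_1(\instr_{\lambda x\, p}(t))$, and pushing $\rhd_1$ inside the inner case via \Rcasecase, using $\rhd_1(\nin{1}{2}{i}) = \return{i}$ and $\rhd_1(\nin{2}{2}{i}) = \fail$; the resulting branch $\pcase{p[x:=i]}{\_}{\return{i}}{\_}{\fail}$ is precisely the intended unfolding of $\cond{p[x:=i]}{\return{i}}{\fail}$ (whose branches do not depend on the measurement variable). The main obstacle is essentially notational bookkeeping: the binary decomposition $\mathbf{m} = 1 + \mathbf{m-1}$ has to be shown to enumerate the same canonical elements $i = \nin{i}{m}{*}$ as the unary $\case_{i=1}^m$, and the nested injections generated by \Rinstrplus must be collapsed to a single leaf $\nin{j}{n}{i}$; once these correspondences are in place, the remaining equalities are routine applications of \Rcasecase, Lemma~\ref{lm:sub}, and the $\beta$/$\eta$-rules for coproducts.
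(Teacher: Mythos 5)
Your proof is correct and is essentially the argument the paper intends: the corollary carries no proof there precisely because it is the iteration of \Rinstrplus and \Rassertplus over the decomposition $\mathbf{m} = 1 + (1 + \cdots)$, with the base case discharged via \Retaone and \Retainstr and part~2 obtained by specialising to $n = 2$ and pushing $\rhd_1$ through the cases --- which is exactly what you do. The index-shifting and collapsing of nested injections that you flag as bookkeeping is indeed all that remains.
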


\subsection{Sequential Product}

We do not have conjunction or disjunction in our language for predicates over the same type, as this would involve duplicating variables.  However, we do have
the following \emph{sequential product}.
(This was called the `and-then' test operator in Section 9 in \cite{Jacobs14}.)

Let $x : A \vdash p,q : \mathbf{2}$.  We define the \emph{sequential product} $p \andthen q$ by
\[ x : A \vdash p \andthen q \eqdef \doo{x}{\assert_{\lambda x p}(x)}{q} : \mathbf{2} \enspace . \]

\begin{proposition}$ $
\label{prop:testops}
\label{prop:effmod}
Let $x : A \vdash p,q : \mathbf{2}$.
\begin{enumerate}
\item $\instr_{p \andthen q}(x) = \pcase{\instr_p(x)}{x}{\instr_q(x)}{y}{\inr{y}}$
\item $\assert_{p \andthen q}(x) = \doo{x}{\assert_p(x)}{\assert_q(x)} \eqdef \assert_p(x) \goesto \assert_q$ \label{prop:assertand}
  \item \label{prop:odotcomm} (\textbf{Commutativity})
$p \andthen q = q \andthen p$.
  \item $(p \ovee q) \andthen r = p \andthen r \ovee q \andthen r$ and $p \andthen (q \ovee r) = p \andthen q \ovee p \andthen r$.
  \item $p \andthen \bot = \bot \andthen q = \bot$
  \item $p \andthen \top = p$ and $\top \andthen q = q$
  \item $p \andthen (q \andthen r) = (p \andthen q) \andthen r$
  \item Let $x : A \vdash p : \mathbf{2}$.  If $x$ does not occur in $q$, then $p \andthen q = \pcase{p}{\_}{q}{\_}{\bot}$.
\end{enumerate}
\end{proposition}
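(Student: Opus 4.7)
The plan is to prove parts 1 and 2 first, since together they give a workable explicit description of $\instr$ and $\assert$ applied to $p \andthen q$; every other part then reduces to straightforward manipulation of the $\doo{}$-notation, $\ovee$, and the predicate--$\assert$ bijection of Lemma \ref{lm:assert}.

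For part 1, I would appeal to the characterisation\Retainstr: it suffices to check that the candidate term $t := \pcase{\instr_p(x)}{x}{\instr_q(x)}{y}{\inr{y}}$ satisfies $\nabla(t) = x$ and $\ind(t) = p \andthen q$. Both calculations proceed by pushing the relevant operator through the outer case via Lemma \ref{lm:sub}.\ref{lm:casesub}; $\nabla$ of the branches reduces to $\nabla(\instr_p(x)) = x$ by\Rnablainstr, while $\ind$ of the branches gives $\pcase{\instr_p(x)}{x}{q}{y}{\bot}$ by\Rinstrtest together with the direct computation $\ind(\inr{y}) = \bot$. This last expression coincides with the expansion of $p \andthen q = \doo{x}{\assert_p(x)}{q}$ once $\assert_p(x)$ is unfolded to $\rhd_1(\instr_p(x))$ and the nested cases are rearranged by\Rcasecase. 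Part 2 then drops out by applying $\rhd_1$ to both sides of part 1 and using $\rhd_1(\instr_q(x)) = \assert_q(x)$ together with $\rhd_1(\inr{y}) = \fail$.

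The remaining parts fall into place using part 2 as the main lever. Part 3 is the chain $\assert_{p \andthen q}(x) = \assert_p(x) \goesto \assert_q = \assert_q(x) \goesto \assert_p = \assert_{q \andthen p}(x)$, whose middle equation is exactly the rule\Rcomm; taking $\dom{-}$ of both ends and applying\Rassertdown yields $p \andthen q = q \andthen p$. Part 7 is parallel, combining part 2 with the associativity law of $\doo{}$ from Lemma \ref{lm:do}. Part 4 (right factor) is immediate from Corollary \ref{cor:doovee}, and the left-factor version follows by commutativity. Parts 5 and 6 follow from the identities $\assert_\bot(x) = \fail$ and $\assert_\top(x) = \return{x}$---both verified via Lemma \ref{lm:assert}.4 on the obvious witnesses---together with the do-laws of Lemma \ref{lm:do}, using that $\bot$ and $\fail$ are the same term $\inr{*}$ at type $\mathbf{2}$.

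The main obstacle is part 8, because it does not reduce to part 2. My strategy is first to establish the explicit form $\assert_{\lambda x p}(x) = \pcase{p}{\_}{\return{x}}{\_}{\fail}$ for any predicate $x : A \vdash p : \mathbf{2}$: the right-hand side has domain $p$ (by distributing $\dom{-}$ through the case and applying\Retaplus), and it lies below $\return{x}$ in context $x : A$ via the bound $\pcase{p}{\_}{\inl{\inl{x}}}{\_}{\inl{\inr{x}}} : (A+A)+1$, after which Lemma \ref{lm:assert}.4 identifies it with $\assert_p(x)$. Substituting this into the definition of $p \andthen q$ (whose bound variable is vacuous since $x \notin \mathrm{FV}(q)$) and using\Rcasecase to interchange the outer case with the inner case on $p$, the $\inl$-branch collapses to $q$ by\Rbetaplusone and the $\inr$-branch to $\fail = \bot$ by\Rbetaplustwo, leaving $\pcase{p}{\_}{q}{\_}{\bot}$ as required.
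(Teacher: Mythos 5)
Parts 1--7 of your proposal follow essentially the same route as the paper: part 1 by verifying the two characterising equations and invoking \Retainstr, part 2 by post-composing with $\rhd_1$, part 3 from part 2 and \Rcomm{} followed by taking domains, part 4 from Corollary \ref{cor:doovee} plus commutativity, parts 5--6 from the do-laws, and part 7 from part 2 together with the associativity law of Lemma \ref{lm:do}. All of that is sound.

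Part 8, however, contains a genuine error. The ``explicit form'' you propose to establish first,
\[ \assert_{\lambda x p}(x) \;=\; \pcase{p}{\_}{\return{x}}{\_}{\fail} ,\]
is not a derivable judgement in context $x : A$: the right-hand side uses $x$ both in the scrutinee $p$ and in the branch $\return{x}$, and the rule \Rcase{} concatenates the (disjoint) contexts of the scrutinee and of the branches, so forming this term requires contraction on $x$, which $\COMET$ does not have. The same objection defeats your proposed bound $\pcase{p}{\_}{\inl{\inl{x}}}{\_}{\inl{\inr{x}}}$, so Lemma \ref{lm:assert}.4 never gets off the ground. This is not a technicality: the entire reason $\instr$ and $\assert$ are taken as primitives is that this ``case on $p$ and return the input'' term cannot be written down (the paper makes exactly this remark when introducing the instrument rules in Appendix \ref{section:instruments}). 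The repair is to avoid ever placing $x$ in a branch: since $x$ does not occur in $q$, one has $p \andthen q = \pcase{\assert_p(x)}{\_}{q}{\_}{\bot}$ with $x$ occurring only in the scrutinee; because both branches are vacuous, the scrutinee $\assert_p(x) : A + 1$ may be replaced by $\dom{(\assert_p(x))} : \mathbf{2}$ (by \Rcasesub{} / Corollary \ref{cor:vacsub}.\ref{cor:vaccase}), and Lemma \ref{lm:assert}.\ref{lm:assertdown} then rewrites this to $\pcase{p}{\_}{q}{\_}{\bot}$ as required.
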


\begin{proof}
\begin{enumerate}
\item
We have
\begin{align*}
&  \inlprop{\pcase{\instr_p(x)}{x}{\instr_q(x)}{y}{\inr{y}}} \\
& = \pcase{\instr_p(x)}{x}{q}{y}{\bot} \\
& = \doo{x}{\assert_p(x)}{q} = p \andthen q
\end{align*}
and
\begin{align*}
& \nabla(\pcase{\instr_p(x)}{x}{\instr_q(x)}{y}{\inr{y}}) \\
& = \pcase{\instr_p(x)}{x}{x}{y}{y} \\
& = \nabla(\instr_p(x)) = x
\end{align*}
so the result follows by\Retainstr.
\item This follows immediately from the previous part.
\item This follows from the previous part and the rule\Rcomm (Appendix \ref{section:instruments}).
\item 
$p \andthen (q \ovee r) = (p \andthen q) \ovee (p \andthen r)$ by Corollary \ref{cor:doovee}.  The other case follows by Commutativity.
\item 
$\bot \andthen p = \bot$ by Lemma \ref{lm:do}.
\item 
$\top \andthen q = q$ by Lemma \ref{lm:do}.
\item $
\begin{aligned}[t]
(p \andthen q) \andthen r
& \eqdef \doo{x}{\assert_{p \andthen q}(x)}{r} \\
& = \doo{x}{(\assert_p(x) \goesto \assert_q)}{r} & \text{by part \ref{prop:assertand}} \\
& = \doo{x}{\assert_p(x)}{\doo{x}{\assert_q(x)}{r}} & \text{by Lemma \ref{lm:do}} \\
& \eqdef p \andthen (q \andthen r)
  \end{aligned} $
\item 
$p \andthen q = \doo{\_}{\assert_p(x)}{q} = \pcase{\assert_p(x)}{\_}{q}{\_}{\bot} = \pcase{\dom{(\assert_p(x))}}{\_}{q}{\_}{\bot} = \cond{p}{q}{\bot}$.
\item 
Let $b : \mathbf{3}$ be given by
\[ b \eqdef \cond{p}{\cond{q}{1}{3}}{\cond{r}{2}{3}} \]
Then
\begin{align*}
  b \goesto \rhd_1 & = \cond{p}{\cond{q}{\top}{\bot}}{\cond{r}{\bot}{\bot}} \\
& = \cond{p}{q}{\bot} = p \andthen q \\
b \goesto \rhd_2 & = \cond{p}{\bot}{r} & \text{similarly} \\
& = \cond{p^\bot}{r}{\bot} = p^\bot \andthen r
\end{align*}
Thus, $b$ is a bound for $p \andthen q \ovee p^\bot \andthen r$.  We also have
\begin{align*}
\doo{x}{b}{\return{\nabla(x)}} & \eqdef \cond{p}{\cond{q}{\top}{\bot}}{\cond{r}{\top}{\bot}} \\
& = \cond{p}{q}{r}
\end{align*}
and the result is proved.
\end{enumerate}
\end{proof}

These results show that the scalars form an \emph{effect monoid}, and the predicates on any type form an \emph{effect module} over that effect monoid (see \cite{Jacobs14} Lemma 13 and Proposition 14).

\subsection{n-tests}
\label{section:ntest}

Recall that an \emph{$n$-test} on a type $A$ is an $n$-tuple $(p_1, \ldots, p_n)$ such that
\[ x : A \vdash p_1 \ovee \cdots \ovee p_n = \top : \mathbf{2} \]

The following lemma shows that there is a one-to-one correspondance between the $n$-tests on $A$, and
the maps $A \rightarrow \mathbf{n}$.

\begin{lemma}
\label{lm:ntest}
  For every $n$-test $(p_1, \ldots, p_n)$ on $A$, there exists a term $x : A \vdash t(x) : \mathbf{n}$,
unique up to equality, such that
\[ x : A \vdash p_i(x) = \rhd_i(t(x)) : \mathbf{2} \]
\end{lemma}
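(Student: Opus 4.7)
The plan is to derive existence and uniqueness from a single key reconstruction identity: in context $x : \mathbf{n}$,
\[ \inl{x} = \bigovee_{i=1}^n \doo{\_}{\rhd_i(x)}{\return{i}} : \mathbf{n}+1 . \]
I would prove this by writing $\rhd_i(x)$ as $\case_{j=1}^n x \of \nin{j}{n}{\_} \mapsto e_{ij}$, where $e_{ij} = \return{*}$ if $j=i$ and $e_{ij} = \fail$ otherwise; then pushing the $\doo$-binding inside the case via Lemma \ref{lm:do}, commuting the partial sum past the case by Lemma \ref{lm:caseovee}, and noting that each inner sum over $i$ collapses to $\return{j}$ (all other summands are $\fail$, by Lemma \ref{lm:ordering}). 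The remaining $\case_{j=1}^n x \of \nin{j}{n}{\_} \mapsto \return{j}$ then equals $\inl{x}$ by the $\eta$-law for coproducts, using Lemma \ref{lm:sub}.\ref{lm:casesub} to move $\inl{-}$ past the case.

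For uniqueness, I substitute $t_1, t_2 : \mathbf{n}$ for $x$ in the key identity (Lemma \ref{lm:meta}.2) to get $\inl{t_k} = \bigovee_i \doo{\_}{\rhd_i(t_k)}{\return{i}}$ for $k = 1, 2$; when $\rhd_i(t_1) = \rhd_i(t_2)$ for every $i$, the right-hand sides coincide, so $\inl{t_1} = \inl{t_2}$, and applying $\lft{-}$ via \Retaleft yields $t_1 = t_2$.

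For existence I induct on $n$, with bases $n=1$ (take $t = *$) and $n=2$ (take $t = p_1$, so that $\rhd_1(p_1) = \dom{p_1} = p_1$ and $\rhd_2(p_1) = p_1^\bot = p_2$). For the inductive step I first prove a partial strengthening allowing $\bigovee_i p_i = q$ for arbitrary $q$, producing $r : \mathbf{n}+1$ with $\dom{r} = q$ and $r \goesto \rhd_i = p_i$. Given an $(n+1)$-test $(p_1, \dots, p_{n+1})$, I set $a_1 = \doo{\_}{\assert_{\lambda y p_1}(x)}{\return{\nin{1}{n+1}{*}}}$ and, using the partial IH on $(p_2, \dots, p_{n+1})$ (which sum to $p_1^\bot$) to obtain $r' : \mathbf{n}+1$, set $a_2 = \doo{w}{r'}{\return{\rho(w)}}$ where $\rho$ shifts indices up by one. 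Since $\dom{a_1} = p_1 = \ker{a_2}$, the term $\inl{\inlr{a_1}{a_2}} : (\mathbf{n+1} + \mathbf{n+1}) + 1$ is a bound witnessing $a_1 \ovee a_2 = \inl{\nabla(\inlr{a_1}{a_2})}$, so I take $t = \nabla(\inlr{a_1}{a_2}) : \mathbf{n+1}$. Verifying $\rhd_i(t) = p_i$ distributes $\goesto$ over $\ovee$ (Corollary \ref{cor:doovee}) and splits on $i=1$ versus $i \geq 2$, using the easy values of $\rhd_i(\nin{1}{n+1}{*})$ and $\rhd_i(\rho(w))$. The main obstacle is the partial strengthening and its bound construction: exhibiting $\inl{\inlr{a_1}{a_2}}$ as a concrete bound is the crux, since a naive partial sum $\bigovee_i a_i$ does not come with a ready-made witness for \Roveeprime.
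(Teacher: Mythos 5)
Your uniqueness argument is correct and takes a genuinely different route from the paper's. The paper reduces uniqueness to joint monicity of the projections $\rhd_1, \ldots, \rhd_n$ on $\mathbf{n}$, established by a separate induction whose base case $n=2$ invokes the $\beta$/$\eta$-rules for $\inlrn$ (see the proof of Lemma \ref{lm:instrn}); you instead derive it from the reconstruction identity $\inl{x} = \bigovee_{i=1}^n \doo{\_}{\rhd_i(x)}{\return{i}}$, which is indeed provable along the lines you sketch (each binary summand $\return{j} \ovee \fail$ carries its bound from Lemma \ref{lm:ordering}, and Lemma \ref{lm:caseovee} is applied in the right direction). That identity plus \Retaleft gives joint monicity cleanly, and is arguably more reusable than the paper's ad hoc induction.

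The existence proof, however, has a genuine gap, sitting exactly where you flag ``the main obstacle''. Peeling off $p_1$ forces you to recurse on $(p_2, \ldots, p_{n+1})$, which is not a test but only sums to $p_1^\bot$, so you need the ``partial strengthening'' $P(n)$: if $p_1 \ovee \cdots \ovee p_n = q$ then some $r : \mathbf{n}+1$ has $\dom{r} = q$ and $r \goesto \rhd_i = p_i$. You never prove $P(n)$, and neither obvious route works. Rerunning your own combination step in the partial case fails because the rule \Rinlr demands $\dom{a_1} = \ker{a_2}$, i.e.\ $\dom{a_1} \ovee \dom{a_2} = \top$, whereas in the partial case $\dom{a_1} \ovee \dom{a_2} = q$, which is not $\top$ in general. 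Deriving $P(n)$ from the lemma itself is circular: up to the paper's identification of $\mathbf{n}+1$ with $\mathbf{n+1}$, $P(n)$ \emph{is} the lemma at $n+1$ applied to the $(n+1)$-test $(p_1, \ldots, p_n, q^\bot)$ --- the very statement your inductive step is trying to establish. The paper sidesteps this by keeping every tuple total: it merges the \emph{last two} predicates into the $n$-test $(p_1, \ldots, p_{n-1}, p_n \ovee p_{n+1})$, applies the induction hypothesis to get $t : \mathbf{n}$, and then splits the last component as $\inlr{b}{t}$ using the bound $b : \mathbf{3}$ for $p_n \ovee p_{n+1}$, for which the side condition $\dom{b} = \ker{t} = p_n \ovee p_{n+1}$ does hold. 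Your base cases and the passage from $r'$ to $t = \nabla(\inlr{a_1}{a_2})$ are fine (modulo one miscitation: Corollary \ref{cor:doovee} distributes $\goesto$ over $\ovee$ in its second argument, while you need $(a_1 \ovee a_2) \goesto \rhd_i = (a_1 \goesto \rhd_i) \ovee (a_2 \goesto \rhd_i)$, a different though easily provable fact); it is the recursion scheme that must change.
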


\begin{proof}
The proof is by induction on $n$.  The case $n = 1$ is trivial.

Suppose the result is true for $n$.  Take an $n+1$-test $(p_1, \ldots, p_{n+1})$.  Then \\
$(p_1, p_2, \ldots, p_n \ovee p_{n+1})$ is an $n$-test.  By the induction hypothesis, there exists $t : \mathbf{n}$ such that
\[ \rhd_i(t) = p_i \; (i < n), \qquad \rhd_n(t) = p_n \ovee p_{n+1} \enspace . \]
Let $b : \mathbf{3}$ be the bound for $p_n \ovee p_{n+1}$, so
\[ \rhd_1(b) = p_n, \qquad \rhd_2(b) = p_{n+1}, \qquad \rhd_{12}(b) = p_n \ovee p_{n+1} \enspace . \]
Reading $t$ and $b$ as partial functions in $\mathbf{n-1} + 1$ and $\mathbf{2} + 1$, we have that
$\ker{t} = \dom{b} = p_n \ovee p_{n+1}$.
Hence $\inlr{b}{t} : \mathbf{2} + \mathbf{n - 1}$ exists.  Reading it as a term of type $\mathbf{n+1}$, we have that
\[ \rhd_1(\inlr{b}{t}) = p_n, \quad \rhd_2(\inlr{b}{t}) = p_{n+1}, \quad \rhd_{i + 2}(\inlr{b}{t}) = p_i \; (i < n) \enspace . \]
From this it is easy to construct the term of type $\mathbf{n + 1}$ required.
\end{proof}

We write $\instr_{(p_1, \ldots, p_n)}(s)$ for $\instr_t(s)$, where $t$ is the term such that $\rhd_i(t) = p_i$ for each $i$.
We therefore have

\begin{lemma}
\label{lm:instrn}
  $\instr_{(p_1, \ldots, p_n)}(x)$ is the unique term such that $\intest{i}{\instr_{(p_1, \ldots, p_n)}(x)} = p_i$ for all $i$
and
$\nabla(\instr_{(p_1, \ldots, p_n)}(x)) = x$.
\end{lemma}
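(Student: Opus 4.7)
The plan is to prove existence and uniqueness separately, with both parts being essentially direct applications of \Rinstrtest, \Rnablainstr, and \Retainstr combined with the preceding Lemma \ref{lm:ntest}.

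For existence, let $x : A \vdash t : \mathbf{n}$ be the term from Lemma \ref{lm:ntest} satisfying $\rhd_i(t) = p_i$, so by definition $\instr_{(p_1,\ldots,p_n)}(x) = \instr_{\lambda x t}(x)$. The equation $\nabla(\instr_{\lambda x t}(x)) = x$ is immediate from \Rnablainstr. For the tests, I first establish the auxiliary identity
\[ \intest{i}{r} = \rhd_i(\ind{r}) \qquad \text{for } r : n \cdot A, \]
which unfolds by the definition of both sides into the same nested \textsf{case} expression on $r$ (both return $\top$ when $r$ came from the $i$th coprojection and $\bot$ otherwise). Applying this to $r = \instr_{\lambda x t}(x)$ and then using \Rinstrtest gives
\[ \intest{i}{\instr_{\lambda x t}(x)} = \rhd_i(\ind{\instr_{\lambda x t}(x)}) = \rhd_i(t[x:=x]) = \rhd_i(t) = p_i. \]

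For uniqueness, suppose $x : A \vdash s : n \cdot A$ satisfies $\intest{i}{s} = p_i$ for all $i$ and $\nabla(s) = x$. Using the same auxiliary identity, the term $\ind{s} : \mathbf{n}$ satisfies $\rhd_i(\ind{s}) = \intest{i}{s} = p_i$, so by the uniqueness clause of Lemma \ref{lm:ntest} we have $\ind{s} = t$. Since $\nabla(s) = x$, the $(\eta\mathrm{instr})$ rule applies to give
\[ s = \instr_{\lambda x.\, \ind{s}}(x) = \instr_{\lambda x t}(x) = \instr_{(p_1,\ldots,p_n)}(x). \]

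The routine steps are the two typing-directed unfoldings (the auxiliary identity and the application of \Rinstrtest). The only place requiring a bit of care is lining up the hypotheses of \Retainstr, namely checking that the context condition $\nabla(s) = x$ on the open term $s$ (not merely on $s$ evaluated at some closed argument) is what the $(\eta\mathrm{instr})$ rule demands; this is indeed the form in which the rule is stated in the excerpt, so no additional work is needed. No step should present a genuine obstacle.
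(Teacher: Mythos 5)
Your proof is correct and follows essentially the same route as the paper's: establish the identity $\intest{i}{r} = \rhd_i(\ind{r})$, get existence from \Rinstrtest and \Rnablainstr, and get uniqueness by passing to $\ind{s}$ and applying \Retainstr. The only divergence is that where you invoke the uniqueness clause of Lemma \ref{lm:ntest} to conclude $\ind{s} = t$, the paper instead proves the underlying joint-monicity fact (for $s, t : \mathbf{n}$, $s = t$ iff $\rhd_i(s) = \rhd_i(t)$ for all $i$) inline by induction on $n$ from the $\mathsf{inlr}$ rules --- the same fact, just established directly rather than cited.
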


\begin{proof}
Let $t : \mathbf{n}$ be the term such that $\rhd_i(t) = p_i$ for all $i$.
By the rules for instruments, $\instr_{(p_1, \ldots, p_n)}(x)$ is the unique term such that
\begin{align*}
  (\case_{i=1}^n \instr_{(p_1, \ldots, p_n)}(x) \of \nin{i}{n}(\_) \mapsto i) & = t \\
\nabla(\instr_{(p_1, \ldots, p_n)}(x)) & = x
\end{align*}
It is therefore sufficient to prove that, given terms $\Gamma \vdash s, t : \mathbf{n}$,
\[ \Gamma \vdash s = t : \mathbf{n} \Leftrightarrow \forall i. \Gamma \vdash \rhd_i(s) = \rhd_i(t) : \mathbf{2} \]
This fact is proven by induction on $n$, with the case $n = 2$ holding by the rules\Rbetainlrone,\Rbetainlrtwo and\Retainlr.
\end{proof}

\begin{lemma}
\label{lm:assertpi}
\begin{align*}
\instr_{p_i}(x) & = \case_{j=1}^n \instr_{(p_1, \ldots, p_n)}(x) \of \nin{j}{n}{x} \mapsto
\begin{cases}
\inl{x} & \text{if } i = j \\
\inr{x} & \text{if } i \neq j
\end{cases} \\
\assert_{p_i}(x) & = \case_{j=1}^n \instr_{(p_1, \ldots, p_n)}(x) \of \nin{j}{n}{x} \mapsto \begin{cases}
\return x & \text{if } i = j \\
\fail & \text{if } i \neq j
\end{cases}
\end{align*}
\end{lemma}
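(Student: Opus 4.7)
The proof is a direct application of Lemma \ref{lm:instrn} to the 2-test $(p_i, p_i^\bot)$. Let $R$ denote the right-hand side of the first equation; I will verify $\nabla(R) = x$, $\intest{1}{R} = p_i$, and $\intest{2}{R} = p_i^\bot$, after which $R = \instr_{p_i}(x)$ follows from the uniqueness clause of Lemma \ref{lm:instrn}.

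The identity $\nabla(R) = x$ follows by pushing $\nabla$ through the $n$-ary case expression, using iterated applications of Lemma \ref{lm:sub}.\ref{lm:casesub}, observing that $\nabla(\inl{x}) = \nabla(\inr{x}) = x$ so that every branch collapses to $x$, and then invoking $\nabla(\instr_{(p_1,\ldots,p_n)}(x)) = x$ from Lemma \ref{lm:instrn}. For $\intest{1}{R}$, push $\intest{1}$ through the case in the same way: since $\intest{1}{\inl{x}} = \top$ and $\intest{1}{\inr{x}} = \bot$, the result rearranges into $\intest{i}{\instr_{(p_1,\ldots,p_n)}(x)} = p_i$. For $\intest{2}{R}$, I use the cheap observation that any $r : A + A$ satisfies $\intest{1}{r} \ovee \intest{2}{r} = \top$, by Lemma \ref{lm:caseovee} combined with Corollary \ref{cor:vacsub}.\ref{cor:vaccase}; the uniqueness of orthosupplements (Proposition \ref{prop:logic}) then forces $\intest{2}{R} = p_i^\bot$.

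The second equation follows from the first by applying $\rhd_1$ to both sides. Since the partial projection satisfies $\rhd_1(\inl{x}) = \return{x}$ and $\rhd_1(\inr{x}) = \fail$ directly from its definition, commuting $\rhd_1$ through the case (again via Lemma \ref{lm:sub}.\ref{lm:casesub}) yields precisely the right-hand side of the second equation, while the left-hand side is $\rhd_1(\instr_{p_i}(x)) = \assert_{p_i}(x)$ by the definition of $\assert$. No step is genuinely hard; the only care needed is to treat each $n$-ary case as iterated binary cases, so that Lemma \ref{lm:sub}.\ref{lm:casesub} licenses the commutation of $\nabla$, $\intest{j}$, or $\rhd_1$ past the case expression. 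The real content is simply that Lemma \ref{lm:instrn} reduces the whole problem to checking the three predicate and value equations above.
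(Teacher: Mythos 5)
Your proposal is correct and follows essentially the same route as the paper: characterise the right-hand side as the instrument for $p_i$ by checking the $\nabla$ and $\intest{}{}$ equations and invoking uniqueness, then obtain the $\assert$ formula by applying $\rhd_1$. The paper's version is terser — it checks only $\inlprop{}$ and $\nabla$ (the two conditions needed for \Retainstr), so your extra verification of $\intest{2}{R} = p_i^\bot$ is sound but not strictly necessary.
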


\begin{proof}
The first formula holds because $\inlprop{}$ maps the right-hand side to $\intest{i}{\instr_{(p_1, \ldots, p_n)}(x)} = p_i$,
and $\nabla$ mapst the right-hand side to $x$.
The second formula follows immediately from the first.
\end{proof}

\begin{lemma}
  \item If $(p,q)$ is a 2-test, then $q = p^\bot$, and $\mathsf{instr}_{(p,q)}(t) = \mathsf{instr}_{p}(t)$.
\end{lemma}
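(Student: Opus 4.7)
The first half of the claim is immediate from Proposition~\ref{prop:logic}, part 2: by definition a 2-test $(p,q)$ on $A$ satisfies $x:A \vdash p \ovee q = \top : \mathbf{2}$, and that part of the proposition says precisely that any $q$ with $p \ovee q = \top$ must equal $p^\bot$.

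For the instrument identity, the plan is to specialise Lemma~\ref{lm:assertpi} to the degenerate case $n = 2$, $i = 1$, with $p_1 = p$ and $p_2 = q$. Reading off that lemma we get
\[ \instr_p(t) \;=\; \pcase{\instr_{(p,q)}(t)}{x}{\inl{x}}{x}{\inr{x}}. \]
The right-hand side is exactly the case expression appearing in the $\eta+$ rule \Retaplus applied to the term $\instr_{(p,q)}(t) : A + A$, so it collapses to $\instr_{(p,q)}(t)$. Rearranging yields $\instr_{(p,q)}(t) = \instr_p(t)$, as required.

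There is no serious obstacle here: both halves are essentially one-line consequences of results already established in this section. The only bookkeeping point is to verify that in the $n=2$ specialisation of Lemma~\ref{lm:assertpi} the two case branches really do match \Retaplus on the nose (they do, since the $i = j$ branch is $\inl{x}$ and the $i \neq j$ branch is $\inr{x}$, with trivial substitution). As an alternative route, one could bypass Lemma~\ref{lm:assertpi} entirely and appeal to the uniqueness clause of Lemma~\ref{lm:instrn}: verify directly that $\nabla(\instr_p(t)) = t$ (by \Rnablainstr) and that $\intest{1}{\instr_p(t)} = p$, $\intest{2}{\instr_p(t)} = p^\bot = q$ (using \Rinstrtest together with the observation that on $\mathbf{2}$ the operator $\inlprop{\cdot}$ is the identity up to \Retaone and \Retaplus). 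Either route is short and mechanical.
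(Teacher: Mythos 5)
Your proof is correct, and the first half coincides with the paper's argument: both reduce $q = p^\bot$ to the uniqueness of orthosupplements in Proposition~\ref{prop:logic} (your citation of part~2 is in fact the accurate one; the paper's own text points at part~\ref{prop:ortho}, which appears to be a slip). For the instrument identity the paper argues in one direction: it checks that $\instr_{(p,q)}(x)$ satisfies the two premises of \Retainstr --- namely $\ind{\instr_{(p,q)}(x)} = p$ and $\nabla(\instr_{(p,q)}(x)) = x$ --- and concludes $\instr_{(p,q)}(t) = \instr_p(t)$ directly from that rule. Your primary route instead reads the identity off the $n=2$, $i=1$ instance of Lemma~\ref{lm:assertpi} and then collapses the resulting case expression with \Retaplus; your alternative route verifies the characterising equations for $\instr_p(t)$ and appeals to the uniqueness clause of Lemma~\ref{lm:instrn}. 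All three arguments bottom out in the same fact --- an instrument is determined by its index test together with its codiagonal --- so the difference is one of packaging rather than substance. What your Lemma~\ref{lm:assertpi} route buys is that no fresh equation needs to be verified: the two branches of the specialised case expression are literally $\inl{x}$ and $\inr{x}$, so \Retaplus applies on the nose, whereas the paper (and your alternative route) must still compute the index test of one of the two instruments. Either way the result stands.
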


\begin{proof}
  If $(p,q)$ is a 2-test then $p \ovee q = \top$ and so $q = p^\bot$ by Proposition \ref{prop:logic}.\ref{prop:ortho}.  Then
$\mathsf{instr}_{(p,q)}(t) = \mathsf{instr}_p(t)$ by\Retainstr, since $\inlprop{\mathsf{instr}_{(p,q)}(x)} = \langle p ? \rangle \top \ovee \langle q ? \rangle \bot = p$
and $\nabla(\mathsf{instr}_{(p,q)}(x)) = x$.
\end{proof}



We can now define the program that divides into $n$ branches depending on the outcome of an $n$-test:

\begin{definition}
\label{df:measure}
Given $x : A \vdash p_1(x) \ovee \cdots \ovee p_n(x) = \top : \mathbf{2}$, define
\begin{align*}
x : A & \vdash \meas\ p_1(x) \mapsto t_1(x) \mid \cdots \mid p_n(x) \mapsto t_n(x) \\
& \eqdef \case \mathsf{instr}_{(p_1, \ldots, p_n)}(x) \of \inn_1(x) \mapsto t_1(x) \mid \cdots \mid \inn_n(x) \mapsto t_n(x)
\end{align*}
\end{definition}


\begin{lemma}
\label{lm:measure}
The $\meas$ construction satisfies the following laws.
  \begin{enumerate}
  \item \label{lm:measuretop} $(\meas\ \top \mapsto t) = t$
  \item \label{lm:measurebot} $(\meas\ p_1 \mapsto t_1 \mid \cdots \mid p_n \mapsto t_n \mid \bot \mapsto t_{n+1}) = (\meas\ p_1 \mapsto t_1 \mid \cdots \mid p_n \mapsto t_n)$
  \item \label{lm:measureand} $(\meas_i\ p_i \mapsto \meas_j\ q_{ij} \mapsto t_{ij}) = (\meas_{i,j}\ p_i \andthen q_{ij} \mapsto t_{ij})$
  \item \label{lm:measureperm} For any permutation $\pi$ of $\{1, \ldots, n\}$, $\meas_i\ p_i \mapsto t_i = \meas_i\ p_{\pi(i)} \mapsto t_{\pi(i)}$.
  \item \label{lm:measureor} If $t_n = t_{n+1}$ then \\ $\meas_{i=1}^n p_i \mapsto t_i = \meas\ p_1 \mapsto t_1 \mid \cdots \mid p_{n-1} \mapsto t_{n-1} \mid p_n \ovee p_{n+1} \mapsto t_n$.
  \end{enumerate}
\end{lemma}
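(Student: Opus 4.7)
The plan is to treat each of the five parts separately, with Lemma~\ref{lm:instrn}---the uniqueness characterisation of $\mathsf{instr}_{(p_1,\ldots,p_n)}(x)$ via $\intest{i}{\cdot} = p_i$ and $\nabla(\cdot) = x$---serving as the workhorse throughout: whenever two $\meas$ expressions must be equated, I identify how the underlying instruments relate, and then push case expressions around via the commuting conversion \Rcasecase (extended from binary to $n$-ary case by an easy induction).

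Part~(\ref{lm:measuretop}) is essentially immediate. With $n=1$ we have $1 \cdot A = A$ and $\nin{1}{1}{\_}$ is the identity by the conventions of Section~\ref{section:copowers}, while $\mathsf{instr}_{(\top)}(x) = x$ by Lemma~\ref{lm:instrn} (both conditions $\intest{1}{x} = \top$ and $\nabla(x) = x$ being trivial), so the case expression reduces to $t$. For part~(\ref{lm:measurebot}), I would first show
\[ \mathsf{instr}_{(p_1,\ldots,p_n,\bot)}(x) = \case_{j=1}^n \mathsf{instr}_{(p_1,\ldots,p_n)}(x) \of \nin{j}{n}{y} \mapsto \nin{j}{n+1}{y} \]
by verifying the two conditions of Lemma~\ref{lm:instrn}. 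Substituting this into the definition of $\meas$ and applying \Rcasecase with the $\beta$-law for coproducts, every branch for $\nin{j}{n+1}{y}$ with $j \leq n$ collapses to $t_j$, while the $t_{n+1}$ branch becomes unreachable and drops out. Part~(\ref{lm:measureor}) is analogous: the instrument for $(p_1,\ldots,p_{n-1}, p_n \ovee p_{n+1})$ is obtained from the one for $(p_1,\ldots,p_{n+1})$ by a ``codiagonal on the last two components'' map (verified via Lemma~\ref{lm:instrn} using the bound for $p_n \ovee p_{n+1}$), and since $t_n = t_{n+1}$ the two merged branches yield the same value.

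Part~(\ref{lm:measureperm}) follows by decomposing an arbitrary permutation into a product of adjacent transpositions and handling the transposition case via Lemma~\ref{lm:instrn}: the instrument with two adjacent positions swapped is identified with a swap on the corresponding components of the original instrument, after which the two case expressions match up branch by branch.

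The main obstacle is part~(\ref{lm:measureand}). First, $(p_i \andthen q_{ij})_{i,j}$ is indeed an $n$-test: by the distributivity of $\andthen$ over $\ovee$ and the identity $p \andthen \top = p$ from Proposition~\ref{prop:testops}, we have $\bigovee_{i,j} p_i \andthen q_{ij} = \bigovee_i (p_i \andthen \bigovee_j q_{ij}) = \bigovee_i p_i = \top$. Next I would define
\[ s \eqdef \case_i \mathsf{instr}_{(p_i)_i}(x) \of \nin{i}{n}{y} \mapsto \case_j \mathsf{instr}_{(q_{ij})_j}(y) \of \nin{j}{m_i}{z} \mapsto \nin{(i,j)}{M}{z} \]
(with $M = \sum_i m_i$ and a fixed enumeration of pairs) and show $s = \mathsf{instr}_{(p_i \andthen q_{ij})_{i,j}}(x)$ via Lemma~\ref{lm:instrn}. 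Computing $\nabla(s) = x$ uses $\nabla \after \mathsf{instr} = \idmap$ twice; verifying $\intest{(i,j)}{s} = p_i \andthen q_{ij}$ requires unfolding $\andthen$ as $\doo{y}{\assert_{p_i}(y)}{q_{ij}}$ and using Lemma~\ref{lm:assertpi} to recover $\assert_{p_i}$ from the outer instrument. Once $s$ has been identified with the combined instrument, two applications of \Rcasecase rewrite the nested $\meas$ on the left as the single $\meas$ over combined indices on the right.
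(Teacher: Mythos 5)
Your proposal is correct and matches the paper's proof in all essentials: each law is reduced to an identity between instruments, verified through the uniqueness characterisation of Lemma~\ref{lm:instrn} (with Lemma~\ref{lm:assertpi} recovering $\assert_{p_i}$ in the sequential-product case), and the nested case expressions are then collapsed by commuting conversions. The only cosmetic difference is in the permutation law, where the paper handles an arbitrary $\pi$ directly by reindexing with $\pi^{-1}$ rather than factoring into adjacent transpositions; both work, and your extra check that $(p_i \andthen q_{ij})_{i,j}$ is a test is a harmless addition.
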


\begin{proof}
  \begin{enumerate}
  \item
$    \begin{aligned}[t]
      \meas \top \mapsto t(x) & \eqdef \case \instr_{(\top)}(x) \of \nin{1}{1}{x} \mapsto t(x) \\
& = t(\instr_{(\top)}(x))
    \end{aligned}$.

So it suffices to prove $\instr_{(\top)}(s) = s$.
This holds by the uniqueness of Lemma \ref{lm:instrn}, since we have $\intest{1}{x} = \top$ and $\nabla(x) = x$.
\item 
It suffices to prove $\instr_{(p_1, \ldots, p_n, \bot)}(x) = \case_{i=1}^n \instr_{(p_1, \ldots, p_n)}(x) \of \nin{i}{n}{x} \mapsto \nin{i}{n+1}{x}$.
Let $R$ denote the right-hand side.  Then
\begin{align*}
\intest{i}{R} & = \intest{i}{\instr_{(p_1, \ldots, p_n)}(x)} = p_i \\
  \nabla(R) & = \case_{i=1}^n \instr_{(p_1, \ldots, p_n)}(x) \of \nin{i}{n}{x} \mapsto x \\
& = \nabla(\instr_{(p_1, \ldots, p_n)}(x)) = x
\end{align*}
\item 
Let us write $\nin{i,j}{}{}$ ($1 \leq i \leq m$, $1 \leq j \leq n_i$) for the constructors of $(n_1 + \cdots + n_m) \cdot A$,
and $\intest{i,j}{}$ for the corresponding predicates.

It suffices to prove that
\[ \instr_{(p_i \andthen q_{ij})_{i,j}}(x) = \case_{i=1}^m\ \instr_{\vec{p}}(x) \of \nin{i}{m}{x} \mapsto
\case_{j=1}^{n_1}\ \instr_{\vec{q_i}}(x) \of \nin{j}{n_i}{x} \mapsto \nin{i,j}{}{x} \enspace . \]
  Let $R$ denote the right-hand side.  We have
\begin{align*}
\intest{i,j}{R} & = \case_{i'=1}^m\ \instr_{\vec{p}}(x) \of \nin{i'}{m}{x} \mapsto \begin{cases}
\intest{j}{\instr_{\vec{q_i}}(x)} & \text{if } i = i' \\
\bot & \text{if } i \neq \i'
\end{cases} \\
& = \case_{i'=1}^m\ \instr_{\vec{p}}(x) \of \nin{i'}{m}{x} \mapsto \begin{cases}
q_{ij} & \text{if } i = i' \\
\bot & \text{if } i \neq i'
\end{cases} \\
& = \doo{x}{\left( \case_{i'=1}^m\ \instr_{\vec{p}}(x) \of \nin{i'}{m}{x} \mapsto \begin{cases}
\return x & \text{if } i = i' \\
\fail & \text{if } i \neq i'
\end{cases} \right)}{q_{ij}} \\
& = \doo{x}{\assert_{p_i}(x)}{q_{ij}} \\
& \qquad \qquad \text{(by Lemma \ref{lm:assertpi})} \\
& = p_i \andthen q_{ij}
\end{align*}
and
\begin{align*}
\nabla(R) & = \case_{i=1}^m\ \instr_{\vec{p}}(x) \of \nin{i}{m}{x} \mapsto \nabla(\instr_{\vec{q_i}}(x)) \\
& = \case_{i=1}^m\ \instr_{\vec{p}}(x) \of \nin{i}{m}{x} \mapsto x = \nabla(\instr_{\vec{p}}(x)) = x
\end{align*}
\item 
It is sufficient to prove that 
\[ \instr_{(p_1, \ldots, p_n)}(x) = \case_{i=1}^n \instr_{(p_{\pi(1)}, \ldots, p_{\pi(n)})}(x) \of \nin{i}{n}{x} \mapsto \nin{\pi^{-1}(i)}{n}{x} 
\enspace . \]
Let $R$ denote the right-hand side.  We have
\begin{align*}
\intest{i}{R} & = \intest{\pi^{-1}(i)}{\instr_{(p_{\pi(1)}, \ldots, p_{\pi(n)})}(x)} = p_i \\
  \nabla(R) & = \nabla(\instr_{(p_{\pi(1)}, \ldots, p_{\pi(n)})}(x)) = x
\end{align*}
\item
It suffices to prove $\instr_{(p_1, \ldots, p_{n-1}, p_n \ovee p_{n+1})} = \case_{i=1}^{n+1} \instr_{\vec{p}}(x) \of \nin{i}{n}{x} \mapsto \begin{cases}
\nin{i}{n}{x} & \text{if } i < n \\ \nin{i}{n}{x} & \text{if } i \geq n \end{cases}$.  Let $R$ denote the right-hand side.  We have, for $i < n$:
\begin{align*}
\intest{i}{R} & = \intest{i}{\instr_{\vec{p}}(x)} = p_i
  \intest{n}{R} & = \rhd_{n,n+1}(\ind{\instr_{\vec{p}}(x)}) \\
& = \intest{n}{\instr_{\vec{p}}(x)} \ovee \intest{n+1}{\instr_{\vec{p}}(x)} = p_n \ovee p_{n+1}
\nabla(R) & = x \enspace .
\end{align*}
  \end{enumerate}
\end{proof}



Let $x : A \vdash p : \mathbf{2}$ and $\Gamma, x : A \vdash s,t : B$.  We define
\[ \cond{p}{s}{t} \eqdef \meas\ p \mapsto s \mid p^\bot \mapsto t : B \enspace . \]

\begin{lemma}
\label{lm:measuretwo}
  \begin{enumerate}
  \item If $x : A \vdash p_1 \ovee \cdots \ovee p_n = \top : \mathbf{2}$ and $x : A \vdash q_1, \ldots, q_n : \mathbf{2}$, then
\[ (\mathsf{measure}\ p_1 \mapsto q_1 \mid \cdots \mid p_n \mapsto q_n) = p_1 \andthen q_1 \ovee \cdots \ovee p_n \andthen q_n \enspace . \]
  \item Let $x : A \vdash p : \mathbf{2}$ and $\Gamma \vdash q,r : B$ where $x \notin \Gamma$.  Then $\cond{p}{q}{r} = \pcase{p}{\_}{q}{\_}{r} : B$. \label{lm:measurecond}
  \item\label{lm:measuretwo'} Let $x : A \vdash p : \mathbf{2}$.  Then $x : A \vdash \cond{p}{\top}{\bot} = p : \mathbf{2}$.
  \end{enumerate}
\end{lemma}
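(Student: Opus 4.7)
For part (1), my plan is to unfold both sides into $\case$-expressions on the same instrument $\instr_{(p_1,\ldots,p_n)}(x)$ and match them branch-by-branch. Definition \ref{df:measure} gives $\meas\ p_1 \mapsto q_1 \mid \cdots \mid p_n \mapsto q_n$ as a $\case$-split on this instrument with $q_i$ in branch $i$. For the right-hand side, Proposition \ref{prop:testops}.\ref{prop:assertand} rewrites $p_i \andthen q_i$ as $\doo{y}{\assert_{p_i}(x)}{q_i}$, and Lemma \ref{lm:assertpi} expresses $\assert_{p_i}(x)$ as the same $\case$-split on $\instr_{(p_1, \ldots, p_n)}(x)$, returning $\return y$ in branch $i$ and $\fail$ elsewhere; the do-notation laws (Lemma \ref{lm:do}) push the continuation inside, yielding $q_i$ in branch $i$ and $\bot$ elsewhere (since $\fail = \bot$ at type $\mathbf{2}$). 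Finally, Lemma \ref{lm:caseovee} distributes the partial sum over $i$ through the common case, and the zero law (Lemma \ref{lm:ordering}.\ref{lm:zerolaw}) collapses each branch $j$ to $q_j$, matching the unfolded $\meas$.

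For part (2), the crucial sublemma is that $\instr_p(x) = \pcase{p}{\_}{\inl{x}}{\_}{\inr{x}}$ for any $x : A \vdash p : \mathbf{2}$. I verify this via \Retainstr: writing $f(x)$ for the right-hand side, $\nabla(f(x))$ distributes through the outer case to $\pcase{p}{\_}{x}{\_}{x} = x$ by Corollary \ref{cor:vacsub}, while $\ind{f(x)} = \pcase{p}{\_}{\top}{\_}{\bot}$, which equals $p$ via \Retaplus together with \Retaone (since $\inl{*} = \top$ and $\inr{*} = \bot$). Because $(p,p^\bot)$ is a 2-test with $\instr_{(p,p^\bot)}(x) = \instr_p(x)$ (the unnamed 2-test lemma preceding Definition \ref{df:measure}), substituting gives $\cond{p}{q}{r} = \case \pcase{p}{\_}{\inl{x}}{\_}{\inr{x}} \of \inl{y} \mapsto q \mid \inr{y} \mapsto r$; one application of \Rcasecase followed by \Rbetaplusone and \Rbetaplustwo (using $x \notin \Gamma$ so that $q,r$ do not mention the bound variable $y$) reduces this to $\pcase{p}{\_}{q}{\_}{r}$.

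For part (3), the cleanest route is to invoke (1) with $n = 2$, $(p_1,p_2) = (p,p^\bot)$, and $(q_1,q_2) = (\top,\bot)$; the 2-test condition $p \ovee p^\bot = \top$ is Proposition \ref{prop:logic}.1. The resulting expression $p \andthen \top \ovee p^\bot \andthen \bot$ simplifies via Proposition \ref{prop:testops} (parts 5 and 6) and the zero law to $p \ovee \bot = p$.

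The main obstacle will be part (1), specifically the bookkeeping for the bound variable $y$ introduced by the $\case$-split on $\instr_{(p_1, \ldots, p_n)}(x)$ as it propagates into each $q_i$ under the do-notation unfolding, and ensuring that the distribution of $\ovee$ via Lemma \ref{lm:caseovee} across all $n$ summands aligns so that every branch collapses cleanly via the zero law. Parts (2) and (3) are then routine, relying only on the axioms \Retainstr, \Rcasecase, \Retaplus, \Retaone together with part (1).
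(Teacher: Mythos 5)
Your part (2) contains a genuine flaw. The sublemma you rely on, $\instr_{\lambda x p}(x) = \pcase{p}{\_}{\inl{x}}{\_}{\inr{x}}$, is not a well-formed equation in this system: the right-hand side uses the variable $x$ both in the scrutinee $p$ (which is typed in context $x : A$) and again in the branches $\inl{x}$, $\inr{x}$, so by the rule\Rcase its context would have to contain $x : A$ twice. Contraction is explicitly unavailable here --- indeed the paper's Instruments section singles out precisely this term ($\pcase{p[x:=t]}{\_}{\inl{t}}{\_}{\inr{t}}$) as the thing one \emph{cannot} write, which is why $\instr$ is a primitive at all. Your verification via\Retainstr inherits the problem: $\nabla(f(x)) = \pcase{p}{\_}{x}{\_}{x}$ cannot be collapsed to $x$ by Corollary \ref{cor:vacsub}, since that corollary requires the branch term to live in a context disjoint from the scrutinee's. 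The repair is the move the paper actually makes: since the branches $q, r$ of $\cond{p}{q}{r}$ do not mention the bound variable, you may replace the $(A+A)$-valued scrutinee $\instr_{\lambda x p}(x)$ by the $\mathbf{2}$-valued term $\inlprop{\instr_{\lambda x p}(x)}$ (a legitimate use of\Rcasecase plus the $\beta$-rules), and then\Rinstrtest identifies that with $p$; no duplicated variable ever appears.

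Your parts (1) and (3) are sound, though they take a longer road than the paper. For (1) the paper simply appeals to the uniqueness characterisation of $\instr_{(p_1,\ldots,p_n)}$ (Lemma \ref{lm:instrn}), whereas your branch-by-branch unfolding through Lemma \ref{lm:assertpi}, the do-laws and Lemma \ref{lm:caseovee} is a workable, more explicit alternative; the bookkeeping you flag is real but manageable. For (3) the paper deduces the claim directly from part (2) and\Retaplus in one line; your derivation from part (1) with the 2-test $(p, p^\bot)$ also works and has the virtue of not depending on part (2), but it is heavier than necessary.
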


\begin{proof}
  \begin{enumerate}
  \item
    Immediate from Lemma \ref{lm:instrn}.
  \item We have
    \begin{align*}
      \meas\ p \mapsto q \mid p^\bot \mapsto r & \eqdef \case \instr_{\lambda x p}(x) \of \inl{\_} \mapsto q \mid \inr{\_} \mapsto r \\
& = \case \inlprop{\instr_{\lambda x p}(x)} \of \inl{\_} \mapsto q \mid \inr{\_} \mapsto r \\
& = \case p \of \inl{\_} \mapsto q \mid \inr{\_} \mapsto r
    \end{align*}
  \item $\cond{p}{\top}{\bot} = \pcase{p}{\_}{\top}{\_}{\bot} = p$ by\Retaplus.
  \end{enumerate}
\end{proof}

\subsection{Scalars}
\label{sec:scalars}

From the rules given in Figure \ref{fig:equations}, the usual algebra of the rational interval from 0 to 1 follows.

\begin{lemma}
  If $p / q = m / n$ as rational numbers, then $\vdash p \cdot (1 / q) = m \cdot (1 / n) : \mathbf{2}$.
\end{lemma}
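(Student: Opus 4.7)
The plan is to reduce everything to two fundamental rules: \Rntimesoneovern, which gives $n \cdot (1/n) = \top$, and its converse \Rdivide. Before tackling the statement directly, I would establish a small arithmetic helper: for positive integers $a, b$ (with $ab \geq 2$),
\[ \vdash a \cdot (1/(ab)) = 1/b : \mathbf{2}. \]
By \Rdivide, it suffices to show $b \cdot \bigl(a \cdot (1/(ab))\bigr) = \top$. Unfolding the definition $k \cdot t = \overbrace{t \ovee \cdots \ovee t}^{k}$ and using associativity (Lemma \ref{lm:ordering}.\ref{lm:assoc}), the nested sum rearranges to $(ab) \cdot (1/(ab))$, which is $\top$ by \Rntimesoneovern.

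With the helper in place, the main claim is a straightforward chain of equalities. Suppose $p/q = m/n$ as rationals and set $N = pn = mq$. The helper, instantiated at $(a,b) = (n,q)$ and $(a,b) = (q,n)$ respectively, gives
\[ n \cdot (1/(qn)) = 1/q, \qquad q \cdot (1/(qn)) = 1/n. \]
Substituting these into the two sides and regrouping, I would compute
\begin{align*}
p \cdot (1/q) &= p \cdot \bigl( n \cdot (1/(qn)) \bigr) = (pn) \cdot (1/(qn)) = N \cdot (1/(qn)) \\
              &= (mq) \cdot (1/(qn)) = m \cdot \bigl( q \cdot (1/(qn)) \bigr) = m \cdot (1/n),
\end{align*}
where each regrouping $k \cdot (\ell \cdot t) = (k\ell) \cdot t$ is a routine induction on $k$ using associativity of $\ovee$.

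The main obstacle is bookkeeping around the partial nature of $\ovee$: every intermediate sum must be verified well-typed. Since $p/q = m/n \leq 1$ we have $N \leq qn$, so the $N$-fold $\ovee$-sum of $1/(qn)$ is defined, and well-typedness of the left- and right-hand sides of the statement is an implicit hypothesis. The only fiddly edge case is $b = 1$ in the helper, where $1/1$ is not among the primitive scalar constants; this can be handled by a separate trivial check (or by fixing the convention $1/1 \eqdef \top$), and does not affect the argument above since $q, n \geq 2$ whenever the statement is non-vacuous.
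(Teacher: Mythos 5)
Your proof is correct and follows essentially the same route as the paper: both establish the helper identity $a \cdot (1/(ab)) = 1/b$ from \Rntimesoneovern{} and \Rdivide, and then conclude via the chain $p \cdot (1/q) = pn \cdot (1/(qn)) = mq \cdot (1/(qn)) = m \cdot (1/n)$. Your additional remarks on well-typedness of the intermediate sums and the $b=1$ edge case are details the paper's terser proof leaves implicit.
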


\begin{proof}
We first prove that $\vdash a \cdot (1 / a b) = 1 / b : \mathbf{2}$ for all $a$, $b$.  This holds because $ab \cdot (1 / ab) = \top$ by\Rntimesoneovern,
hence $a \cdot (1 / ab) = 1/b$ by\Rdivide.

Hence we have $p \cdot (1 / q) = pn \cdot (1 / nq) = qm \cdot (1 / n q) = m \cdot (1 /n)$.
\end{proof}

Recall that within $\COMET$, we are writing $m / n$ for the term $m \cdot (1 / n)$.

\begin{lemma}
\label{lm:rational}
Let $q$ and $r$ be rational numbers in $[0,1]$.
\begin{enumerate}
\item If $q \leq r$ in the usual ordering, then $\vdash q \leq r : \mathbf{2}$.
\item $\vdash q \ovee r : \mathbf{2}$ iff $q + r \leq 1$, in which case $\Gamma \vdash q \ovee r = q + r : \mathbf{2}$.
\item $\vdash q \andthen r = qr : \mathbf{2}$.
\end{enumerate}
\end{lemma}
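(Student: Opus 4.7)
The plan is to reduce everything to terms of the form $m \cdot (1/n)$ via a common denominator, then exploit the partial commutative monoid structure established in Lemma~\ref{lm:ordering} together with the rules \Rntimesoneovern{} and \Rdivide. Throughout, the previous lemma lets us rewrite any rational $p/q$ as $m \cdot (1/n)$ whenever $p/q = m/n$, so we may freely move to a common denominator $d$ and work with $q = m \cdot (1/d)$ and $r = k \cdot (1/d)$.

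For part~1, with $m \le k \le d$, I would first note that every prefix $j \cdot (1/d)$ for $0 \le j \le d$ is well-typed: starting from $d \cdot (1/d) = \top$ (which is well-defined) and peeling off summands using associativity, each initial partial sum exists. Then $r = m\cdot(1/d) \ovee (k-m)\cdot(1/d) = q \ovee (k-m)/d$, so $q \le r$ by Lemma~\ref{lm:oveeleq}. Part~2 ($\Leftarrow$) proceeds analogously: if $q + r = (m+k)/d \le 1$, then $m+k \le d$, so $(m+k)\cdot(1/d)$ is well-defined and factors as $m\cdot(1/d) \ovee k\cdot(1/d)$ by associativity; this gives both well-typedness of $q \ovee r$ and the equation $q \ovee r = q + r$. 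For the converse ($\Rightarrow$), if $m+k > d$, write $k = (d-m) + j$ with $j \ge 1$; then $m/d \ovee k/d$ being well-defined forces $\top \ovee j/d$ to be well-defined (by associativity), which by the Zero-One Law would force $j/d = \bot$. This final step is the one that requires an appeal to consistency of the system (i.e.\ that $1/d \ne \bot$), which follows because otherwise $\top = n \cdot (1/n) = \bot$ would collapse the theory --- this is the only essentially semantic point in the proof.

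For part~3, write $q = m \cdot (1/d)$ and $r = k \cdot (1/e)$. Iterated use of the two distributivity clauses in Proposition~\ref{prop:testops} yields
\[ q \andthen r = (m \cdot (1/d)) \andthen (k \cdot (1/e)) = mk \cdot \bigl((1/d) \andthen (1/e)\bigr). \]
It then suffices to show $(1/d) \andthen (1/e) = 1/(de)$, and by \Rdivide{} this reduces to proving $de \cdot s = \top$ for $s \eqdef (1/d) \andthen (1/e)$. Computing $d \cdot s = (d \cdot (1/d)) \andthen (1/e) = \top \andthen (1/e) = 1/e$ by distributivity and Proposition~\ref{prop:testops}, and then regrouping the $de$ summands as $e$ blocks of $d$ by associativity of $\ovee$, gives $de \cdot s = e \cdot (1/e) = \top$. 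Hence $s = 1/(de)$ and $q \andthen r = mk/(de) = qr$.

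The main obstacle throughout is bookkeeping for well-definedness of partial sums: every application of associativity, distributivity, or regrouping must be justified by exhibiting the enclosing sum as defined, usually by tracing back to $d \cdot (1/d) = \top$. The one genuinely delicate point is the ``only if'' half of part~2, where a syntactic argument alone cannot rule out pathological identifications like $1/d = \bot$, and one must appeal to consistency via the Zero-One Law.
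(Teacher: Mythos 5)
Your proposal is correct and follows essentially the same route as the paper: reduce to a common denominator via the preceding lemma, derive part 1 from $r = q \ovee (k-m)\cdot(1/d)$ and Lemma~\ref{lm:oveeleq}, handle part 2 by associativity in one direction and by the Zero--One Law plus consistency (Corollary~\ref{cor:consistency}) in the other, and prove part 3 by distributivity of $\andthen$ over $\ovee$ together with \Rdivide{} applied to $de \cdot \bigl((1/d)\andthen(1/e)\bigr) = \top$. The only differences are cosmetic bookkeeping about well-definedness of the iterated sums, which the paper leaves implicit.
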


\begin{proof}
  By the previous lemma, we may assume $q$ and $r$ have the same denominator. Let $q = a / n$ and $r = b / n$.
  \begin{enumerate}
  \item We have $a \leq b$, hence $\vdash a \cdot (1 / n) \leq b \cdot (1 / n) : \mathbf{2}$ by Lemma \ref{lm:ordering}.\ref{lm:leqovee}.
  \item If $q + r \leq 1$ then $\vdash a \cdot (1 / n) \ovee b \cdot (1 / n) = (a + b) \cdot (1 / n) : \mathbf{2}$ by Associativity.

For the converse, suppose $\vdash q \ovee r : \mathbf{2}$, so $\vdash (a + b) \cdot (1 / n) : \mathbf{2}$, and suppose for a contradiction $q + r > 1$.  Then we have
\[ \vdash \top \ovee (a + b - n) \cdot (1 / n) : \mathbf{2} \]
and so $\vdash (1 / n) = 0 : \mathbf{2}$ by the Zero-One Law, hence $\vdash \top = n \cdot (1 / n) = n \cdot 0 = \bot : \mathbf{2}$.  This contradicts Corollary \ref{cor:consistency}.
\item We first prove $(1 / a) \andthen (1 / b) = 1 / ab : \mathbf{2}$.  This holds because $ab \cdot (1 / a) \andthen (1 / b) = (a \cdot (1 / a)) \andthen (b \cdot (1 / b)) = \top \andthen \top = \top$.

Now we have, $(m / n) \andthen (p / q) = mp \cdot ((1 / n) \andthen (1 /q)) = mp \cdot (1 / nq)$ as required.
  \end{enumerate}
\end{proof}

\subsection{Normalisation}

The following lemma gives us a rule that allows us to calculate the normalised form of a substate in many cases, including the examples in Section \ref{section:examples}.

\begin{lemma}
Let $\vdash t : A + 1$, $\vdash p_1 \ovee \cdots \ovee p_n = \top : \mathbf{2}$, and $\vdash q : \mathbf{2}$.  Let $\vdash s_1, \ldots, s_n : A$.  Suppose $\vdash 1 / m \leq q : \mathbf{2}$.  If
\[ \vdash t = \meas\ p_1 \andthen q \mapsto \return{s_1} \mid \cdots \mid p_n \andthen q \mapsto \return{s_n} \mid q^\bot \mapsto \fail : A + 1 \]
then
\[ \vdash \norm{t} = \meas\ p_1 \mapsto s_1 \mid \cdots \mid p_n \mapsto s_n : A \]
\end{lemma}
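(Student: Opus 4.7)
The plan is to apply the $\eta$-rule \Retanorm, which reduces the claim to producing: (a) the candidate term $\rho := \meas\ p_1 \mapsto s_1 \mid \cdots \mid p_n \mapsto s_n$ of type $A$, (b) a scalar inequality $\vdash 1/k \leq \dom{t} : \mathbf{2}$ for some $k$, and (c) the fixed-point equation $\vdash t = \doo{\_}{t}{\return{\rho}} : A + 1$. Well-typedness of $\rho$ is immediate. For the domain bound, I push $\dom{\_}$ inside the measure (by distributing over the underlying case using Lemma \ref{lm:sub}.\ref{lm:casesub}), use $\dom{\return{s_i}} = \top$ and $\dom{\fail} = \bot$, and then apply Lemma \ref{lm:measuretwo}(1) together with the effect-module laws of Proposition \ref{prop:testops} (in particular $x \andthen \top = x$, $y \andthen \bot = \bot$, distributivity of $\andthen$ over $\ovee$, and $\top \andthen q = q$) to collapse $\dom{t}$ to $q$; the hypothesis $1/m \leq q$ then discharges (b) with $k = m$.

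The heart of the proof is (c). The strategy is to first refactor $t$ as a two-branch measure on the test $(q, q^\bot)$. Using Lemma \ref{lm:sub}.\ref{lm:casesub} to push $\inl{\_}$ inside a measure, I observe
\[ \return{\rho} = \meas\ p_1 \mapsto \return{s_1} \mid \cdots \mid p_n \mapsto \return{s_n}. \]
Then Lemma \ref{lm:measure}.\ref{lm:measureand} applied in the reverse direction, together with commutativity of $\andthen$ (Proposition \ref{prop:testops}.\ref{prop:odotcomm}), the identity $q^\bot \andthen \top = q^\bot$, and Lemma \ref{lm:measure}.\ref{lm:measuretop} (namely $\meas\ \top \mapsto \fail = \fail$), yield
\[ t \;=\; \meas\ q \mapsto \return{\rho} \mid q^\bot \mapsto \fail. \]

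With $t$ in this compact form, I compute $\doo{\_}{t}{\return{\rho}}$ by pushing the do inside the binary measure. Each branch simplifies via Lemma \ref{lm:do}: the $q$-branch becomes $\doo{\_}{\return{\rho}}{\return{\rho}} = \return{\rho}$, and the $q^\bot$-branch becomes $\doo{\_}{\fail}{\return{\rho}} = \fail$. Hence $\doo{\_}{t}{\return{\rho}} = \meas\ q \mapsto \return{\rho} \mid q^\bot \mapsto \fail = t$, which is (c). Invoking \Retanorm now delivers $\rho = \norm{t}$.

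The main obstacle I anticipate is the careful bookkeeping in the reverse application of Lemma \ref{lm:measure}.\ref{lm:measureand}: the inner measure on the $q^\bot$-side must be taken as the trivial one-branch measure $\meas\ \top \mapsto \fail$ so that flattening restores exactly the $q^\bot$-branch of $t$, and each summand $q \andthen p_j$ must be swapped to $p_j \andthen q$ via commutativity to match $t$'s branches as originally written. No individual step is deep, but the branches of the nested and flat measures must be carefully aligned in the derivation.
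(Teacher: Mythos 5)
Your proposal is correct and follows essentially the same route as the paper: reduce to the rule ($\eta$nrm) and establish $t = \doo{\_}{t}{\return{\rho}}$ by collapsing $t$ to the two-branch form $\meas\ q \mapsto \return{\rho} \mid q^\bot \mapsto \fail$ using Lemma \ref{lm:measure}.\ref{lm:measureand} and commutativity of $\andthen$. The only differences are cosmetic — you refactor $t$ first and then compute the do-term, whereas the paper computes $\doo{\_}{t}{\return{\rho}}$ on the original form and collapses afterwards — and you explicitly discharge the domain-bound premise $1/m \leq \dom{t}$, which the paper leaves implicit.
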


\begin{proof}
Let $\rho \eqdef \meas_{i=1}^n p_i \mapsto s_i$.
  By the rule\Retanorm, it is sufficient to prove that $t  = \doo{\_}{t}{\return{\rho}}$.
We have
\begin{align*}
  \doo{\_}{t}{\return{\rho}}
& = \meas\ p_1 \andthen q \mapsto \return{\rho} \mid \cdots \mid p_n \andthen q \mapsto \return{\rho} \mid q^\bot \mapsto \fail \\
& = \meas\ (p_1 \ovee \cdots \ovee p_n) \andthen q \mapsto \return{\rho} \mid q^\bot \mapsto \fail \\
& = \meas\ q \mapsto \return{\rho} \mid q^\bot \mapsto \fail \\
& = \meas\ q \mapsto \meas_{i=1}^n p_i \mapsto \return{s_i} \mid q^\bot \mapsto \fail \\
& = \meas_{i=1}^n\ q \andthen p_i \mapsto \return{s_i} \mid q^\bot \mapsto \fail \\
& = t
\end{align*}
(We used the commutativity of $\andthen$ in the last step.)
\end{proof}

\begin{corollary}
\label{cor:normmeasure}
Let $\alpha_1$, \ldots, $\alpha_n$, $\beta$ be rational numbers that sum to 1, with $\beta \neq 1$.  If
\[ \vdash t = \meas\ \alpha_1 \mapsto \return{s_1} \mid \cdots \mid \alpha_n \mapsto \return{s_n} \mid \beta \mapsto \fail : A + 1 \]
then
\[ \vdash \norm{t} = \meas\ \alpha_1 / (\alpha_1 + \cdots + \alpha_n) \mapsto s_1 \mid \cdots \mid \alpha_n / (\alpha_1 + \cdots + \alpha_n) \mapsto s_n : A \]
\end{corollary}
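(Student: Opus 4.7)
}

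The plan is to reduce the corollary directly to the preceding lemma by choosing $q$ and the $p_i$ appropriately. Set $q \eqdef \alpha_1 + \cdots + \alpha_n$, viewed as a rational scalar in $[0,1]$. Since $\alpha_1 + \cdots + \alpha_n + \beta = 1$ and $\beta \neq 1$, we have $q > 0$ as a rational number, so $q^\bot = \beta$ holds in $\COMET$ by Lemma~\ref{lm:rational} together with the definition of orthosupplement. For each $i$, set $p_i \eqdef \alpha_i / q$; these are well-defined rationals in $[0,1]$ precisely because $q > 0$, and by Lemma~\ref{lm:rational} the scalar $\vdash p_i : \mathbf{2}$ is derivable.

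Next I would discharge the three hypotheses of the preceding lemma. First, $\vdash p_1 \ovee \cdots \ovee p_n = \top : \mathbf{2}$ because the $p_i$'s are rationals summing to $(\alpha_1 + \cdots + \alpha_n)/q = 1$, and Lemma~\ref{lm:rational}(2) then gives the required equation in the calculus. Second, since $q > 0$ as a rational, writing $q$ with denominator $m$ gives $\vdash 1/m \leq q : \mathbf{2}$ by Lemma~\ref{lm:rational}(1). Third, for each $i$ we have $p_i \andthen q = (\alpha_i/q) \andthen q = \alpha_i$ by Lemma~\ref{lm:rational}(3). Applying these rewrites term-by-term inside the $\meas$ expression for $t$ (using functionality, Lemma~\ref{lm:meta}) turns the hypothesis of the corollary into the hypothesis of the preceding lemma.

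Having matched the hypotheses, the preceding lemma immediately yields
\[
\vdash \norm{t} = \meas\ p_1 \mapsto s_1 \mid \cdots \mid p_n \mapsto s_n : A,
\]
which, unfolding $p_i = \alpha_i/(\alpha_1 + \cdots + \alpha_n)$, is exactly the desired conclusion.

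The only real obstacle is the bookkeeping of the rational identities $p_i \andthen q = \alpha_i$ and $\bigovee_i p_i = \top$ inside $\COMET$, but all of these are routine consequences of Lemma~\ref{lm:rational}; no new structural reasoning about normalisation is needed beyond what the preceding lemma already packages.
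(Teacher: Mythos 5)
Your proposal is correct and is essentially the proof the paper intends: the corollary is the instantiation of the preceding lemma with $q = \alpha_1 + \cdots + \alpha_n$ and $p_i = \alpha_i / q$, with the hypotheses discharged by the rational-arithmetic facts of Lemma~\ref{lm:rational} (the identification $\beta = q^\bot$ also uses Proposition~\ref{prop:logic}(2), and the rewriting inside $\meas$ is licensed by the uniqueness in Lemma~\ref{lm:ntest}). The paper states the corollary without proof precisely because this routine instantiation is all that is needed.
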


\section{Semantics}
\label{section:semantics}

The terms of $\COMET$ are intended to represent probabilistic programs.
We show how to give semantics to our system in three different ways: using discrete and continuous probability distributions, and
simple set-theoretic semantics for deterministic computation.

\subsection{Discrete Probabilistic Computation}
\label{section:dpc}

We give an interpretation that assigns, to each term, a discrete probability distribution over its output type.

\begin{definition}
Let $A$ be a set.
\begin{itemize}
\item
The \emph{support} of a function $\phi : A \rightarrow [0,1]$ is $\supp \phi = \{ a \in A : \phi(a) \neq 0 \}$.
\item 
A \emph{(discrete) probability distribution} over $A$ is a function $\phi : A \rightarrow \phi$ with finite support
such that $\sum_{a \in A} \phi(a) = 1$.
\item 
Let $\mathcal{D} A$ be the set of all probability distributions on $A$.
\end{itemize}
\end{definition}

We shall interpret every type $A$ as a set
$\brackets{A}$.  Assume we are given a set $\brackets{\mathbf{C}}$ for each type constant $\mathbf{C}$.
Define a set $\brackets{A}$ for each type $A$ thus:
\[ \brackets{0} = \emptyset \qquad \brackets{1} = \{ * \} \qquad \brackets{A + B} = \brackets{A} \uplus \brackets{B} \qquad \brackets{A \sotimes B} = \brackets{A} \times \brackets{B} \]
where $A \uplus B = \{ a_1 : a \in A \} \cup \{ b_2 : b \in B \}$.  We extend this to contexts by defining $\brackets{x_1 : A_1, \ldots, x_n : A_n} = \brackets{A_1} \times \cdots \times \brackets{A_n}$.

Now, to every term $x_1 : A_1, \ldots, x_n : A_n \vdash t : B$, we assign a function
$\brackets{t} : \brackets{A_1} \times \cdots \times \brackets{A_n} \rightarrow \mathcal{D} \brackets{B}$.
The value $\brackets{t}(a_1, \ldots, a_n)(b) \in [0,1]$ will be written as $P(t(a_1, \ldots, a_n) = b)$, and should be thought of as the probability
that $b$ will be the output if $a_1$, \ldots, $a_n$ are the inputs.

\begin{figure}
\begin{mdframed}
\begin{multicols}{2}
$$\begin{aligned}
P(x_i(\vec{a}) = b) & = \begin{cases}
1 \text{ if } b = a_i \\
 0 \text{ if } b \neq a_i 
\end{cases} \\ \midrule
P(*(\vec{a}) = *) & = 1 \\ \midrule
\multicolumn{2}{l}{$P((s \sotimes t)(\vec{g}, \vec{d}) = (a,b))$} \\
& = P(s(\vec{g}) = a) P(t(\vec{d}) = b) \\ \midrule
P((\magic{t})(\vec{g}) = a) & = 0 \\ \midrule
P(\inl{t}(\vec{g}) = a_1) & = P(t(\vec{g}) = a) \\
P(\inl{t}(\vec{g}) = b_2) & = 0 \\ \midrule
P(\inr{t}(\vec{g}) = a_1) & = 0 \\
P(\inr{t}(\vec{g}) = b_2) & = P(t(\vec{g}) = b) \\ \midrule
P(\inlr{s}{t}(\vec{g}) = a_1) & = P(s(\vec{g}) = a_1) \\
P(\inlr{s}{t}(\vec{g}) = b_2) & = P(t(\vec{g}) = b_1)
\end{aligned}$$

$\begin{aligned}
\multicolumn{2}{l}{$P(\lft{t}(\vec{g}) = a) = P(t(\vec{g}) = a_1)$} \\ \midrule
\multicolumn{2}{l}{$P(\instr_{\lambda x t}(s)(\vec{g}) = a_i)$} \\ & = P(s(\vec{g}) = a) P(t(a) = i) \\ \midrule
\multicolumn{2}{l}{$P(1 / n(\vec{g}) = \top) = 1 / n$} \\
\multicolumn{2}{l}{$P(1 / n(\vec{g}) = \bot) = (n - 1) / n$} \\ \midrule
\multicolumn{2}{l}{$P(\norm{t}(\vec{g}) = a)$} \\ & \ = P(t(\vec{g}) = a_1) / (1 - P(t(\vec{g}) = *_2)) \\ \midrule
\multicolumn{2}{l}{$P((s \ovee t)(\vec{g}) = a_1)$} \\ & \ = P(s(\vec{g}) = a_1) + P(t(\vec{g}) = a_1) \\
\multicolumn{2}{l}{$P((s \ovee t)(\vec{g}) = *_2)$} \\ & \ = P(s(\vec{g}) = *_2) + P(t(\vec{g}) = *_2) - 1
\end{aligned}$
\end{multicols}
$\begin{aligned}
& P((\plet{x}{y}{s}{t})(\vec{g},\vec{d}) = c) = \sum_a \sum_b P(s(\vec{g}) = (a,b)) P(t(\vec{d},a,b) = c) \\ \midrule
& P(\pcase{r}{x}{s}{y}{t}(\vec{g},\vec{d}) = c) \\
& = \sum_a P(r(\vec{g}) = a_1) P(s(\vec{d}, a) = c) + 
\sum_b P(r(\vec{g}) = b_2) P(t(\vec{d}, b) = c)
\end{aligned}$
\end{mdframed}
\end{figure}

The sums involved here are all well-defined because, for all $t$ and $\vec{g}$, the function $P(t(\vec{g}) = -)$ has finite support.

\begin{lemma}
Let $\Gamma \vdash s : A$ and $\Delta, x : A \vdash t : B$, so that $\Gamma, \Delta \vdash t[x:=s] : B$.  Then
\[ P(t[x:=s](\vec{g}, \vec{d}) = b) = \sum_{a \in \brackets{A}} P(s(\vec{g}) = a) P(t(\vec{d},a) = b) \]
\end{lemma}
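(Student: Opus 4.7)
The plan is to prove the lemma by structural induction on the term $t$, unfolding the semantic clause defining $P$ in each case and invoking the induction hypothesis on subterms. The essential ingredient throughout is that $\sum_{a \in \brackets{A}} P(s(\vec{g}) = a) = 1$, since $P(s(\vec{g}) = -)$ is by construction a probability distribution on $\brackets{A}$.

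For the base cases, three situations arise. If $t$ is a variable $y$ distinct from $x$ drawn from $\vec{d}$, then $t[x:=s] = y$ and the left side is simply the projection $P(y(\vec{d}) = b)$, while the right side factors as $P(y(\vec{d}) = b) \cdot \sum_a P(s(\vec{g}) = a) = P(y(\vec{d}) = b)$. If $t = x$, then $t[x:=s] = s$, and the right-hand sum collapses because $P(x(a) = b)$ equals $1$ when $a = b$ and $0$ otherwise, leaving $P(s(\vec{g}) = b)$. Constants such as $*$ and $1/n$, together with semantically closed subterms like $\norm{u}$ (whose premise in \Rnorm{} forces $u$ to be closed), are immediate by the same unit-sum argument.

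For the inductive cases one expands the clause for the outer term former and applies the IH to each subterm. Constructs with a split context, namely $u \sotimes v$, $\plet{y}{z}{r}{u}$, and $\pcase{r}{y}{u}{z}{v}$, place the substituted variable $x$ in exactly one of the two sub-contexts, so the IH applies there and the other factor is untouched; the resulting double sum is then reindexed by Fubini, which is valid because all supports are finite. Constructs sharing a single context, including $s' \ovee t'$, $\inlr{s'}{t'}$, $\instr_{\lambda y u}(r)$, $\lft{r}$, $\inl{r}$, $\inr{r}$, and $\magic{r}$, are handled by applying the IH to each subterm in parallel and using linearity of the outer clause in its subterms' probability values, together with commuting the (finite) sum $\sum_a P(s(\vec{g})=a) \cdot (-)$ through the arithmetic operations (addition, multiplication, division by the constant $1 - P(t(\vec{d})=*_2)$).

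The main obstacle is purely bookkeeping: keeping the context splittings straight across the many term formers, and verifying that each semantic clause in the table is linear in each subterm probability in precisely the sense required to commute with $\sum_a P(s(\vec{g})=a) \cdot (-)$. There is no deep analytic content: all distributions are finitely supported, so every rearrangement of sums is legitimate, and normalisation does not interact with substitution since $\norm{u}$ is only well-formed when $u$ is closed.
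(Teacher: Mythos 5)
Your proof is correct and follows essentially the same route as the paper: induction on the structure of $t$, with the variable cases resolved by the unit-sum property of $P(s(\vec{g})=-)$ and the remaining cases by pushing $\sum_a P(s(\vec{g})=a)\cdot(-)$ through the semantic clauses. The paper only writes out the case $t\equiv x$ and declares the rest routine; your treatment of the other cases (including the observation that $\norm{u}$ is immune to substitution because $u$ must be closed, and that the affine clause for $\ovee$ still commutes with the sum since the weights total $1$) is a faithful filling-in of that same argument.
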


\begin{proof}
The proof is by induction on $t$.  We do here the case where $t \equiv x$:
\[ P(x[x:=s](\vec{g}) = b) = P(s(\vec{g}) = b) \]
and
\[ \sum_a P(s(\vec{g}) = a) P(x(a) = b) = P(s(\vec{g}) = b) \]
since $P(x(a) = b)$ is 0 if $a \neq b$ and 1 if $a = b$.
\end{proof}

\begin{theorem}[Soundness]
  \begin{enumerate}
  \item If $\Gamma \vdash t : A$ is derivable, then for all $\vec{g} \in \brackets{\Gamma}$, we have $P(t(\vec{g}) = -)$ is a
probability distribution on $\brackets{A}$.
\item If $\Gamma \vdash s = t : A$, then $P(s(\vec{g}) = a) = P(t(\vec{g}) = a)$.
  \end{enumerate}
\end{theorem}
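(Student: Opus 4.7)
The plan is to prove both parts simultaneously by induction on the derivation of $\Gamma \vdash \mathcal{J}$. The simultaneous induction is forced on us because typing and equality judgements are mutually defined: the rule \Rinlr\ for forming $\inlr{s}{t}$ has an equality judgement $\dom{s} = \ker{t}$ among its premises, so we need the inductive hypothesis for part 2 in order to verify the part-1 side condition semantically. The semantic substitution lemma stated just above the theorem will be the main technical tool throughout.

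For part 1 we must show at each $\vec{g}$ that $P(t(\vec{g}) = -)$ takes values in $[0,1]$, has finite support, and sums to $1$. The typing rules without side conditions --- variables, $*$, $\sotimes$, $\plet$, $\inln$, $\inrn$, $\mathsf{case}$, $\magic{\cdot}$, $\mathsf{instr}$, and $1/n$ --- are direct, since the defining clauses express $\brackets{t}$ as a convex combination of distributions given by the inductive hypothesis. The rules with side conditions are the interesting ones. For $\inlr{s}{t}$, the premise $\dom{s} = \ker{t}$ gives by the IH for part 2 that $1 - P(s(\vec{g}) = *_2) = P(t(\vec{g}) = *_2)$, and a short calculation shows the total mass of $\brackets{\inlr{s}{t}}$ equals $1$. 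For $\lft{t}$ the premise forces $P(t(\vec{g}) = *_2) = 0$. For $\norm{t}$ the premise $1/n \leq \dom{t}$ guarantees $1 - P(t(\vec{g}) = *_2) \geq 1/n > 0$, so the quotient is well defined and sums to~$1$. For $s \ovee t$, the premises furnish a bound $b$ whose partial projections equal $s$ and $t$; reading off the mass equations from $b$ (itself a probability distribution on $(A+A)+1$ by the IH) shows that the pointwise sums defining $\brackets{s \ovee t}$ stay within $[0,1]$ and total $1$.

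For part 2, the reflexivity, symmetry, transitivity, congruence and weakening rules follow immediately from the IH. Each $\beta$-rule (\Rbeta, \Rbetaplusone, \Rbetaplustwo, \Rbetainlrone, \Rbetainlrtwo, \Rbetaleft, \Rbetanorm) reduces via the semantic substitution lemma to a pointwise equality that is immediate from the defining clauses of the semantics. The $\eta$-rules (\Retaone, \Reta, \Retaplus, \Retainlr, \Retaleft, \Retanorm, \Retainstr) are handled by case-splitting the sample; for instance \Retaplus\ follows since conditional on $t(\vec{g}) = a_1$ the right-hand side returns $a_1$ with probability $1$. The arithmetic axioms \Rntimesoneovern\ and \Rdivide\ are direct computations on scalars, as is \Rcomm. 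The ordering rule \RleqI\ and the definition rule \Roveedef\ are dispatched by unpacking the bound $b$ and using that its projections agree with $s$ and $t$.

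The main obstacle is the $\ovee$ fragment together with \Rinlr, where one must check that the symbolic side conditions translate exactly into the semantic constraints that make the naive pointwise sums of probabilities stay within $[0,1]$. This is the semantic counterpart of the partial commutative monoid structure established abstractly in Lemma \ref{lm:ordering}; the key computation is that whenever $b$ is the unique bound provided by \RJMprime, the sum $P(s(\vec{g}) = a_1) + P(t(\vec{g}) = a_1)$ can be rewritten (using \Rbetainlrone\ and \Rbetainlrtwo\ under the IH) as a value coming from the subdistribution $\brackets{b}$ on $(A+A)+1$, hence automatically at most $1$.
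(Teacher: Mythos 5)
Your proposal is correct and follows essentially the same route as the paper: a simultaneous induction on derivations of typing and equality judgements, with the semantic substitution lemma as the key tool. The paper's own proof is only a one-line sketch plus the single case of (instr-test); your write-up usefully identifies the genuinely non-trivial cases (the rules with equality side conditions: partial pairing, $\mathsf{left}$, normalisation, and $\ovee$) but does not depart from the paper's method.
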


\begin{proof}
  The proof is by induction on derivations.  We do here the case of the rule\Rinstrtest:
  \begin{align*}
&    P((\case_i\ \instr_{\lambda x t}(s) \of \nin{i}{n}{\_} \mapsto i)(\vec{g}) = i) \\
& = \sum_{j = 1}^n \sum_{a \in \brackets{A}} P(\instr_{\lambda x t}(s)(\vec{g}) = a_j) P(\nin{i}{n}{*}() = *_j) \\
& = \sum_{a \in \brackets{A}} P(\instr_{\lambda x t}(s)(\vec{g}) = a_i) \\
& = \sum_{a \in \brackets{A}} P(s(\vec{g}) = a) P(t(a) = i) \\
& = P(t[x:=s](\vec{g}) = i)
  \end{align*}
by the lemma.
\end{proof}

\begin{corollary}
  If $\Gamma \vdash s \leq t : A + 1$ then $P(s(\vec{g}) = a) \leq P(t(\vec{g}) = a)$ for all $\vec{g}$, $a$.
\end{corollary}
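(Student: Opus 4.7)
The plan is to reduce the inequality $s \leq t$ to an equation involving $\ovee$ and then appeal to the soundness clause already proved for equational judgements. Concretely, Lemma~\ref{lm:oveeleq} tells us that $\Gamma \vdash s \leq t : A + 1$ holds if and only if there exists a term $u$ with $\Gamma \vdash s \ovee u = t : A + 1$. So my first step is to invoke that lemma to obtain such a $u$, noting in particular that this $u$ is itself well-typed of type $A + 1$ so that the soundness clause of the preceding theorem applies to it.

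Next, I apply the equational soundness (part 2 of the theorem just proved) to the derived equation $s \ovee u = t$, giving $P((s \ovee u)(\vec{g}) = a) = P(t(\vec{g}) = a)$ for every $\vec{g} \in \brackets{\Gamma}$ and every $a \in \brackets{A+1}$. Then I unfold the interpretation of $\ovee$ from the semantic clauses in Section~\ref{section:dpc}. For $a$ of the form $a'_1$ with $a' \in \brackets{A}$, the clause gives
\[ P(t(\vec{g}) = a'_1) = P(s(\vec{g}) = a'_1) + P(u(\vec{g}) = a'_1), \]
and since $P(u(\vec{g}) = a'_1) \geq 0$ because $P(u(\vec{g}) = -)$ is a (sub)distribution, the desired inequality $P(s(\vec{g}) = a'_1) \leq P(t(\vec{g}) = a'_1)$ follows immediately.

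The only subtle point is what happens for the divergence component $a = *_2$. Here the semantic clause reads $P(t = *_2) = P(s = *_2) + P(u = *_2) - 1$, and because $P(u(\vec{g}) = *_2) \leq 1$ we actually obtain $P(t(\vec{g}) = *_2) \leq P(s(\vec{g}) = *_2)$ — the inequality reverses, as one expects, since a larger partial map diverges less often. So the natural reading of the corollary is to restrict to the convergent outputs $a \in \brackets{A}$ (viewed as $a_1 \in \brackets{A+1}$), and this is exactly the case dispatched above. The main (mild) obstacle is therefore only bookkeeping about which component of the coproduct $a$ lies in; once Lemma~\ref{lm:oveeleq} is applied, the proof is a one-line unfolding of the semantic definition of $\ovee$ plus non-negativity of probabilities.
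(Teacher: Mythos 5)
Your proof is correct and is exactly the argument the paper leaves implicit: reduce $\Gamma \vdash s \leq t : A+1$ to an equation $\Gamma \vdash s \ovee u = t : A+1$ via Lemma~\ref{lm:oveeleq}, apply the equational soundness clause to that equation, and read off the inequality from the semantic clause for $\ovee$ together with non-negativity of $P(u(\vec{g}) = a_1)$. Your observation that the statement must be read with $a$ ranging over the convergent outputs $a_1$ for $a \in \brackets{A}$ — since on the divergence component $*_2$ the semantic clause gives $P(t(\vec{g}) = *_2) \leq P(s(\vec{g}) = *_2)$, i.e.\ the inequality reverses — is a correct and worthwhile clarification of how the corollary is intended.
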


As a corollary, we know that $\COMET$ is non-degenerate:

\begin{corollary}
\label{cor:consistency}
  Not every judgement is derivable; in particular, the judgement $\vdash \top = \bot : \mathbf{2}$ is not derivable.
\end{corollary}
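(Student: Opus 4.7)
The plan is to derive this as an immediate corollary of the Soundness Theorem together with the preceding interpretation into discrete probability distributions. Since we already have a sound semantic interpretation $\brackets{-}$ that assigns to each closed term of type $\mathbf{2}$ a probability distribution on the two-element set $\brackets{\mathbf{2}} = \brackets{1} \uplus \brackets{1} = \{*_1, *_2\}$, it suffices to exhibit a semantic distinction between $\top$ and $\bot$.

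First I would compute the semantics of the two constants. Unfolding the definitions $\top \eqdef \inl{*}$ and $\bot \eqdef \inr{*}$, together with the clauses for $\inln$, $\inrn$, and $*$ in the semantics, gives
\begin{align*}
P(\top() = *_1) &= P(\inl{*}() = *_1) = P(*() = *) = 1, \\
P(\bot() = *_1) &= P(\inr{*}() = *_1) = 0.
\end{align*}
Thus the two distributions on $\brackets{\mathbf{2}}$ associated to $\top$ and $\bot$ differ on the element $*_1$.

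Now suppose, for contradiction, that $\vdash \top = \bot : \mathbf{2}$ were derivable. By part~2 of the Soundness Theorem (applied with the empty context, so $\vec{g}$ is the empty tuple), we would have $P(\top() = a) = P(\bot() = a)$ for every $a \in \brackets{\mathbf{2}}$, and in particular for $a = *_1$. But then $1 = 0$ in the real interval $[0,1]$, which is absurd. Hence $\vdash \top = \bot : \mathbf{2}$ is not derivable, showing that not every judgement is derivable.

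The only subtlety I anticipate is bookkeeping: making sure the semantic clauses for $\inln$ and $*$ are applied in the right order and that $*_1$ really denotes the left summand (as stipulated by $\brackets{A+B} = \brackets{A} \uplus \brackets{B}$). Beyond this, the argument is a straightforward application of soundness, and no further structural analysis of derivations is needed.
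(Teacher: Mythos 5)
Your proof is correct and follows exactly the route the paper intends: the corollary is stated immediately after the Soundness Theorem precisely so that it follows by computing $P(\top() = *_1) = 1 \neq 0 = P(\bot() = *_1)$ and applying soundness for equations. No gaps; the bookkeeping about $*_1$ being the left summand is handled correctly.
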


With these definitions, we can calculate the semantics of each of our defined constructions.  For example,
the semantics of $\mathsf{assert}$ are given by
\[ P(\assert_{\lambda x p}(t)(\vec{g}) = a_1) = P(t(\vec{g}) = a)P(p(a) = \top) \]
\[ P(\assert_{\lambda x p}(t)(\vec{g}) = *_2) = \sum_a P(t(\vec{g}) = a) P(p(a) = \bot) \]

\subsection{Alternative Semantics}

It is also possible to give semantics to $\COMET$ using continuous probabilities.  We assign a measurable space $\brackets{A}$ to every type $A$.  Each term then gives a measurable function $\brackets{A_1} \times \cdots \times \brackets{A_n} \rightarrow \mathcal{G} \brackets{B}$, where $\mathcal{G} X$ is the space of all probability distributions over the measurable space $X$.  ($\mathcal{G}$ here is the \emph{Giry monad} \cite{Jacobs13a}.)

If we remove the constants $1 / n$ from the system, we can give \emph{deterministic} semantics to the subsystem, in which we assign a set to every type, and a function $\brackets{A_1} \times \cdots \times \brackets{A_n} \rightarrow \brackets{B}$. 

More generally, we can give an interpretation of $\COMET$ in any \emph{commutative monoidal effectus with normalisation}
in which there exists a scalar $s$ such that $n \cdot s = 1$ for all positive integers $n$ \cite{Cho}.  The discrete and continuous semantics we have described are two instances of this interpretation.

\section{Conclusion}

The system $\COMET$ allows for the specification of probabilistic programs and reasoning about their properties, both within the same syntax.  

There are several avenues for further work and research.
\begin{itemize}
\item The type theory that we describe can be interpreted both in
  discrete and in continuous probabilistic models, that is, both in
  the Kleisli category $\Kl(\Dst)$ of the distribution monad $\Dst$
  and in the Kleisli category $\Kl(\Giry)$ of the Giry monad $\Giry$.
  On a finite type each distribution is discrete. The discrete semantics were exploited in
  the current paper in the examples in Section~\ref{section:examples}.
  In a follow-up version we intend to elaborate also continuous
  examples.

\item The normalisation and conditioning that we use in this paper can
  in principle also be used in a quantum context, using the
  appropriate (non-side-effect free) assert maps that one has
  there. This will give a form of Bayesian quantum theory, as also
  explored in~\cite{LeiferS13}.

\item A further ambitious follow-up project is to develop tool support
  for $\COMET$, so that the computations that we carry out here by
  hand can be automated. This will provide a formal language for
  Bayesian inference.
\end{itemize}

\subparagraph*{Acknowledgements}

Thanks to Kenta Cho for discussion and suggestions during the writing of this paper, and very detailed proofreading.  Thanks to Bas Westerbaan for discussions about effectus theory.

\bibliography{probable}

\appendix

\section{Formal Presentation of $\COMET$}
\label{section:rules}

The full set of rules of deduction for $\COMET$ are given below.

\subsection{Structural Rules}
\label{section:structural}
$$ \Texch \qquad \Tvar $$

The exchange rule says that the order of the variables in the context does not matter. This holds
for all types of judgements J on the right hand side of the turnstile.   The weakening rule is admissible (see Lemma \ref{lm:meta}.\ref{lm:weak}), and says
that one may add (unused) assumptions to the context. 

However, we do \emph{not} have the contraction rule in our type theory.  In particular, the judgement $x : A \vdash x \otimes x : A \otimes A$ is \emph{not} derivable.
Thus, in our probabilistic settings, information may be discarded, but cannot be duplicated.  

$$ \Tref \; \Tsym \; \Ttrans $$

These rules simply ensure that the judgement equality is an equivalence relation.

\subsection{The Singleton Type}

$$ \Tunit \quad \Tetaone $$

These ensure that the type $1$ is a type with only one object up to equality.

\subsection{Tensor Product}

$$ \Tpair \; \Tlett $$
$$ \Tpaireq $$
$$ \Tleteq $$

Notice that in rule\Rpair the contexts $\Gamma$ and
$\Delta$ of the two terms $s$, $t$ are put together in the
conclusion. Thus, the tensor $s \sotimes t$ on terms is a form of
parallel composition. This is a so-called \emph{introduction rule} for the
tensor type, since it tells us how to produce terms in a tensor type
$A\otimes B$ on the right hand side of the turnstile $\vdash$. The
rule\Rlett is an \emph{elimination rule} since it tells us how to use terms
of tensor type.

$$ \Tbeta $$
$$ \Teta $$

Rule\Rbeta tells how a let
term should decompose a term $r \sotimes s$, namely by simultaneously
substituting $r$ for $x$ and $s$ for $y$ in as described in the term
$t[x:=r,y:=s]$.  Rule\Reta is its dual, and says that decomposing an object then
immediately recomposing it does nothing.

$$ \Tletlet $$
$$ \Tletpair $$

\noindent Our final set of rules are so-called commuting conversion
rules described above. They regulate the proper interaction between
the term constructs let, case and $\sotimes$.  It looks like several interactions are missing here (a $\lett$ on the right
of a tensor, a $\lett$ inside a $\case$, etc.), but in fact, the rules for all the other cases can be derived from these four, as we show in
Lemma \ref{lm:sub}.\ref{lm:letsub}.

\subsection{Empty Type}

$$ \Tmagic \quad \Tetazero $$

The rule\Rmagic
says that from an inhabitant $M:0$ we can produce an inhabitant
$\magic{M}$ in any type $A$. Intuitively, this says `If the empty type is inhabited, then every type is inhabited', which is vacuously true.
And\Retazero says that vacuously, if the empty type $0$ is inhabited, then all terms of any type are equal.

\subsection{Binary Coproducts}
\label{section:coproducts}
$$ \Tinl \quad \Tinr $$
$$ \Tinleq \quad \Tinreq $$
$$\Tcase $$
$$\Tcaseeq $$

For the coproduct type $A+B$ we have two introduction rules\Rinl
and\Rinr which produce terms $\inl{s}, \inr{t} : A+B$, coming from
$s:A$ and $t:B$.  These operations $\inl{-}$ and $\inr{-}$ are often
called \emph{coprojections} or \emph{injections}.

The associated elimination
rule\Rcase produces a term that uses a term $r:A+B$ by distinguishing
whether or not $r$ is of the form $\inl{-}$ or $\inr{-}$. In the first
case the outcome of $r$ is used in term $s$, and in the second case in
term $t$.

$$ \Tbetaplusone $$
$$ \Tbetaplustwo $$
$$ \Tetaplus $$

There are two $\beta$-conversions\Rbetaplusone and\Rbetaplustwo
for the coproduct type, describing how a $\mathsf{case}$ term should handle a
term of form $\inl{r}$ or $\inr{r}$. Again this this
done via the expected substitution, using the appropriate variable
($x$ or $y$).

In rule\Retaplus, if the decomposition of $t$ into $\inl{-}$ and $\inr{-}$
is then immediately reconstituted, then the input is unchanged.
$$ \Tcasecase $$
$$ \Tcasepair $$
$$ \Tletcase $$
These rules for commuting conversions show how the eliminators for $\otimes$ and $+$ interact.  Again,
the other cases can be derived from the primitive rules given here (Lemma \ref{lm:sub}).

\subsection{Partial Pairing}
\label{section:effectus}

We now come to the constructions that are new to our type theory.  These possess a feature that is unique to this type theory:
we allow typing judgements (of the form $t : A$) to depend on equality judgements (of the form $s = t : A$).

$$ \Tinlr $$
$$ \Tinlreq $$

The term $\inlr{s}{t}$ can be understood in this way.  Consider a term $\Gamma \vdash t : A + 1$ as a partial computation:
it may output a value of type $A$, or it may diverge (if it reduces to $\inr{*}$.)  If the judgement $s \downarrow = t \uparrow$ holds,
then we know that exactly one of the computations $s$ and $t$ will terminate on any input.  The term $\inlr{s}{t}$ intuitively denotes the following computation:
given an input, decide which of $s$ or $t$ will terminate.  If $s$ will terminate, run $s$; otherwise, run $t$.


We have the following $\beta$- and $\eta$-rules for the $\inlrn$ construction:

$$ \Tbetainlrone $$
$$ \Tbetainlrtwo $$
$$ \Tetainlr $$

\subsection{The $\lft{}$ Construction}

$$ \Tleft $$
$$ \Tlefteq $$

The term $\lft{t}$ should be understood as follows: if we have a term $t : A + B$ and a `proof' that $t = \inl{s}$ for some term $s : A$, then
$\lft{t}$ is that term $s$.  The computation rules for this construction are:

$$ \Tbetaleft \Tetaleft $$

\subsection{Joint Monicity Condition}
\label{section:JM}

We need the following rule for technical reasons.  It corresponds to the condition that the two maps $\rhd_!$ and $\rhd_2$ from $A + A$ to $A$ are jointly monic
in the partial form of the effectus (see \cite{Jacobs14} Assumption 1 or \cite{Cho} Lemma 49.4).

\TTJMprime

It is used in the proof of the associativity of $\ovee$ (Lemma \ref{lm:ordering}.\ref{lm:assoc}).

\subsection{Instruments}
\label{section:instruments}

The \emph{instrument} map $\instr_{\lambda x t}(s)$ should be understood as follows: it denotes the value $\nin{i}{n}{s}$ if $t[x:=s]$ returns the value $i : \mathbf{n}$.

If we were allowed to simply duplicate data, we could have defined $\measure{x}{p}{t}$ to be $\pcase{[t/x]p}{\_}{\inl{t}}{\_}{\inr{t}}$.  This cannot be done in our system, as it would involve duplicating the variables in $t$.

The computation rules for this construction are as follows.

$$ \Tinstr \quad \Tnablainstr $$
$$ \Tinstrtest $$
$$ \Tetainstr $$
$$ \Tinstreq $$

We also introduce the following rule, which ensures that the sequential product $\andthen$ is commutative.

\TTcomm

\subsection{Scalar Constants}

For any natural number $n \geq 2$, we have the following rules.

$$ \Toneovern \; \Tntimesoneovern $$
$$ \Tdivide \; \Tboundmn $$
$$ \Trhdoneboundmn $$
$$ \Trhdtwoboundmnprime $$

These ensure that $1 / n$ is the unique scalar whose sum with itself $n$ times is $\top$.
The term $b_{mn}$ is required to ensure that the term $1 / n \ovee \cdots \ovee 1 / n$ is well-typed.

\subsection{Normalisation}

Finally, we have these rules for normalisation.

$$ \Tnorm \; \Tbetanorm $$
$$ \Tetanorm $$

These ensure that, if $t$ is a non-zero state in $A + 1$, then $\rho$ is the unique state in $A$ such that
$t = \doo{\_}{t}{\return{\rho}}$.  

\section{Proof of Associativity}
\label{section:associativity}

\begin{theorem}
If $\Gamma \vdash (r \ovee s) \ovee t : A + 1$, then $\Gamma \vdash r \ovee (s \ovee t) : A + 1$ and $\Gamma \vdash r \ovee (s \ovee t) = (r \ovee s) \ovee t : A + 1$.
\end{theorem}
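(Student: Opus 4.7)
Assume $\Gamma \vdash (r \ovee s) \ovee t : A + 1$. Inverting the rule~\Roveeprime twice, the plan is to start with a bound $b : (A+A)+1$ for the outer sum, satisfying $b \goesto \rhd_1 = r \ovee s$ and $b \goesto \rhd_2 = t$, together with a bound $c : (A+A)+1$ for the inner sum, satisfying $c \goesto \rhd_1 = r$ and $c \goesto \rhd_2 = s$. Combining these projection equations and~\Roveedef yields
$\doo{z}{b}{\rhd_1(z)} = \doo{x}{c}{\return{\nabla(x)}}$,
which asserts that the "left half" of $b$, viewed as a partial map, coincides with the codiagonalisation of $c$; this is the bridging fact I shall rely on throughout.

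My plan is to build from $b$ and $c$ a \emph{ternary bound} $e : ((A+A)+A)+1$ whose three partial projections $e \goesto \rhd_i : A+1$ (for $i=1,2,3$) are $r$, $s$, $t$ respectively. Given such an $e$, a routine case-analysis reads off a bound $b' : (A+A)+1$ for $s \ovee t$ (by forgetting the first summand) and then a bound $b'' : (A+A)+1$ for $r \ovee (s \ovee t)$ (by pairing the first summand with the codiagonal of the other two). This delivers the well-typedness of both $s \ovee t$ and $r \ovee (s \ovee t)$. Moreover, by~\Roveedef each parenthesisation of the sum equals $\doo{x}{e}{\return{\nabla^{3}(x)}}$, where $\nabla^{3} : (A+A)+A \rightarrow A$ is the ternary codiagonal; this common value delivers the required equality.

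The main obstacle, as I see it, is the construction of $e$ itself. My strategy is to apply the partial-pairing rule~\Rinlr to splice $c$, which already separates $r$ from $s$, with a partial term derived from $b$ that isolates the $t$-component and absorbs the "nothing happened" residue; the required compatibility $\dom{\cdot} = \ker{\cdot}$ will reduce, after unfolding via Lemma~\ref{lm:do} and the commuting conversions of Lemma~\ref{lm:sub}, to the bridging equation above. Throughout the argument the joint monicity rule~\RJMprime plays a key role: it guarantees that every bound is uniquely pinned down by its two $\rhd_i$-projections, so that the different regroupings of $e$ must yield provably equal partial sums. Once $e$ is in hand, verifying that its three projections equal $r$, $s$, and $t$ is a mechanical unfolding using the $\beta$- and $\eta$-rules for coproducts and for $\inlrn$.
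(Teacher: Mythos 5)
Your proposal is correct and follows essentially the same route as the paper: both proofs splice the inner bound with a partial term extracted from the outer bound via the partial-pairing rule (checking the $\dom{\cdot}=\ker{\cdot}$ side condition by reducing both sides to $\dom{(r\ovee s)}$, i.e.\ your ``bridging equation''), obtain a ternary bound $e$ whose three partial projections are $r$, $s$, $t$, read off bounds for $s\ovee t$ and $r\ovee(s\ovee t)$ by regrouping, and use \RJMprime to identify the original outer bound with the appropriate regrouping of $e$ so that both parenthesisations equal the ternary codiagonalisation of $e$. The only difference is cosmetic (the paper's $e$ lives in $(A+A)+(A+1)$ rather than $((A+A)+A)+1$, and the explicit case terms $d$, $e'$, $e''$ are written out), so your plan fills in to the paper's proof essentially verbatim.
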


(Note: this proof follows the proofs that $\ovee$ is associative in an effectus, found in \cite{Jacobs14} Proposition 12 or \cite{Cho} Proposition 13.)
 
\begin{proof}
Let $b$ be a bound for $r \ovee s$ and $c$ a bound for $(r \ovee s) \ovee t$, so that
\begin{align}
b \goesto \rhd_1 & = r \label{eq:axiom1} \\
b \goesto \rhd_2 & = s \label{eq:axiom2} \\
\doo{x}{b}{\return{\nabla(x)}} & = r \ovee s \label{eq:axiom3} \\
c \goesto \rhd_1 & = r \ovee s \label{eq:axiom4} \\
c \goesto \rhd_2 & = t \label{eq:axiom5} \\
\doo{x}{c}{\return{\nabla(x)}} & = (r \ovee s) \ovee t \label{eq:axiom6}
\end{align}

Define $d : (A + 1) + 1$ by
\[ d = \case c \of \inl{\inl{\_}} \mapsto \fail \mid \inl{\inr{x}} \mapsto \return{\inl{x}} \mid \inr{\_} \mapsto \return{\inr{*}} \]
We wish to form the term $\inlr{b}{d}$.  To do this, we must prove
$\dom{b} = \ker{d}$.  We do this by proving both are equal to $\dom{(r
  \ovee s)}$.

We have
\begin{align*}
\dom{(r \ovee s)} & = \dom{(\doo{x}{b}{\return{\nabla(x)}})} = \doo{x}{b}{\dom{(\return{\nabla(x)})}} = \doo{x}{b}{\top} = \dom{b}
\end{align*}
and
\begin{align*}
\dom{(r \ovee s)} & = \dom{(\doo{x}{c}{\rhd_1(x)})} = \doo{x}{c}{\dom{(\rhd_1(x))}} = \doo{x}{c}{\inlprop{x}} \\
\ker{d} & = \case c \of \inl{\inl{\_}} \mapsto \top \mid \inl{\inr{\_}} \mapsto \bot \mid \inr{\_} \mapsto \bot \\
& = \doo{x}{c}{\inlprop{y}} \\
\therefore \dom{b} & = \ker{d}
\end{align*}
So, let $e = \inlr{b}{d} : (A + A) + (A + 1)$.  We claim
\begin{align}
  \label{eq:transitivity}
c =  \case e \of & \inl{\inl{a}} \mapsto \return{\inl{a}} \mid \inl{\inr{a}} \mapsto \return{\inl{a}} \mid \\
& \inr{\inl{a}} \mapsto \return{\inr{a}} \mid \inr{\inr{\_}} \mapsto \fail \nonumber
\end{align}

We prove the claim using\RJMprime.  Writing $R$ for the right-hand side of (\ref{eq:transitivity}), we have
\begin{align*}
(RHD \goesto \rhd_1)
& = \doo{x}{\rhd_1(e)}{\return \nabla(x)} = \doo{x}{b}{\return \nabla(x)} = r \ovee s & \text{by (\ref{eq:axiom3})} \\
(c \goesto \rhd_1) & = r \ovee s & \text{by (\ref{eq:axiom4})} \\
(R \goesto \rhd_2)
& = (\doo{x}{\rhd_2(e)}{x}) = (\doo{x}{d}{x}) = (c \goesto \rhd_2)
\end{align*}
and so (\ref{eq:transitivity}) follows by\RJMprime.

Now that the claim (\ref{eq:transitivity}) is proved, we return to the main proof.  Define $e' : (A + A) + 1$ by
\begin{align*}
e' = \case e \of & \inl{\inl{\_}} \mapsto \fail \mid \inl{\inr{a}} \mapsto \return{\inl{a}} \mid \\
&  \inr{\inl{a}} \mapsto \return{\inr{a}} \mid \inr{\inr{\_}} \mapsto \fail
\end{align*}
We claim $e'$ is a bound for $s \ovee t$.  We have
\begin{align*}
(e' \goesto \rhd_1)
& = \case e \of \inl{\inl{\_}} \mapsto \fail \mid \inl{\inr{a}} \mapsto \return{a} \mid \inr{\_} \mapsto \fail \\
& = (\rhd_1(e) \goesto \rhd_2) = (b \goesto \rhd_2) = s & \text{by (\ref{eq:axiom2})} \\
(e' \goesto \rhd_2) & = \case e \of \inl{\_} \mapsto \fail \mid \inr{\inl{a}} \mapsto \return{a} \mid \inr{\inr{\_}} \mapsto \fail \\
& = (\rhd_2(e) \goesto \rhd_1) = (d \goesto \rhd_1) \\
& = \case c \of \inl{\inl{\_}} \mapsto \fail \mid \inl{\inr{x}} \mapsto \return{x} \mid \inr{\_} \mapsto \fail \\
& = (c \goesto \rhd_2) = t & \text{by (\ref{eq:axiom5})}
\end{align*}
and so
\begin{align}
s \ovee t = & \doo{x}{e'}{\return{\nabla(x)}} \label{eq:soveet} \\
= & \case e \of \inl{\inl{\_}} \mapsto \fail \mid \inl{\inr{a}} \mapsto \return{a} \mid \nonumber \\
& \inr{\inl{a}} \mapsto \return{a} \mid \inr{\inr{\_}} \mapsto \fail \nonumber \\
\end{align}
Now, define $e'' : (A + A) + 1$ by
\begin{align*}
  e'' = \case e \of & \inl{\inl{a}} \mapsto \return{\inl{a}} \mid \inl{\inr{a}} \mapsto \return{\inr{a}} \\
& \inr{\inl{a}} \mapsto \return{\inr{a}} \mid \inr{\inr{\_}} \mapsto \fail
\end{align*}
We will prove that $e''$ is a bound for $r \ovee (s \ovee t)$.  We have
\begin{align*}
(e'' \goesto \rhd_1)
& = \case e \of \begin{array}[t]{l}
\inl{\inl{a}} \mapsto \return{a} \\
\mid \inl{\inr{\_}} \mapsto \fail \\
\mid \inr{\inl{\_}} \mapsto \fail \\
\mid \inr{\inr{\_}} \mapsto \fail 
\end{array} \\
& = (\rhd_1(e) \goesto \rhd_1) = (b \goesto \rhd_1) = r & \text{by (\ref{eq:axiom1})} \\
(e'' \goesto \rhd_2) & = \case e \of \begin{array}[t]{l}
 \inl{\inl{\_}} \mapsto \fail \\
\mid \inl{\inr{a}} \mapsto \return{a} \\
\mid \inr{\inl{a}} \mapsto \return{a} \\
\mid \inr{\inr{\_}} \mapsto \fail 
\end{array} \\
& = s \ovee t & \text{by (\ref{eq:soveet})} \\
\doo{x}{e''}{\return{\nabla(x)}} & = \case e \of \begin{array}[t]{l}
 \inn_1(a) \mapsto \return{a} \\
\inn_2(a) \mapsto \return{a} \\
\inn_3(a) \mapsto \return{a} \\
\inn_4(\_) \mapsto \fail 
\end{array} \\
& = \mathsf{do}\ x \leftarrow \case e \of \begin{array}[t]{l} \inn_1(a) \mapsto \return{\inl{a}} \\
\inn_2(a) \mapsto \return{\inl{a}} \\
\inn_3(a) \mapsto \return{\inr{a}} \\
\inn_4(a) \mapsto \fail; \return{\nabla(x)} 
\end{array} \\
& = \doo{x}{c}{\return{\nabla(x)}} & \text{by (\ref{eq:transitivity})} \\
& = (r \ovee s) \ovee t
\end{align*}
Thus, $r \ovee (s \ovee t) = \doo{x}{e''}{\return{\nabla(x)}} = (r \ovee s) \ovee t$.
\end{proof}

\end{document}